\documentclass[12pt, reqno]{amsart}

\IfFileExists{mymtpro2.sty}{%
  \usepackage[subscriptcorrection]{mymtpro2}
}{}

\usepackage{a4, upgreek, amssymb, amsmath, mathrsfs}
\usepackage{appendix}
\usepackage{amssymb,amsmath, amsfonts, euscript, mathrsfs, mathabx}
\usepackage{scalerel,stackengine}

\newtheorem{theorem}{Theorem}[section]
\newtheorem{lemma}{Lemma}[section]
\newtheorem{corollary}{Corollary}[section]
\newtheorem{prop}{Proposition}[section]
\newtheorem{remark}[theorem]{Remark}

\newtheorem{definition}[theorem]{Definition}

\newcommand{\labelnummer}{\mbox{\normalfont (\roman{numcount})}}%

\makeatletter

  {\let\curlabelspeicher\@currentlabel%
    \begin{list}{\labelnummer}%
      {\usecounter{numcount}\leftmargin0pt%
        \topsep0.5ex\partopsep2ex\parsep0pt\itemsep0ex\@plus1\p@%
        \labelwidth2.5em\itemindent3.5em\labelsep1em%
      }%
    \let\saveitem\item%
    \def\item{\saveitem%
      \def\@currentlabel{{\upshape\curlabelspeicher}$\,$\labelnummer}}%
    \let\savelabel\label%
    \def\label##1{\savelabel{##1}%
      \@bsphack%
        \ifmmode\else%
          \protected@write\@auxout{}%
          {\string\newlabel{##1item}{{\labelnummer}{\thepage}}}%
        \fi%
      \@esphack%
    }%
  }{\end{list}}%

\renewcommand{\appendix}{\def\thesection{\textsc{Appendix}}}

 \let\leq\le
 \let\geq\ge

 \let\Im\undefined

\DeclareMathOperator{\Im}{Im}
\newcommand{\ud}{\mathrm{d}}

\DeclareMathOperator{\tr}{tr\kern1pt}

\makeatletter

\newif\ifper\pertrue
\def\per{.}

\def\bti{\@ifnextchar[\bbti\bbbti}
\def\bbti[#1]#2{#2, #1.}
\def\bbbti#1{#1.}

\def\z{\@ifnextchar[\zz\zzz}
\def\zz[#1]#2#3#4#5{\perfalse\emph{#2} \textbf{#3}, #4 (#5) [#1]}
\def\zzz#1#2#3#4{\emph{#1} \textbf{#2}, #3 (#4)\ifper\per\fi\pertrue}

\def\pub{\@ifstar\pubstar\pubnostar}
\def\pubnostar{\@ifnextchar[\@@pubnostar\@pubnostar}
\def\@@pubnostar[#1]#2#3#4{#2, #3, #4, #1\ifper\per\fi\pertrue}
\def\@pubnostar#1#2#3{#1, #2, #3\ifper\per\fi\pertrue}
\def\pubstar[#1]#2#3#4{\perfalse #2, #3, #4 [#1]\pertrue}

\makeatother

\newcommand{\bel}{\begin{equation} \label}
\newcommand{\ee}{\end{equation}}

\def\beq{\begin{equation}}
\def\eeq{\end{equation}}
\newcommand{\bea}{\begin{eqnarray}}
\newcommand{\eea}{\end{eqnarray}}
\newcommand{\beas}{\begin{eqnarray*}}
\newcommand{\eeas}{\end{eqnarray*}}

{

\newcommand{\B}{\mathbb{B}}

\newcommand{\R}{\mathbb{R}}

\newcommand{\Z}{\mathbb{Z}}

\newcommand{\C}{\mathbb{C}}
\newcommand{\E}{\mathbb{E}}

\newcommand{\Schr}{Schr\"odinger }

\begin{document}

\title[Modified Thouless formula for the Bethe lattice]{Ergodic Schr\"odinger operators on the Bethe lattice and a modified Thouless formula}

\author[P.\ D.\ Hislop]{Peter D.\ Hislop}
\address{Department of Mathematics,
    University of Kentucky,
    Lexington, Kentucky  40506-0027, USA}
\email{peter.hislop@uky.edu}

\author[C.\ A.\ Marx]{Christoph A.\ Marx}
\address{Department of Mathematics,
Oberlin College,
Oberlin, Ohio 44074, USA}
\email{cmarx@oberlin.de}



\begin{abstract}
The main result of this paper is a modified Thouless formula relating the density of states for ergodic  Schr\"odinger operators on the Bethe lattice to the Lyapunov exponent. The modified Thouless formula consists of a Thouless-like term, involving the density of states, and a remainder term. The remainder term vanishes when the connectivity 
$\kappa$ equals one, yielding the usual Thouless formula for ergodic  Schr\"odinger operators on $\Z$. We prove the remainder term is nontrivial for $\kappa \geq 2$. We also discuss the automorphism group of the Bethe lattice and its relation to ergodic Schr\"odinger operators. In particular, we clarify the use of the multiparameter noncommutative ergodic theorem in evaluating the limit of Green's functions along certain paths. 
\end{abstract}

\maketitle \thispagestyle{empty}

\tableofcontents

\vspace{.2in}



\section{Introduction}\label{sec:introduction1}
\setcounter{equation}{0}
 
We study ergodic \Schr operators on the Bethe lattice $\mathbb{B} = (\mathcal{V}, \mathcal{E})$ with vertices $\mathcal{V}$, edges $\mathcal{E}$, and connectivity $\kappa \geq 2$. The Bethe lattice (also known as the infinite Cayley tree) is a regular tree in which each each vertex is connected to $\kappa + 1$ neighboring vertices, where $\kappa \geq 2$ is called the {\em{connectivity}}. The constant degree $\kappa + 1$ of each vertex is also known as the {\em{coordination number}} of the Bethe lattice in recognition of the radial structure (coordination spheres or levels) that arises when designating one vertex as the root of the tree; see Figure \ref{fig_bethe-lattice_basic}. For the sake of consistency, we mention that we used the coordination number in our previous papers \cite{hislop_marx_1,hislop_marx_3} but decided to use the connectivity here because $\kappa$ is more directly related to the spectrum of the free Laplacian on $\mathbb{B}$, given by the closed interval $[-2 \sqrt{\kappa}, 2 \sqrt{\kappa}]$. Since the main result of this paper generalizes results for Schr\"odinger operators on $\mathbb{Z}$ to Schr\"odinger operators on $\mathbb{B}$, we note that $\mathbb{Z}$ can be viewed as a trivial case of the Bethe lattice if one admits $\kappa=1$. 
\begin{figure}
	\includegraphics[width = \textwidth]{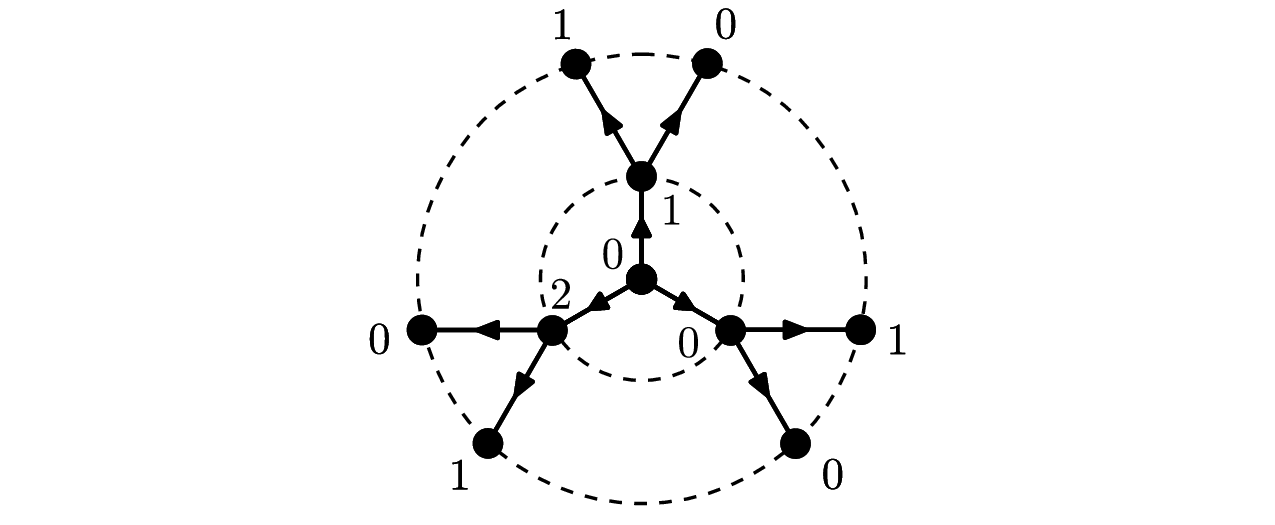} 
	\caption{Basic geometry of the Bethe Lattice: the figure shows a finite section of the Bethe lattice for connectivity $\kappa = 2$. Distinguishing one vertex as the root (shown in the center) results in a radial structure (coordination spheres or levels) about the root, where each coordination sphere consists of vertices at a fixed distance (=number of edges) from the root. The arrows display the natural partial order emanating from the root (see Section \ref{subsec:bethe1} for a precise definition); in particular, while the root is connected to $\kappa + 1$ forward neighbors, each other vertex has precisely $\kappa$ vertices in the forward direction. The labeling of the vertices is explained in Section \ref{subsec:bethe1}; see also Figure \ref{fig_bethe-lattice_vertex-labeling}.}
	\label{fig_bethe-lattice_basic}
\end{figure}

The Bethe lattice $\B$ is a regular tree graph so that each pair of distinct vertices is connected by a unique path; or equivalently, so that there are no closed loops. The tree structure leads to a recursive nature which is a feature shared with Schr\"odinger operators on $\mathbb{Z}$. In stark contrast to $\mathbb{Z}$, the Bethe lattice, however, exhibits hyperbolic geometry: finite graph approximations, obtained by restricting to a finite number of levels, are distinguished by the property that the number of vertices comprising the surface area is of the same order as the number of vertices in the truncated graph. This number grows exponentially like $\kappa^L$, where $L$ is the index of the coordination sphere. Hence, infinite volume limits are different from those for the cubic lattice $\Z^d$ where the ratio of surface area to volume vanishes as $L \to \infty$. Because of these properties, the Bethe lattice has served as a testing ground for random \Schr operators. One thinks of the Bethe lattice as ``quasi one-dimensional'' but, as a result of the expansive geometry, random Schr\"odinger operators on $\mathbb{B}$ exhibit spectral characteristics very different from their counterparts on the lattice $\Z$. 

This difference in the spectral properties already arises for the discrete Laplacian $\Delta_\mathbb{B}$ (i.e., for zero potential). There, the hyperbolic geometry leads to an unexpected spectral difference 
between the spectrum of infinite volume operator $\Delta_\mathbb{B}$ (defined in (\ref{eq_laplacian})) and its finite volume restrictions $\Delta_{\Lambda_L}$ to coordination spheres $\Lambda_L$ at distance $L$ from the root (see (\ref{eq_coordination-spheres_defn}) and (\ref{eq_defn_restriction}) for precise definitions). This difference manifests itself in the following strict inequality between the operator norms:
\begin{equation}
\Vert \Delta_{\Lambda_L} \Vert = \kappa + 1 > 2 \sqrt{\kappa} = \Vert \Delta_\mathbb{B} \Vert ~\mbox{.}
\end{equation}
We note that the inequality becomes an equality for the discrete Laplacian on $\mathbb{Z}$ (or $\kappa = 1$). For some references on earlier work on the Bethe lattice, we refer to the introductions of \cite{AcostaKlein, aw_weak-disorder} and to \cite{Abou-ChacraThoulessAnderson_JPhysC_1973, KS_83, Thouless_PRL_1977}.

For the case of random Schr\"odinger operators on $\mathbb{B}$ with identically distributed (iid) random potentials, absolutely continuous spectrum is known to persist for non-zero potentials (disorder) \cite{klein_ann1,klein_bethe1} (see also \cite{fhs_06, asw}). This is in stark contrast to the spectral behavior for random Schr\"odinger operators on $\mathbb{Z}$. Any non-zero disorder destroys the absolutely continuous spectrum and leads to spectral and dynamical localization throughout the spectrum  \cite[Chapter 10]{aw_book}. Concerning the weak disorder behavior of random Schr\"odinger operators on $\mathbb{B}$ with iid potentials, we mention the results by Acosta and Klein \cite{AcostaKlein} on the analyticity of the density of states for probability measures close to Cauchy, and results on the nature of the spectrum by Klein \cite{klein_bethe1}, with generalizations in \cite{KleinSadel}, which established that for weak disorder, absolutely continuous spectrum persists in the {\em{interior}} of the spectrum of the discrete Laplacian $E \in [-2 \sqrt{\kappa}, 2 \sqrt{\kappa}]$. In \cite{aw_JEMS_2013}, Aizenman and Warzel developed techniques ("resonant delocalization criterion'') which, contrary to earlier expectations, provided strong evidence for purely absolutely continuous spectrum across the entire spectrum for sufficiently weak disorder and bounded potentials. For unbounded potentials, their criterion for resonant delocalization implies the persistence of some absolutely continuous spectrum for $2 \sqrt{\kappa} < \vert E \vert < \kappa + 1$ in the weak disorder regime. 

A key quantity in the work of Aizenman and Warzel 
\cite[Section 4]{aw_JEMS_2013} is an extension of the concept of the Lyapunov exponent to random Schr\"odinger operators on the Bethe lattice, which they define as the negative expectation of the logarithm of a {\em{Weyl $m$-function}}:
\beq \label{eq_LE_defn_Weyl}
\mathcal{L}(z) = - \E \{ \log \vert G_{0;\omega} (z; y,y) \vert \} ~\mbox{.}
\eeq
Here, $G_{0;\omega} (z; y,y)$ represents one of the $\kappa+1$ equivalent Weyl $m$-functions for Schr\"odinger operators on $\mathbb{B}$: $G_{0;\omega} (z; y,y)$ is the diagonal element of the resolvent of the Schr\"odinger operator decoupled at its root $(0)$, 
where $y$ is any nearest neighbor of $(0)$.


\subsection{Main results}\label{subsec:main1}

The main result of this paper is the proof of a modified Thouless formula relating the density of states of a random \Schr operator on $\mathbb{B}$ to the Lyapunov exponent defined in \eqref{eq_LE_defn_Weyl}. 
This result is based on several new results on the structure of the Bethe lattice and ergodic \Schr operators on the Bethe lattice. 
We clarify the automorphisms of the Bethe lattice, their use in designating the unique path from the origin to any vertex of the Bethe lattice, 
and their role in building 
ergodic and random \Schr operators on $\mathbb{B}$.
We improve and extend a result of Acosta and Klein \cite{AcostaKlein} in 
Theorem \ref{thm_analytic-miracles} on the approximation of diagonal elements of the Green function by the Green function of the \Schr operator restricted to a finite subset. This uses the random walk expansion of the Green function described in Theorem \ref{prop_RW-exp}. We also prove results on the approximation for functions of the corresponding infinite-volume and finite-volume \Schr operators. 

Our next main result is that  $\mathcal{L}(z)$, defined in (\ref{eq_LE_defn_Weyl}), may be computed 
from the limit of the type
\beq
\dfrac{1}{L} \log \left\vert G_\omega(z; \gamma(0), \gamma(L-1) ) \right\vert \to - \mathcal{L}(z) ~\mbox{, as $L \to \infty$ .} 
\eeq
This requires a detailed analysis of the relation between the paths and the ergodic theorem. Due to the fact that the automorphisms are 
non-commuting, the existence of the limit does not follow from standard results. 

Finally, our main result combines this representation with the approximation of the Green function $ G_\omega(z; \gamma(0), \gamma(L-1) )$ by the Green function of finite-volume \Schr operators, we arrive at the modified Thouless formula
\beq\label{eq:lyap_thouless0}
\mathcal{L}(z) = \int_\Sigma \log | z - E| ~dn_v(E) + R(z) ,
\eeq
where $dn_v$ is the density of states measure and $R(z)$ is the nontrivial remainder for $\kappa \geq 2$. 


Formula \eqref{eq:lyap_thouless0} is a modification of the standard  Thouless formula familiar from ergodic Schr\"odinger operators on $\mathbb{Z}$, see \cite[chapter 12]{aw_book}, expressed by the presence of the remainder term $R(z)$. The remainder term, depending on $\kappa$, is a result of the averaged effects of the coupling of paths to their surroundings, quantified by the upper bound $\vert R(z) \vert \lesssim (\kappa-1)$. Morally, for $\mathbb{Z}$ (or, equivalently, $\kappa = 1$), finite sections of paths are coupled to their surroundings precisely through the two vertices which form the boundary of the path section. In particular, because the boundary consists of only two vertices, taking an average over the length of the path, these contributions decay in the limit, resulting in a zero remainder term. In contrast, for the Bethe lattice ($\kappa \geq 2$), paths are coupled to their surroundings not only through the boundary vertices, but also at each interior vertex. Each interior vertex along the path is connected to $(\kappa-1)$ rooted trees and the number of these contributions grows with the length of the path. Thus, upon averaging, while this contribution is bounded, this contribution does not decay in the limit as the path length goes to infinity. Indeed, in Section \ref{sec:remainder1}, we explicitly compute the remainder term for the free Laplacian and show that its contributions are, in general, non-trivial, and thus essential.

\subsection{Detailed outline of the paper} \label{subsec:outline}

We structure the paper as follows. Section \ref{sec_set-up_det-results} serves as the backbone for the rest of the paper and has two main results. After describing some basic set-up in Section \ref{subsec:bethe1}, Section \ref{subsec:auto1} introduces a family of graph isomorphisms $\{\tau_x ~,~ x \in \mathcal{V}\}$ on the Bethe lattice which, geometrically, amount to shifts of the root to a given vertex $x$. Underlying the definition of these generalized shifts is Proposition \ref{prop_representation_vertices}, in which provide an explicit representation of the vertices as products of two elementary transformations acting on the root of the tree: a translation between coordination spheres (see (\ref{eq_transl1})) and a rotation within each coordination sphere (see (\ref{eq:transl2})). We mention that these two elementary transformations are also featured in an Appendix in \cite{AcostaKlein} however the description of the vertices in terms of products of these transformations acting on the root is incorrect (see Remark \ref{remark_acosta-klein} in Section \ref{subsec:auto1}). 
In summary, Section \ref{subsec:auto1} shows that the family of generalized shifts $\{\tau_x ~,~ x \in \mathcal{V}\}$ is finitely generated from $(\kappa+1)$ elementary shifts which are, however, non-commuting. This is different from the shifts on $\mathbb{Z}$ (or $\mathbb{Z}^d$) which leads to complications when considering limits of averages along infinite paths by trying to appeal to standard versions of the multi-parameter, non-commuting ergodic theorem such as proven by Zygmund \cite{zygmund_51}.  

Our second main result of Section \ref{sec_set-up_det-results} is given in Theorem \ref{thm_analytic-miracles} (Section \ref{subsec:rso_expansions1}): when considering the limiting behavior of averages of spectral information along infinite paths, Theorem \ref{thm_analytic-miracles} (and its extension in Corollary \ref{coro_DOSm-wek-conv}) provides a tool that allows to replace the infinite volume Schr\"odinger operator by its finite volume restrictions. In the context of ergodic Schr\"odinger operators, Theorem \ref{thm_analytic-miracles} will allow us to recover the density of states measure and the Lyapunov exponent from the respective quantities for appropriate finite volume restrictions and therefore forms key ingredients in our proof of our modified Thouless formula in Section \ref{sec:thouless1}. Theorem \ref{thm_analytic-miracles} expands on ideas by Acosta and Klein in \cite[Proposition 2.1]{AcostaKlein}, unlike these earlier results, our extension requires a suitable enlargement of the volume of the truncated lattice on which the restricted operator acts to compensate for ``surface-to-volume effects'' of the Bethe lattice. We mention that the results of Section \ref{subsec:rso_expansions1} are deterministic and are based on walk expansions of the Green function for Schr\"odinger operators on trees.

In Section \ref{sec_dos_LE1} we introduce ergodic Schr\"odinger operators on Bethe lattice and relate the ergodic structure to the Lyapunov exponent. Specifically, Section \ref{sec_erogic-structure} uses the generalized shifts introduced in Section \ref{subsec:auto1} to define ergodic Sch\"odinger operators and to describe the associated covariance relations; see (\ref{eq_unitaries})--(\ref{eq_covariance}). In Section \ref{subsec:lyapunov1}, we use the ergodic structure to encode information about infinite paths, which allows us to realize the Lyapunov exponent, defined in (\ref{eq_LE_defn_Weyl}) in terms of Weyl $m$-functions, as an exponential decay rate of the off-diagonal elements of the Green function along suitably chosen paths. On a technical level, the non-commuting nature of the underlying ergodic family, implies that we have to suitably choose paths so to enable us to use the standard multi-parameter versions of the ergodic theorem. While this limitation was necessary to guarantee existence of limits, our final results (in particular, the modified Thouless formula) do {\em{not}} depend on the paths that are chosen.

Finally, in Section \ref{sec:thouless1}, we prove (the modified version of) the Thouless formula, stated precisely in Theorem \ref{thm:thouless}. In particular, we show that the modification of the ``usual'' Thouless formula familiar from ergodic Schr\"odinger operators on $\mathbb{Z}$, which is expressed by the presence of the remainder term $R(z)$, is a result of the averaged effects of the coupling of paths to their surrounding, quantified by the upper bound $\vert R(z) \vert \lesssim (\kappa-1)$. Morally, for $\mathbb{Z}$ (or, equivalently, $\kappa = 1$), finite sections of paths are coupled to their surroundings only through the precisely two vertices which form the boundary of the path section; in particular, averaging over the length of the path, these contributions decay in the limit, resulting in a zero remainder term. In contrast, for the Bethe lattice ($\kappa \geq 2$), paths are coupled to their surrounding not merely through the boundary but also through the interior, the latter of which grows with the length of the path; thus, upon averaging, while bounded this contribution does not decay in the limit. Indeed, in Section \ref{sec:remainder1}, we explicitly compute the remainder term for the free Laplacian and show that its contributions are in general non-trivial and thus essential.


\subsection{Recent related results on the Bethe lattice}

There have been some recent advances on the structure of the phase diagram for random \Schr operators on $\mathbb{B}$.  
The structure of the phase diagram for random \Schr operators on the Bethe lattice, as described by Aizenman and Warzel \cite{aw_book,warzel_ICMP_2013} were recently confirmed by Drogin and Smart \cite{ds1} under certain conditions on the regularity of the distribution of the iid random potentials. Aggarwal and Lopatto \cite{al1} studied random \Schr operators with unbounded random potentials on the Bethe lattice with large connectivity. They proved that there are a finite number of mobility edges separating intervals of dense pure point spectrum from intervals of absolutely continuous spectrum. 

In a recent paper, Banks, Breuer, Garza-Vargas, Seelig, and Simon
\cite{BBGVSS} studied periodic Jacobi matrices on trees. As an application of their methods, they mention in \cite[Section 5]{BBGVSS} that the techniques apply to the Anderson model on regular trees They proved a Thouless-like formula \cite[ Equation \ (21)]{BBGVSS} involving the Thouless integral $\int \log |z-E| ~dn(E)$ and the Green function, but they do not relate this to the Lyapunov exponent. 





\subsection{Acknowledgement} PDH is partially supported by the Simons Foundation Collaboration Grant for Mathematicians No.\ 843327. PDH thanks the Mathematics Department of Oberlin College, and CAM thanks the Mathematics Department of the University of Kentucky, for a warm welcome during several collaborative research visits.  We thank Sam Thiel (Oberlin College) for the creating figures and Kyle Hammer (University of Kentucky) for numerical computations of the remainder term of the free Laplacian. 


\section{Schr\"odinger operators on the Bethe Lattice} \label{sec_set-up_det-results}
\setcounter{equation}{0}

In this section, we first define the Bethe lattice and describe an efficient labeling of the vertices once a root has been chosen. The homogeneity of the lattice means that any vertex can be designated as a root. We next describe the automorphism group  of $\mathbb{B}$. It is that is generated by two elements: a generalized rotation and translation. The are used to construct specific automorphisms mapping the root to any vertex.  We next review the random walk and self-avoiding walk expansions of the resolvent, and apply these in order to prove convergence of the finite-volume approximations of the Green function. Finally, these automorphisms are shown to induce unitary transformations on $\ell^2(\mathbb{B})$ and implement the covariance property of the ergodic \Schr operators. 

\subsection{Definition and Geometry of the Bethe Lattice}\label{subsec:bethe1}

Given a set $\mathcal{A}$ of vertices in $\mathbb{B}$, we let $\mathbb{B} \setminus \mathcal{A}$ denote the disconnected subgraph obtained by deleting the vertices of $\mathcal{A}$ from $\mathbb{B}$. The following two special cases will be particularly relevant to us: 
\begin{enumerate}
\item Deleting a single vertex $x \in \mathcal{V}$ from $\mathbb{B}$ results in a disjoint union of $\kappa + 1$ identical copies of rooted trees $\mathbb{B}_{y}^{(+;x)}$ whose natural root is one of the $\kappa + 1$ vertices $y$  directly connected to $x$ (denoted by $y \sim x$), 
\begin{align} \label{eq_delete-vertex}
\mathbb{B}\setminus \{x\} =: \bigsqcup_{y \sim x} \mathbb{B}_{y}^{(+;x)} ~\mbox{;}
\end{align}
we will refer to $\mathbb{B}\setminus \{x\}$ as the Bethe lattice decoupled at the vertex $\{x\}$. We recall that a rooted tree results from designating a particular vertex as the root, in which case one obtains a partial order on the tree as follows: if $y$ denotes the root, then $z,w \in \mathcal{V}$ satisfy $z < w$ if $z$ lies along the (unique) path from the root $y$ to $w$. In this sense, the $\kappa+1$ trees $\mathbb{B}_{y}^{(+;x)}$ in (\ref{eq_delete-vertex}) are naturally rooted at the respective vertex $y$ and characterized by the property that all vertices $z$ are connected to precisely $\kappa$ neighboring vertices $w$ with $z < w$ (``{\em{forward neighbors}}''). 

\item Suppose that $\gamma$ is a (finite or infinite) {\em{path}} of length $\vert \gamma \vert \in \mathbb{N} \cup \{\infty\}$. Then, for $0 \leq k \leq \vert \gamma \vert$, $\mathbb{B}\setminus \gamma([0,k])$ denotes the disconnected subgraph obtained by removing the vertices of the image $\gamma([0,k])$ from $\mathbb{B}$. For later purposes (see the RW- and SAW-expansions in Theorem \ref{prop_RW-exp} and Theorem \ref{thm_SAW-expansion}, respectively) we recall that a path of length $N =: \vert \gamma \vert \in \mathbb{N} \cup \{\infty\}$ is an {\em{injective}} map of the form $\gamma: [0, N] \cap \mathbb{N} \to \mathcal{V}$ so that for each $0 \leq k \leq N-1$, the vertices $\gamma(k)$ and $\gamma(k+1)$ are connected by an edge in $\mathcal{E}$. Paths are to be distinguished from {\em{walks}} for which injectivity of the map $\gamma$ is {\em{not}} required, so that a walk $\gamma$ may visit a vertex more than once. In view of Theorem \ref{thm_SAW-expansion}, we note that paths are also referred to as self-avoiding walks.
\end{enumerate}

Finally, we will write $\ud_\mathbb{B}$ for the natural metric on $\mathbb{B}$ which, since $\mathbb{B}$ is a tree, can be defined as follows: given two vertices $x,y \in \mathcal{V}$, $\ud_\mathbb{B}(x,y)$ is the number of edges in the {\em{unique}} path which connects $x$ to $y$. In particular, we have $x \sim y$ if and only if $\ud_\mathbb{B}(x,y) = 1$.
\begin{figure}
	\includegraphics[width = \textwidth]{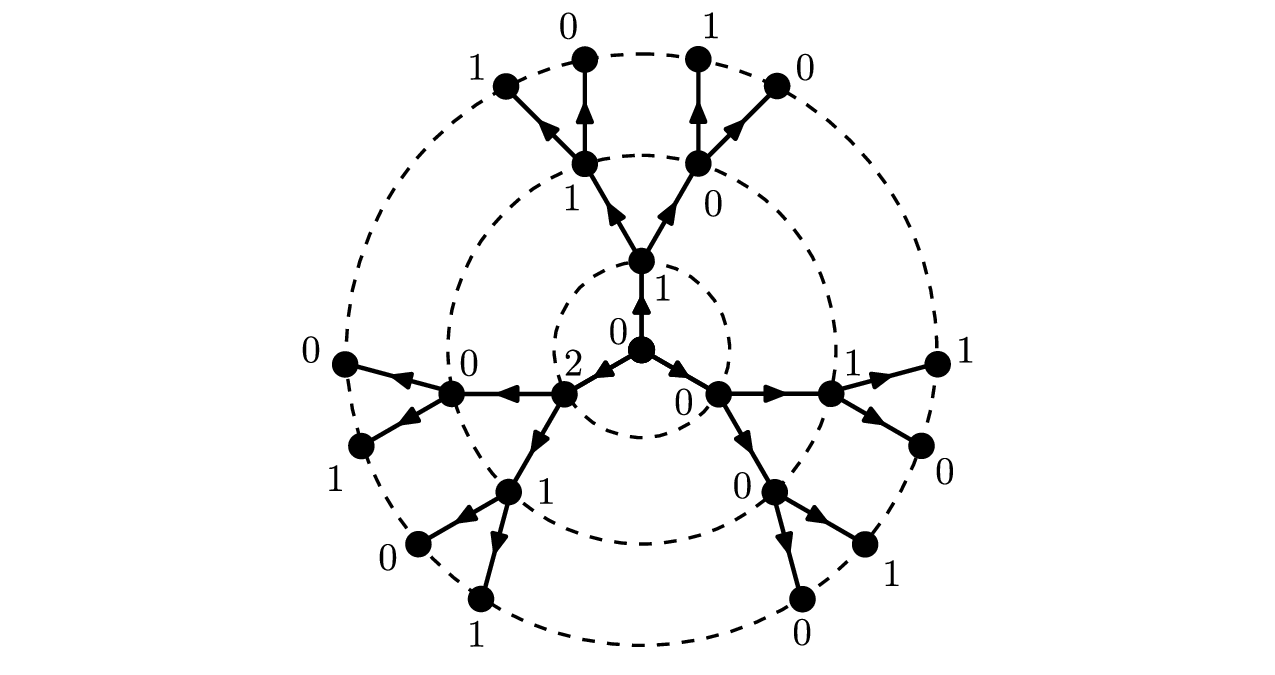} 
	\caption{Vertex labeling of the Bethe lattice for $\kappa = 2$: the information about each vertex at level $\ell \geq 1$ is uniquely encoded by the $\ell+1$-tuple $(0,a_1, \dots, a_\ell)$ where $a_1 \in \{0,1,2\}$ (corresponding to the $\kappa + 1 = 3$ forward neighbors of the root) and $a_j \in \{0,1\}$, for $2 \leq j \leq \ell$ (corresponding to the $\kappa = 2$ forward neighbors of vertices except the root). To keep the figure concise, we abbreviate this labeling by only showing the last entry $a_\ell$ of the tuple next to the respective vertex. For instance, the three vertices in the first coordination sphere $(\ell = 1)$ are labeled in the figure as $0,1,2$ to abbreviate the ordered pairs $(0,0)$, $(0,1)$, and $(0,2)$.}
	\label{fig_bethe-lattice_vertex-labeling}
\end{figure}
We introduce a labeling of the vertices of $\mathbb{B}$ as follows. We choose one vertex and label it as the origin $(0)$. We refer to the vertex $(0)$ as the root. We then obtain a rooted tree that emanates radially from the root; see Figure \ref{fig_bethe-lattice_basic}. Given the connectivity $\kappa$, $(0)$ is directly connected to $\kappa + 1$ nearest-neighbor vertices which form the first level (or coordination sphere) of $\mathbb{B}$. We label these $\kappa + 1$ vertices of the first level by an integer pair $(0,a_1)$ where $a_1 \in \{0, \dots, \kappa\}$. The second level of $\mathbb{B}$ consists of those vertices that are nearest neighbors to the vertices of the first level and have not previously been labeled. We label these vertices by a triple of integers $(0, a_1, a_2)$ with $0 \leq a_2 \leq \kappa-1$, thereby designating vertices which can be connected to $(0)$ by passing through $(0,a_1)$ for $a_1 \in \{0, \dots, \kappa\}$. Proceeding inductively, the level $\ell \geq 2$ of $\mathbb{B}$ consists of vertices labeled by an $(\ell+1)$-tuple $(0,a_1, a_2, \ldots, a_\ell)$ with $a_1 =0, \ldots, \kappa$ and $a_j = 0, \ldots, \kappa-1$, for $j=2, \ldots, \ell$. Observe that since $\mathbb{B}$ is a tree, this labeling of vertices implicitly specifies the unique path $\gamma_{x}:[0, \dots, \ell] \cap \mathbb{Z} \to \mathcal{V}$ which connects the origin $(0)$ to a given vertex $x=(0,a_1, \dots , a_\ell)$ by
\begin{equation} \label{eq_infinite-path-labeling}
\gamma_{x}(j):= (0,a_1, \dots , a_j) ~\mbox{, for $0 \leq j \leq \ell$ .}
\end{equation}
The labeling implies a radial structure of $\mathbb{B}$ for which the distance from the origin (in the metric $\ud_\mathbb{B}$) is given by the level number $\ell$; each level in this radial structure can be considered a {\em{coordination sphere}}. 
\begin{remark}
The labeling which we introduced above is similar to a labeling used by Acosta and Klein in the Appendix of \cite{AcostaKlein}. We note however the following two differences which will simplify the description of the two graph automorphisms $\tau_1, \tau_2$ defined below (see (\ref{eq_transl1}) for $\tau_1$ and (\ref{eq:transl2}) for $\tau_2$): 1. We decided to maintain the root as an explicit part of the labeling of vertices so that vertices of the $\ell$-th level are labeled by an $\ell + 1$ tuple $(0,a_1, a_2, \ldots, a_\ell)$ instead of an $\ell$ tuple as in \cite{AcostaKlein} (which omits the root $0$ in the first entry). This has the advantage that it allows to explicitly describe the transformation of the origin under the graph automorphisms $\tau_1$ and $\tau_2$ introduced below (which was omitted in \cite{AcostaKlein}). 2. We decided to let the entries assume values starting with $0$ instead of $1$. Indeed, for $\ell \in \mathbb{N}$, one may consider $a_1 \in \mathbb{Z}_{\kappa+1}$ (the class of integers modulo $\kappa+1$) and $a_j \in \mathbb{Z}_\kappa$, for $2 \leq j \leq \ell$, which simplifies the modular arithmetic in the definition of the graph automorphism $\tau_2$. 
\end{remark}

\subsection{Automorphisms of the Bethe Lattice}\label{subsec:auto1}

Following, Acosta and Klein \cite{AcostaKlein}, we use this labeling to introduce two transformations on $\mathbb{B}$. The first map $\tau_1: \mathbb{B} \to \mathbb{B}$ is a {\em{generalized level translation}}, defined by 
\begin{align} \label{eq_transl1}
 &\tau_1 (0)  =   (0,0) \nonumber \\
 & \tau_1 (0,\kappa) = (0) \nonumber \\
 & \tau_1 (0, \kappa, a_2,  \ldots, a_\ell) =   (0, (a_2+1)_{\rm{mod}(\kappa+1)}, a_3,  \ldots, a_\ell) ~\mbox{, if } \ell \geq 2  ~\mbox{, } \nonumber  \\
 & \tau_1 (0, a_1, \ldots, a_\ell)  =   (0, 0, a_1, \ldots, a_\ell) ~\mbox{, if } \ell \geq 1 ~\mbox{ and } 0  \leq  a_1 \leq \kappa - 1 ~\mbox{.}
\end{align}
Since graph automorphism preserve connected vertices, $\tau_1$ can be fully understood by its action on the root $(0)$ and the $\kappa+1$ vertices of the first level: (\ref{eq_transl1}) shifts the root and the first $\kappa$ vertices of level 1 (with $a_1 \in \{0, \dots, \kappa - 1\}$ {\em{up}} by one level; the remaining vertex in the first level with $a_1 = \kappa$ is shifted {\em{down}} to $(0)$. We illustrate the action of $\tau_1$ in Figure \ref{fig_bethe-lattice_tau1}. 
\begin{figure}
	\includegraphics[width = \textwidth]{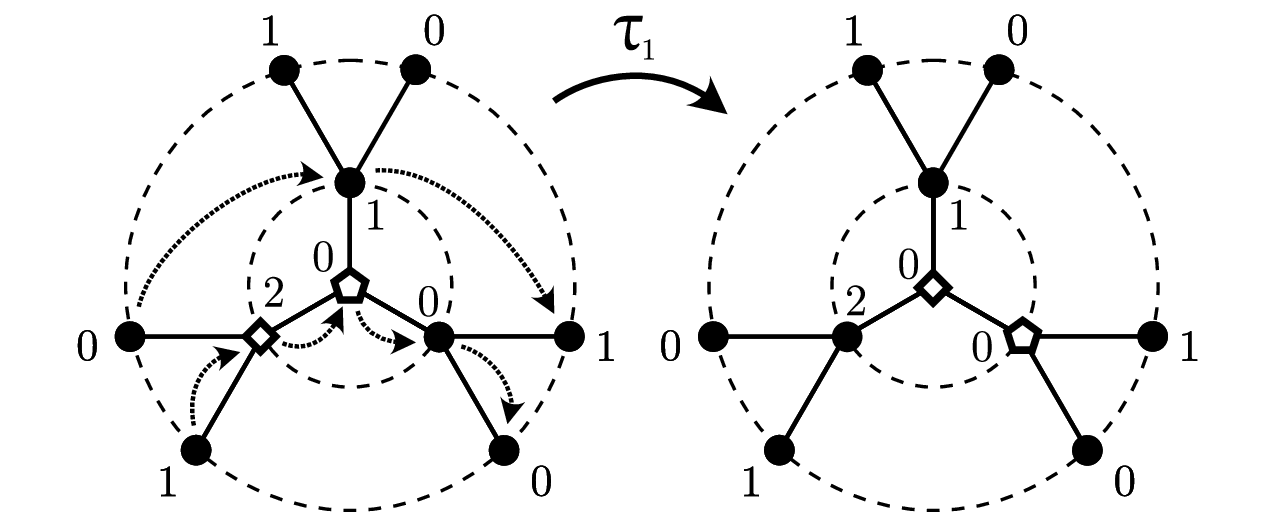} 
	\caption{Illustration of the map $\tau_1$, defined in (\ref{eq_transl1}): the graph automorphism $\tau_1$ acts as a level translation by shifting the root $(0)$ up by one one level to the vertex $(0,0)$; the remaining vertices respond accordingly (i.e., so that edges are preserved). The two panels of the figure represent a before (left panel) and after (right panel) picture of the action of $\tau_1$, where the transformation of the root $(0)$ (shown as a pentagon) and the vertex $(0,2)$ (shown as a diamond) are especially highlighted.}
	\label{fig_bethe-lattice_tau1}
\end{figure}
\begin{remark} \label{remark_tau1}
For later purposes we comment on one important detail in the definition of $\tau_1$ in (\ref{eq_transl1}), which distinguishes between vertices $x=(0, a_1, \dots , a_\ell)$ depending on whether or not $a_1$ equals $\kappa$. This distinction expresses the asymmetry between the root $(0)$, which has $\kappa+1$ forward neighbors, and the vertices at all other levels $\ell \geq 1$, which only have $\kappa$ forward neighbors. Indeed, denoting $x = (0, a_1, \dots a_\ell)$, if $a_1 \neq \kappa$ the definition in (\ref{eq_transl1}) implies that $\tau_1$ acts on $x$ as a right shift by including one additional zero in the sequence. If, however, $a_1 = \kappa$, then $\tau_1$ acts as a left shift (by deleting the entry $a_1$) and adding $1$ $\rm{mod}(\kappa +1)$ to the new first (non-trivial) entry $a_2$; see also the ``before-and-after'' picture shown in Figure \ref{fig_bethe-lattice_tau1} for the vertex $(0,2)$ displayed by a diamond symbol. The latter additional aspect of the definition ensures that $\tau_1$ is injective.
\end{remark}

The second map $\tau_2: \mathbb{B} \to \mathbb{B}$ is a {\em{generalized rotation}}, defined by
 \begin{align}\label{eq:transl2}
 & \tau_2 (0)  =   (0)  \nonumber \\
  & \tau_2 (0, a_1, \ldots, a_\ell)  =    (0, ( a_1 + 1)_{{\rm mod} (\kappa+1)}, (a_2 + 1)_{{\rm mod} (\kappa)} ,  \ldots, (a_\ell + 1)_{{\rm mod} (\kappa)}) ~\mbox{.}
 \end{align}
The action of $\tau_2$ is characterized by following two aspects: 1. $\tau_2$ preserves vertices in each level (coordination sphere); in particular it fixes the root $(0)$. 2. The action of $\tau_2$ on a given vertex $x=(0, a_1, \dots a_\ell)$ at the level $\ell \geq 1$ can be understood as $\ell$ consecutive rotations (cyclic permutations) where each vertex along the unique path (\ref{eq_infinite-path-labeling}) connecting $(0)$ to $x$ is rotated about the previous vertex on the same path, without ever breaking an edge. We illustrate the action of $\tau_2$ in Figure \ref{fig_bethe-lattice_tau2}.
\begin{figure}
	\includegraphics[width = \textwidth]{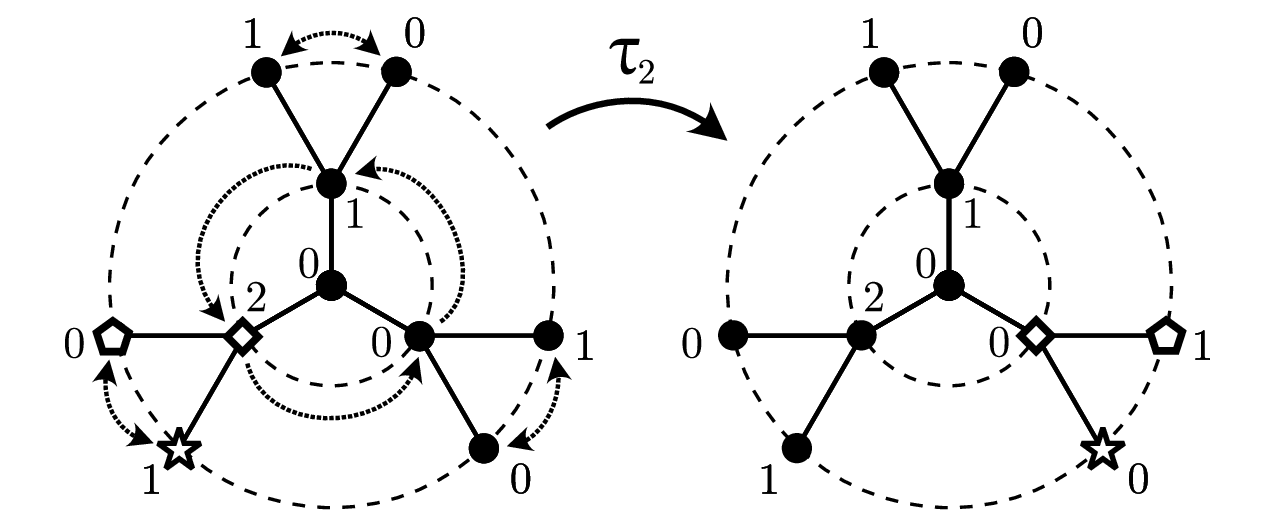} 
	\caption{Illustration of the map $\tau_2$, defined in (\ref{eq:transl2}): the action of the graph automorphism $\tau_2$ on a vertex $x$ at level $\ell \geq 1$ can be understood as the result of $\ell$ consecutive rotations (or cyclic permutations) where each vertex along the unique path connecting the root to $x$ is rotated; this rotation at each level is indicated in the left panel by dotted arrows. The side by side display of the left and right panel presents a before and after picture in which we illustrate the result of the action of $\tau_2$ for the three vertices $(0,2)$ (diamond), $(0,2,0)$ (pentagon), and $(0,2,1)$ (star).}
	\label{fig_bethe-lattice_tau2}
\end{figure}

The main characteristics of the maps $\{ \tau_1, \tau_2 \}$ are summarized as follows:
\begin{enumerate}
\item the maps $\tau_1$ and $\tau_2$ are graph automorphisms;
\item the family $\{ \tau_1, \tau_2\}$ acts transitively on $\mathbb{B}$; transitivity is implied by Proposition \ref{prop_representation_vertices} below;
\item the maps $\tau_1$ and $\tau_2$ do not commute.
\end{enumerate}
In the following proposition, we show the transitivity of the family $\{ \tau_1, \tau_2\}$ by providing an explicit representation of the vertices in terms of the action of $\{ \tau_1, \tau_2\}$ on the root; this will form a crucial ingredient for defining ergodic Schr\"odinger operators on the Bethe lattice in Section \ref{sec_dos_LE1}.
\begin{prop} \label{prop_representation_vertices}
For $\ell \in \mathbb{N}$, let $x \in \mathcal{V}$ be a vertex in the $\ell$-th level of $\mathbb{B}$. Then, $x$ can be generated by the family $\{ \tau_1, \tau_2 \}$ as the non-commutative product
\begin{equation} \label{eq_prop_representation_vertices-1}
x = (\tau_2^{d_1} \tau_1) (\tau_2^{d_{2}} \tau_1) \dots (\tau_2^{d_{\ell}} \tau_1)(0) ~\mbox{,}
\end{equation}
with exponents
\begin{equation} \label{eq_prop_representation_vertices-2}
d_1 \in \{0, \dots , \kappa\} ~\mbox{ and } d_j \in \{0, \dots , \kappa-1\} ~\mbox{, for } 2 \leq j \leq \ell ~\mbox{.}
\end{equation}
The exponents in the representation (\ref{eq_prop_representation_vertices-1})--(\ref{eq_prop_representation_vertices-2}) are {\em{unique}} and given explicitly by 
\begin{align} \label{eq_prop_representation_formulas}
d_1 = a_1 ~\mbox{; } & d_{2} = (a_2 - a_1)_{\rm{mod}(\kappa)}  ~\mbox{,} \nonumber \\
 & \vdots \nonumber \\
 & d_\ell =  (a_\ell - a_{\ell - 1})_{\rm{mod}(\kappa)} ~\mbox{.}
\end{align}
\end{prop}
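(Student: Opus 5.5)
The plan is to compute the action of a single factor $\tau_2^{d}\tau_1$ on a vertex explicitly, and then build up the product in \eqref{eq_prop_representation_vertices-1} from the innermost factor outward, by induction.

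First, iterating \eqref{eq:transl2} gives, for every $d \in \mathbb{N}$,
\begin{equation*}
\tau_2^{d}(0,b_1,\dots,b_m) = \bigl(0,(b_1+d)_{{\rm mod}(\kappa+1)},(b_2+d)_{{\rm mod}(\kappa)},\dots,(b_m+d)_{{\rm mod}(\kappa)}\bigr)
\end{equation*}
together with $\tau_2^d(0)=(0)$. Next, the last case of \eqref{eq_transl1} applies to any vertex $(0,b_1,\dots,b_m)$ with $m\ge1$ and $b_1 \le \kappa-1$, and to the root. Combining the two, for such vertices
\begin{equation*}
\tau_2^{d}\tau_1(0,b_1,\dots,b_m) = \bigl(0,(d)_{{\rm mod}(\kappa+1)},(b_1+d)_{{\rm mod}(\kappa)},\dots,(b_m+d)_{{\rm mod}(\kappa)}\bigr),
\end{equation*}
and $\tau_2^d\tau_1(0)=(0,(d)_{{\rm mod}(\kappa+1)})$. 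The point is that for these vertices one never meets the ``left-shift'' case $a_1=\kappa$ of \eqref{eq_transl1} (cf.\ Remark \ref{remark_tau1}).

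The induction runs over the number $k$ of factors already applied. For $1\le k\le \ell$, let $x_k := (\tau_2^{d_{\ell-k+1}}\tau_1)\cdots(\tau_2^{d_\ell}\tau_1)(0)$. I claim that
\begin{equation*}
x_k=(0,\,s_{\ell-k+1}^{\ell-k+1},\,s_{\ell-k+1}^{\ell-k+2},\dots,s_{\ell-k+1}^{\ell}),
\end{equation*}
where $s_i^j := (d_i+\dots+d_j)_{{\rm mod}(\kappa)}$ for $j>i$, and the first nontrivial entry is simply $d_{\ell-k+1}$. To pass from $k$ to $k+1$ one needs the first entry of $x_k$ to be at most $\kappa-1$. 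This holds as long as $\ell-k+1\ge2$, by the constraint \eqref{eq_prop_representation_vertices-2}, and it is exactly where the asymmetry $d_1\le\kappa$ versus $d_j\le\kappa-1$ enters. Only the final, outermost factor uses $d_1$, and since $d_1\le\kappa$ the ${\rm mod}(\kappa+1)$ reduction in the first slot is trivial. At $k=\ell$ this yields
\begin{equation*}
(\tau_2^{d_1}\tau_1)\cdots(\tau_2^{d_\ell}\tau_1)(0) = \bigl(0,d_1,(d_1+d_2)_{{\rm mod}(\kappa)},\dots,(d_1+\dots+d_\ell)_{{\rm mod}(\kappa)}\bigr).
\end{equation*}
This is a vertex at level $\ell$ with admissible labels.

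Finally, existence and uniqueness both follow from this formula. It defines a map $\Phi$ from the set of exponent tuples satisfying \eqref{eq_prop_representation_vertices-2} to the level-$\ell$ labels $(0,a_1,\dots,a_\ell)$. The map is given by $a_1=d_1$ and $a_j=(a_{j-1}+d_j)_{{\rm mod}(\kappa)}$ for $j\ge2$, a unipotent lower-triangular transformation over $\mathbb{Z}_\kappa$ in the last $\ell-1$ coordinates. It therefore has the explicit inverse \eqref{eq_prop_representation_formulas}, namely $d_1=a_1$ and $d_j=(a_j-a_{j-1})_{{\rm mod}(\kappa)}$, and the resulting $d_j$ lie in $\{0,\dots,\kappa-1\}$. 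Hence $\Phi$ is a bijection; both sets have cardinality $(\kappa+1)\kappa^{\ell-1}$, which serves as a cross-check. Note that $d_2=(a_2-a_1)_{{\rm mod}(\kappa)}$ is meaningful even when $a_1=\kappa$, since then it reduces to $a_2$.

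The main obstacle I expect is bookkeeping rather than ideas. One has to verify carefully that every intermediate vertex has first entry $\neq\kappa$, so that the correct branch of \eqref{eq_transl1} is used. One also has to check that the mixed moduli ($\kappa+1$ in the first slot, $\kappa$ elsewhere) combine consistently, in particular in the step where $d_1\in\{0,\dots,\kappa\}$ is added modulo $\kappa$ in the second slot.
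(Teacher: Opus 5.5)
Your proof is correct. It rests on the same one-step identity the paper uses: for the root or a vertex whose first entry is not $\kappa$, the map $\tau_2^{d}\tau_1$ prepends $d$ and adds $d$ modulo $\kappa$ to the old entries. What differs is the direction of the induction.

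The paper works from the outside in. It writes $x=(\tau_2^{d_1}\tau_1)(0,b_1,\dots,b_{\ell-1})$ with all $b_j\le\kappa-1$, solves uniquely for $d_1=a_1$ and $b_j=(a_{j+1}-a_1)_{\mathrm{mod}(\kappa)}$, and then iterates on the shorter vertex. You work from the inside out. You obtain the closed form $(0,d_1,(d_1+d_2)_{\mathrm{mod}(\kappa)},\dots,(d_1+\dots+d_\ell)_{\mathrm{mod}(\kappa)})$ for every admissible exponent tuple, and then invert this triangular map.

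Your route makes uniqueness cleaner, since injectivity of $\Phi$ on all admissible tuples settles it at once. The paper's step-wise uniqueness tacitly needs that every admissible inner product $(\tau_2^{d_2}\tau_1)\cdots(\tau_2^{d_\ell}\tau_1)(0)$ is a level-$(\ell-1)$ vertex with first entry $d_2\le\kappa-1$. That is exactly what your forward computation proves. Your closed form is also precisely the telescoping identity \eqref{eq_compositional-lemma-key} that the paper later invokes for Lemma~\ref{lemma_action-maps}.

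The paper's descent, in turn, reads off the exponents directly from $x$ without first deriving the general product formula.
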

\begin{remark} \label{remark_acosta-klein}
Proposition \ref{prop_representation_vertices} corrects a statement of Acosta and Klein in \cite{AcostaKlein} (top of page 303 of the Appendix) which claims that every vertex can be (uniquely) represented in the form $\tau_2^{\beta} \tau_1^{\alpha}(0)$ where $\alpha = \ell$ is the level number of $x$ and $\beta \in \mathbb{Z}$ satisfies $0 \leq \beta \leq (\kappa + 1) \kappa^{d_1 - 1}$. Indeed, using the definition of the maps in (\ref{eq_transl1})--(\ref{eq:transl2}) implies
\begin{equation}
\tau_2^{\beta} \tau_1^{\alpha}(0) = \tau_2^{\beta} (0, \underbrace{0, \dots , 0}_{\mbox{\tiny{$\alpha$ many entries}}}) = (0, (\beta)_{{\rm mod} (\kappa+1)}, \underbrace{(\beta)_{{\rm mod} (\kappa)}, \dots, (\beta)_{{\rm mod} (\kappa)}}_{\mbox{\tiny{$\alpha-1$ many entries}}} ) ~\mbox{,}
\end{equation}
which does not produce most vertices in $\mathbb{B}$. We however observe that 
one origin of the discrepancy between the representation in \cite{AcostaKlein} and (\ref{eq_prop_representation_vertices-1}) is the lack of commutativity of the maps $\tau_1$ and $\tau_2$; indeed, if $\tau_1$ and $\tau_2$ were commutative, reordering the terms in (\ref{eq_prop_representation_vertices-1}) would result in
\begin{equation}
\tau_2^{\sum_{j=1}^\ell d_j} \tau_1^\ell (0) = \tau_2^{(a_\ell - a_1)_{\rm{mod}(\kappa)} + a_1 } \tau_1^\ell (0) ~\mbox{,}
\end{equation}
which produces an expression of the form claimed in \cite{AcostaKlein}.
\end{remark}
\begin{proof}
For $\ell=1$, the claim follows by direct computation. Indeed, from the definitions of $\tau_1$ and $\tau_2$ in (\ref{eq_transl1})--(\ref{eq:transl2}) we have
\begin{equation}
\tau_2^{a_1} \tau_1(0) = \tau_2^{a_1} (0,0) = (0, (a_1)_{\rm{mod}(K)} )  = (0, a_1) = x ~\mbox{,}
\end{equation}
which is the desired unique representation.

To obtain the claim for general $\ell \geq 2$, we first establish that for all $\ell \geq 2$, it always possible to represent $x = (0 , a_1 , \dots , a_{\ell})$ in the form
\begin{equation} \label{eq_proof_prop_representation_vertices-generation_it_1}
(0, a_1, \dots a_{\ell}) = (\tau_2^{d_{1}} \tau_1) (0, b_1 , \dots b_{\ell-1}) ~\mbox{,}
\end{equation}
for a {\em{unique}} collection of integers satisfying
\begin{align} \label{eq_proof_prop_representation_vertices-generation_it_2}
d_{1} \in \{0, \dots, \kappa\} & ~\mbox{, and } b_j \in \{0, \dots, \kappa-1\} ~\mbox{, } 1 \leq j \leq \ell - 1 ~\mbox{. }
\end{align}
To do so, notice that the definitions of $\tau_1$ and $\tau_2$ in (\ref{eq_transl1})--(\ref{eq:transl2}) imply that (\ref{eq_proof_prop_representation_vertices-generation_it_1})--(\ref{eq_proof_prop_representation_vertices-generation_it_2}) hold true if and only if 
\begin{align} \label{eq_proof_prop_representation_vertices-ind}
a_1 & = d_{1} \nonumber \\
a_j & = ( b_{j-1} + d_{1} )_{\rm{mod}(\kappa)} ~\mbox{, for $2 \leq j \leq \ell$ .}
\end{align}
Distinguishing between the cases $a_1 \in \{0 , \dots, \kappa-1\}$ and $a_1 = \kappa$, (\ref{eq_proof_prop_representation_vertices-ind}) has a {\em{unique}} solution for $b_1, \dots, b_{\ell-1}$ given by
\begin{equation} \label{eq_proof_prop_representation_vertices-ind-sln}
b_j = ( a_{j+1} - a_1)_{\rm{mod}(\kappa)} ~\mbox{, for $1 \leq j \leq \ell-1$ .}
\end{equation}
The claim of Proposition \ref{prop_representation_vertices} for $\ell \geq 2$ thus follows by iteration of the representation in (\ref{eq_proof_prop_representation_vertices-generation_it_1})--(\ref{eq_proof_prop_representation_vertices-generation_it_2}) with the unique solution given by $d_1 = a_1$ and (\ref{eq_proof_prop_representation_vertices-ind-sln}).
\end{proof}

The representation of vertices as products of $\tau_1$ and $\tau_2$ given in Proposition \ref{prop_representation_vertices} allows to associate with each vertex $(0) < x \in \mathcal{V}$ a graph automorphism $\tau_x: \mathbb{B} \to \mathbb{B}$ defined by
\begin{equation} \label{eq_generalized-shift-vertex}
\tau_x:= (\tau_2^{d_1} \tau_1) (\tau_2^{d_2} \tau_1) \dots (\tau_2^{d_{\ell}} \tau_1) ~\mbox{,}
\end{equation} 
where the exponents $d_1, \dots, d_\ell$ are given in (\ref{eq_prop_representation_formulas}). For completeness, if $x = (0)$, we let $\tau_{(0)}$ denote the identity map on $\mathbb{B}$. 

Since $\tau_x(0) = x$, one can consider $\tau_x$ as a (generalized) {\em{shift to the vertex $x$}}, which moves the root $(0)$ to $x$ in a manner that preserves the overall graph structure. Indeed, for an arbitrary vertex $z$, $\tau_x(z)$ can be obtained geometrically as follows: let $\gamma_{z}$ denote the unique path connecting $(0)$ to $z$ given in (\ref{eq_infinite-path-labeling}), then $\tau_x(z)$ is the vertex obtained by appropriately attaching the path $\gamma_{z}$ to the vertex $x$ (which functions as the new root), and subsequently evaluating this path at its endpoint. We illustrate this useful geometric perspective of $\tau_x$ in Figure \ref{fig_bethe-lattice_tau_x}. 
\begin{figure}
	\includegraphics[width = \textwidth]{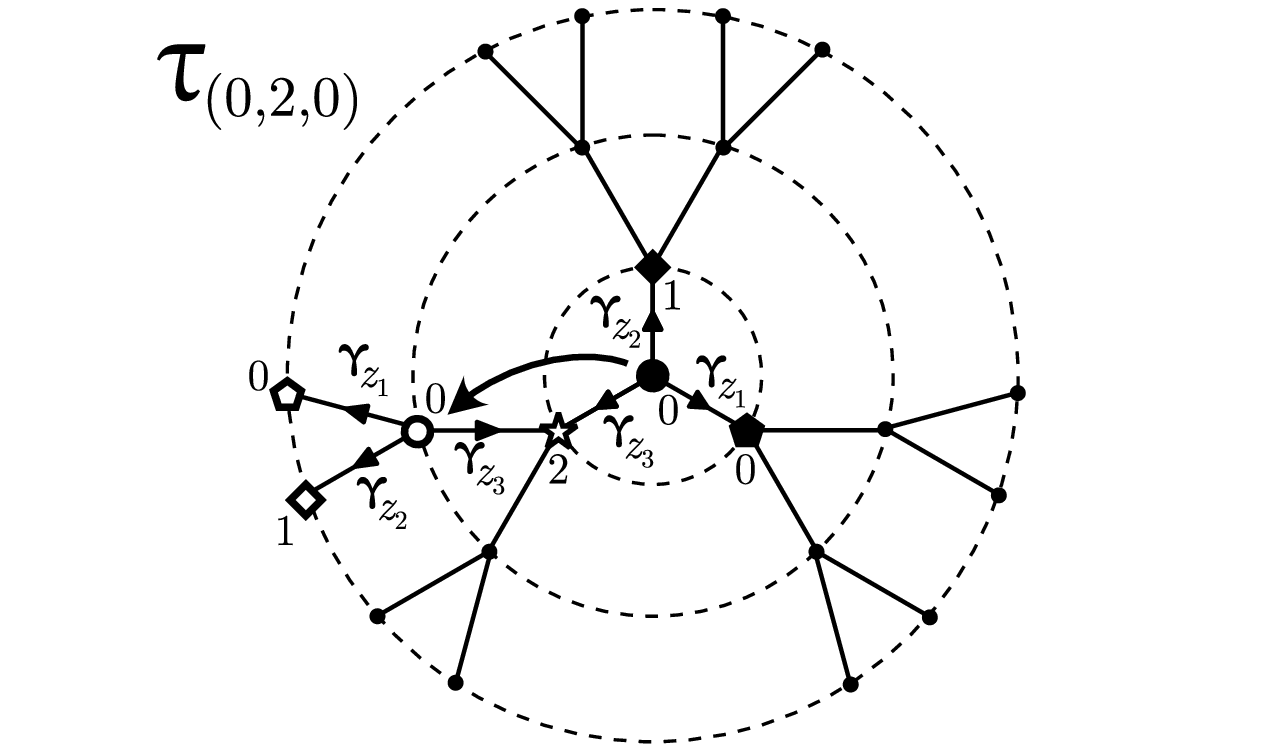} 
	\caption{Geometric perspective of the generalized shift $\tau_x$, defined in (\ref{eq_generalized-shift-vertex}): we illustrate the action of $\tau_x$ for $x = (0,2,0)$ on three vertices $z_1= (0,0)$ (solid pentagon), $z_2=(0,1)$ (solid diamond), and $z_3 = (0,2)$ (star). The results of $\tau_x(z_j)$, for $1 \leq j \leq 3$, are shown by the respective unshaded symbols. The figure illustrates that $\tau_x(z_j)$ can be obtained geometrically by appropriately attaching (as indicated by the labeling) the unique path $\gamma_{z_j}$, connecting the root to $z_j$, to the vertex $x$.  In this sense, $\tau_x$ can be viewed as shifting the root to the vertex $x$, which is shown by the arrow in the figure.}
	\label{fig_bethe-lattice_tau_x}
\end{figure}
Lemma \ref{lemma_action-maps} presents a precise formulation of this geometric perspective of the map $\tau_x$; it follows from a direct computation of $\tau_x(z)$, for an arbitrary vertex $z$, using the definitions of the maps $\tau_1$ and $\tau_2$ and the representation of vertices in Proposition \ref{prop_representation_vertices}. While the computation of $\tau_x(z)$ is straightforward, it requires distinction of various cases depending on the vertex $z$. To maintain the flow of the development, we decided to include the details to Appendix \ref{app_lemma}. Key in this computation is the telescoping property built into (\ref{eq_prop_representation_formulas})
\begin{equation} \label{eq_compositional-lemma-key}
(\sum_{j=1}^n d_j)_{\rm{mod}(\kappa)} = a_n ~\mbox{, for all $1 \leq n \leq \ell$ ,}
\end{equation}
and the distinction of the action of $\tau_1$ depending on whether or not the first (non-trivial) entry equals $\kappa$ (see Remark \ref{remark_tau1} below). In the statement of Lemma \ref{lemma_action-maps}, we use the convention that sums of the form $\sum_{j = \alpha}^{\beta} x_j$ with $\alpha > \beta$ evaluate to zero.
\begin{lemma} \label{lemma_action-maps}
Consider a fixed vertex $x = (0, a_1, \dots, a_\ell)$ with $\ell \geq 1$ and let $d_1, \dots d_\ell$ be defined as in (\ref{eq_prop_representation_formulas}) of Proposition \ref{prop_representation_vertices}. Suppose that $z = (0, b_1, \dots b_m)$ is an arbitrary vertex where $m \geq 1$. Then, $\tau_x(z)$ is described by the following:
\begin{itemize}
\item[(i)] If $b_1 \neq \kappa$, then
\begin{equation}
\tau_x(z) = (0, a_1, \dots , a_\ell, (b_1 + a_\ell)_{\rm{mod}(\kappa)}, (b_2 + a_\ell)_{\rm{mod}(\kappa)}, \dots (b_m + a_\ell)_{\rm{mod}(\kappa)}) ~\mbox{.}
\end{equation}
\item[(ii)] If $b_1 = \kappa$, then for $m=1$, one has
\begin{equation}
\tau_x(0, b_1) = (0, a_1, \dots , a_{\ell-1}) ~\mbox{.}
\end{equation}
\item[(iii)] For $m \geq 2$, suppose that $b_1 = \kappa$ and that $n$ is the {\em{smallest}} index with $1 \leq n \leq \min\{m , \ell\} - 1$ which satisfies the condition that
\begin{align}
c_{n+1} := \left[ \left(b_{n+1} + \sum_{j=\ell - (n-2)}^\ell d_j \right)_{\rm{mod}(\kappa)} + 1 + d_{\ell - (n-1)} \right]_{\rm{mod} (\kappa + 1)} \neq \kappa ~\mbox{, }
\end{align}
then, one has
\begin{align}
\tau_x(z)=(0, a_1, \dots, a_{\ell - n} , (c_{n+1} + a_{\ell - n})_{\rm{mod}(\kappa)}, (b_{n+2} + a_\ell)_{\rm{mod}(\kappa)}, \dots (b_m + a_\ell)_{\rm{mod}(\kappa)}) ~\mbox{.}
\end{align}
\item[(iv)] For $m \geq 2$, suppose that $b_1 = \kappa$ and for all indices $n$ with $1 \leq n \leq \min\{m , \ell\}$, one has
\begin{align}
\left[ \left(b_{n+1} + \sum_{j=\ell - (n-2)}^\ell d_j \right)_{\rm{mod}(\kappa)} + 1 + d_{\ell - (n-1)} \right]_{\rm{mod} (\kappa + 1)} = \kappa ~\mbox{.}
\end{align}
Then, if $m \leq \ell$, one has
\begin{equation}
\tau_x(z)= (0, a_1, \dots , a_{\ell-m}) ~\mbox{,}
\end{equation}
and, if $m > \ell$, one has
\begin{equation}
\tau_x(z)= (0, \left[(b_{\ell+1} +a_\ell - a_1)_{\rm{mod} (\kappa)} + 1 + a_1\right]_{\rm{mod}(\kappa+1)}, (b_{\ell+1} + a_\ell)_{\rm{mod} (\kappa)}, \dots, (b_{m} + a_\ell)_{\rm{mod} (\kappa)}) ~\mbox{.}
\end{equation}
\end{itemize}
\end{lemma}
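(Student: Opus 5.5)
The plan is to compute $\tau_x(z)$ by applying the factors of $\tau_x = (\tau_2^{d_1}\tau_1)\cdots(\tau_2^{d_\ell}\tau_1)$ to $z$ from the right, i.e.\ first $\tau_2^{d_\ell}\tau_1$, then $\tau_2^{d_{\ell-1}}\tau_1$, and so on. I will track the label of the running vertex after each step. Two elementary facts from (\ref{eq_transl1})--(\ref{eq:transl2}) drive everything.

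\textbf{Fact A.} If the running vertex $w=(0,e_1,\dots,e_k)$ has $e_1\neq\kappa$ (or $w=(0)$), then $\tau_2^{d}\tau_1 w = (0,d,(e_1+d)_{\rm{mod}(\kappa)},\dots,(e_k+d)_{\rm{mod}(\kappa)})$. Note that the new first entry is $d$ taken as an element of $\{0,\dots,\kappa\}$, while all older entries are shifted by $d$ mod $\kappa$.

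\textbf{Fact B.} If $e_1=\kappa$, then $\tau_1$ deletes $e_1$ and replaces $e_2$ by $(e_2+1)_{\rm{mod}(\kappa+1)}$. The subsequent $\tau_2^{d}$ then adds $d$ mod $(\kappa+1)$ to that first entry and $d$ mod $\kappa$ to the rest. For $k=1$, Fact B gives $(0)$, and then $\tau_2^{d}$ fixes it.

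\emph{Case (i), $b_1\neq\kappa$.} By Fact A, every step is a "push": after applying the last $j$ factors the vertex is $(0,d_{\ell-j+1},\dots)$ with earlier entries accumulated mod $\kappa$. By induction on $j$, the $i$-th entry after $j$ steps is $(\sum_{r=\ell-j+1}^{\ell-j+i} d_r)_{\rm{mod}(\kappa)}$ for the $d$-part. The old $b$-entries carry $(b_i+\sum_{r=\ell-j+1}^{\ell}d_r)_{\rm{mod}(\kappa)}$. A subtlety is that the first entry is $d_{\ell-j+1}$ not reduced mod $\kappa$. But at the next push it becomes $(d_{\ell-j+1}+d_{\ell-j})_{\rm{mod}(\kappa)}$ only if $d_{\ell-j+1}\le\kappa-1$, which holds for every index $\geq 2$; only $d_1$ can equal $\kappa$, and it lands in the first slot at the end. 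At $j=\ell$, the telescoping identity (\ref{eq_compositional-lemma-key}) turns the $d$-part into $(0,a_1,\dots,a_\ell)$ and the $b$-part into $(b_i+a_\ell)_{\rm{mod}(\kappa)}$. This gives (i). The point to check is that the pushed first entry equals $d_{\ell-j+1}\neq\kappa$ for $\ell-j+1\geq2$, so Fact A remains applicable at every subsequent step.

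\emph{Cases (ii)--(iv), $b_1=\kappa$.} The first step uses Fact B. It removes $b_1$ and produces the new first entry $[(b_2)+1+d_\ell]_{\rm{mod}(\kappa+1)}$, after which $\tau_2^{d_\ell}$ shifts the remaining entries by $d_\ell$ mod $\kappa$. If this new first entry is $\neq\kappa$, we are in case (iii) with $n=1$. From then on Fact A applies for the remaining $\ell-1$ steps, and the computation of case (i) runs verbatim with $\ell$ replaced by $\ell-1$. The last surviving entry gets shifted by $\sum_{r=1}^{\ell-1}d_r\equiv a_{\ell-1}$, which yields $(c_2+a_{\ell-1})_{\rm{mod}(\kappa)}$. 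The tail entries get $(b_i+d_\ell+a_{\ell-1})\equiv b_i+a_\ell$. If the new first entry equals $\kappa$, Fact B applies again, and the $n$-th application produces exactly the quantity $c_{n+1}$: the entry $b_{n+1}$, already shifted by the accumulated $\sum_{j=\ell-n+2}^{\ell}d_j$ mod $\kappa$, plus $1$ and plus $d_{\ell-n+1}$, all mod $(\kappa+1)$. Induction on $n$ gives (iii) at the first $n$ with $c_{n+1}\neq\kappa$. Case (ii) is the instance $m=1$: Fact B with $k=1$ yields $(0)$, and the remaining $\ell-1$ pushes build $(0,a_1,\dots,a_{\ell-1})$ by the case-(i) computation. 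Case (iv) arises when the descent never stops. If $m\le\ell$, the vertex reaches $(0)$ after $m$ steps and the remaining $\ell-m$ pushes give $(0,a_1,\dots,a_{\ell-m})$. If $m>\ell$, all $\ell$ factors are descents, and the final first entry is $[(b_{\ell+1}+\sum_{j=2}^{\ell}d_j)_{\rm{mod}(\kappa)}+1+d_1]_{\rm{mod}(\kappa+1)}$. Here $\sum_{j=2}^\ell d_j\equiv a_\ell-a_1$ and $d_1=a_1$, which matches (iv); the tail entries pick up $\sum_{j=1}^\ell d_j\equiv a_\ell$.

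The main obstacle is bookkeeping of the mixed moduli. First, an entry entering the first slot is reduced mod $(\kappa+1)$, while once pushed inward it must be reinterpreted mod $\kappa$. Second, during a descent the first entry combines a mod-$\kappa$ accumulated shift with a mod-$(\kappa+1)$ increment. I would handle this by stating one precise induction hypothesis describing the vertex after $j$ factors in both regimes (ascending and descending), and checking the transition step between them once.
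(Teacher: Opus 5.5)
Your proposal is correct and is essentially the paper's own argument in the Appendix: apply the factors $\tau_2^{d_j}\tau_1$ from the right, separate right shifts (your Fact A) from left shifts (your Fact B), track the accumulated shifts, and finish with the telescoping identity (\ref{eq_compositional-lemma-key}). One caveat on your claim that the result ``matches (iv)'': your bookkeeping, carried out carefully, exposes off-by-one slips in the statement that the paper's proof shares, so you should prove the corrected version rather than the literal one. First, the hypothesis of (iv) should range over $1\le n\le\min\{m,\ell\}-1$; as written, $n=m$ involves the undefined $b_{m+1}$ when $m\le\ell$, and for $m>\ell$ the case $c_2=\dots=c_\ell=\kappa\neq c_{\ell+1}$ is covered by neither (iii) nor (iv). Second, for $m>\ell$ the result is $(0,c_{\ell+1},(b_{\ell+2}+a_\ell)_{\rm{mod}(\kappa)},\dots,(b_m+a_\ell)_{\rm{mod}(\kappa)})$, a vertex at level $m-\ell$, so the tail starts at $b_{\ell+2}$, not at $b_{\ell+1}$.
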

We remark that item (iii) of Lemma \ref{lemma_action-maps} expresses the scenario that the first $n$ entries of $z$ result in $\tau_1$ acting as a left shift, after which, since $c_{n+1} \neq \kappa$, $\tau_1$ acts as a right shift.

The geometric perspective of the action of $\tau_x$ as a generalized shift to the vertex $x$ (illustrated in Figure \ref{fig_bethe-lattice_tau_x}) immediately implies that following compositional property which relates the elements of the family $\{\tau_x ~,~ x \in \mathcal{V} \}$:
\begin{equation} \label{eq_compositional-law}
\tau_x \circ \tau_y = \tau_{\tau_x(y)} ~\mbox{, for all $x , y \in \mathcal{V}$ .}
\end{equation}
In view of our later discussion of ergodic Schr\"odinger operators on $\mathbb{B}$ (Section \ref{sec_dos_LE1}), we note that given (\ref{eq_compositional-law}), one also has 
\begin{equation} \label{eq_compositional-law-1}
\tau_x^{-1} \circ \tau_y = \tau_{\tau_x^{-1}(y)} ~\mbox{, for all $x , y \in \mathcal{V}$ .}
\end{equation}
Indeed, to see that (\ref{eq_compositional-law-1}) is implied by (\ref{eq_compositional-law}), observe that (\ref{eq_compositional-law-1}) is equivalent to
\begin{equation}
\tau_y = \tau_x \circ \tau_{\tau_x^{-1}(y)} ~\mbox{, for all $x , y \in \mathcal{V}$ ,}
\end{equation}
which however follows from (\ref{eq_compositional-law}) because
\begin{equation}
\tau_x \circ \tau_{\tau_x^{-1}(y)} = \tau_{\tau_x(\tau_x^{-1}(y))} = \tau_y ~\mbox{.}
\end{equation}

\subsection{Expansions of the Green Function: Limiting Behavior of Finite Volume Restrictions}\label{subsec:rso_expansions1}

Based on the set-up from Sections \ref{subsec:bethe1}--\ref{subsec:auto1}, we first introduce deterministic Schr\"odinger operators on the Bethe lattice: given a real-valued sequence $V \in \ell^\infty(\mathbb{B}; \mathbb{R})$ (``\emph{the potential sequence}''), we let $H:~\ell^2(\mathbb{B}; \C) \to \ell^2(\mathbb{B}; \C)$ be the bounded self-adjoint operator defined by
\begin{equation} \label{eq_Schrodop}
H:= \Delta_\mathbb{B} + T_V ~\mbox{,}
\end{equation}
where, $T_V: \ell^2(\mathbb{B}; \C) \to \ell^2(\mathbb{B}; \C)$ denotes the multiplication operator by the sequence $V$,
\begin{equation}
(T_V \psi)(x) = V(x) \psi(x), ~~ \forall x \in \mathcal{V} ~\mbox{,}
\end{equation}
and $\Delta_\mathbb{B}: \ell^2(\mathbb{B}; \C) \to \ell^2(\mathbb{B}; \C)$ is the discrete Laplacian on $\mathbb{B}$,
\begin{equation} \label{eq_laplacian}
[\Delta_\mathbb{B}\psi](x):= \sum_{y \sim x} \psi(y),  ~~ \forall x \in \mathcal{V}  ~\mbox{.}
\end{equation}
It will be also be useful to consider restrictions of (\ref{eq_Schrodop}) to subgraphs: given a set of vertices $\mathcal{A}$, we let  $H_{\mathcal{A}}$ denote the restriction of $H$ to $\ell^2(\mathcal{A}; \mathbb{C})$ with Dirichlet boundary conditions, 
\begin{equation} \label{eq_defn_restriction}
H_{\mathcal{A}} := P_{\mathcal{A}} ~H ~P_{\mathcal{A}} ~\mbox{,}
\end{equation}
where $P_{\mathcal{A}}$ is the orthogonal projection onto the subspace $\ell^2(\mathcal{A}; \mathbb{C})$. One special case of (\ref{eq_defn_restriction}) are restrictions to subgraphs obtained by deleting a set of vertices from $\mathbb{B}$, as defined earlier. In particular, for a fixed vertex $x \in \mathcal{V}$, the decomposition in (\ref{eq_delete-vertex}) implies that
\begin{equation} \label{eq_rooted_hamiltonian_zero}
H= \left\{ V(x) \vert \delta_x \rangle \langle \delta_x \vert \ + \sum_{y \sim x} \vert \delta_y \rangle \langle \delta_y \vert \right\} + \sum_{y \sim x} H_y^{+; x}  ~\mbox{,}
\end{equation}
where $\{\delta_y ~,~ y \in \mathcal{V}\}$ is the usual standard basis on $\ell^2(\mathbb{B}; \mathbb{C})$. Finally, given the definition of $H_{\mathcal{A}}$ in (\ref{eq_defn_restriction}), for $z \in \mathbb{C} \setminus \mathbb{R}$, we define the associated resolvent operator by
\begin{equation} \label{eq_restrictions-greensf}
G_{\mathcal{A}}(z):= \left[ (H_{\mathcal{A}} - z)\vert_{\ell^2(\mathcal{A}; \mathbb{C})} \right]^{-1} \oplus 0_{\ell^2(\mathbb{B} \setminus \mathcal{A}; \mathbb{C})} ~\mbox{,}
\end{equation}
where $0_{\ell^2(\mathbb{B} \setminus \mathcal{A}; \mathbb{C})}$ is the zero operator on $\ell^2(\mathbb{B} \setminus \mathcal{A}; \mathbb{C})$. 

The main results of this paper are based on two known expansions of the resolvent operator (Theorem \ref{eq_prop_RW-exp} and Theorem \ref{thm_SAW-expansion}), which hold true more generally for Schr\"odinger operators on graphs. We will use these expansions to relate the resolvent operators of the ``infinite volume'' operator $H$ to the resolvent operators of ``finite volume restrictions,'' 
\begin{equation} \label{eq_finite-volume-restriction}
G_L(z):=\left[ H_{\Lambda_L} - z \right]^{-1} ~\mbox{ ,}
\end{equation}
where, for $L \in \mathbb{N}$, we let
\begin{equation} \label{eq_coordination-spheres_defn}
\Lambda_L:= \left\{ y \in \mathcal \mathcal{V} ~:~\ud_\mathbb{B}(y , 0) \leq L \right\} ~\mbox{.}
\end{equation}

The first expansion of the resolvent is known as the {\em{random walk (RW)-expansion}}, see, e.g., \cite[Theorem 6.1]{aw_book}. Since we will later use elements of the proof of the RW-expansion, we include a short argument below. For this purpose, we let $\mathbb{G} = (\mathcal{V}_\mathbb{G}, \mathcal{E}_\mathbb{G})$ denote a general graph with vertices $\mathcal{V}_\mathbb{G}$ and edges $\mathcal{E}_\mathbb{G}$ for which one can consider deterministic Schr\"odinger operators $H$ defined in complete analogy to (\ref{eq_Schrodop}); in particular, $\Delta_\mathbb{G}$ is the Laplacian on $\mathbb{G}$ given by
\begin{equation} \label{eq_laplacian_general-graph}
[\Delta_\mathbb{G}\psi](x):= \sum_{y \sim x} \psi(y),  ~\forall x \in \mathcal{V}  ~\mbox{.}
\end{equation}
It is enough for our purposes to consider the case that $\mathbb{G}$ has bounded degree, i.e., there exists $N \in \mathbb{N}$ such that the number of edges emanating from each vertex is uniformly bounded above by $N$. In this case, $\Delta_\mathbb{G}$ is a bounded self-adjoint operator on $\ell^2(\mathbb{G}; \mathbb{C})$ satisfying the elementary bound
\begin{equation} \label{eq_laplacian_normbound_trivial}
\Vert \Delta_\mathbb{G} \Vert \leq N ~\mbox{.}
\end{equation}
For the special case of the Bethe lattice with connectivity $\kappa$ and finite volume restrictions of the Laplacian $\Delta$ to balls $\Lambda_L$, the bound in (\ref{eq_laplacian_normbound_trivial}) implies that
\begin{equation} \label{eq_trivial-norm-bounds-Laplacian}
\Vert \Delta_{\Lambda_L} \Vert \leq \kappa+1 ~\mbox{, for all $L \in \mathbb{N} \cup \{\infty\}$ , }
\end{equation}
where, for ease of notation, we take 
\begin{equation} \label{eq_convention_infinitebox}
\Lambda_L = \mathbb{B} ~\mbox{ if }  L=\infty ~\mbox{.}
\end{equation}
\begin{theorem}[RW-expansion of the resolvent]
\label{prop_RW-exp}
Let $z \in \mathbb{C} \setminus \mathbb{R}$ with $\vert \Im z \vert > \Vert \Delta_\mathbb{G} \Vert$. Then, for each pair of vertices $x, y \in \mathcal{V}_\mathbb{G}$, one has the absolutely convergent expansion
\begin{equation} \label{eq_prop_RW-exp}
G(z; x,y) = \langle \delta_x , (H - z)^{-1} \delta_y \rangle = \sum_{n=0}^\infty \sum_{\gamma: x \to y, \vert \gamma \vert = n} (-1)^n \prod_{k=0}^{n} \dfrac{1}{V(\gamma(k)) - z} ~\mbox{,}
\end{equation}
where $\gamma: x \to y$,  $\vert \gamma \vert = n$ denotes all walks from the vertex $x$ to the vertex $y$ with length $\vert \gamma \vert$ fixed to $n \in \mathbb{N} \cup \{0\}$.
\end{theorem}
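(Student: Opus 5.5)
We will derive the expansion from a Neumann series for $(H-z)^{-1}$ in which the potential is treated as the unperturbed part and the Laplacian as the perturbation. Write $H - z = (T_V - z) + \Delta_\mathbb{G}$. Since $V$ is real-valued and $z \notin \mathbb{R}$, the multiplication operator $D := T_V - z$ is invertible. Its inverse $D^{-1}$ is the multiplication operator by $(V(x)-z)^{-1}$, and
\begin{equation*}
\Vert D^{-1} \Vert = \sup_{x} \vert V(x) - z \vert^{-1} \leq \vert \Im z \vert^{-1}.
\end{equation*}
Factoring gives $H - z = D\,(I + D^{-1}\Delta_\mathbb{G})$. The hypothesis $\vert \Im z\vert > \Vert \Delta_\mathbb{G}\Vert$ yields
\begin{equation*}
\Vert D^{-1}\Delta_\mathbb{G}\Vert \leq \Vert \Delta_\mathbb{G}\Vert / \vert \Im z\vert < 1 .
\end{equation*}
Hence the Neumann series converges in operator norm, and
\begin{equation*}
(H-z)^{-1} = \sum_{n=0}^\infty (-1)^n (D^{-1}\Delta_\mathbb{G})^n D^{-1}.
\end{equation*}

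Next we compute the matrix elements of each term. Since $\langle \delta_u, \Delta_\mathbb{G} \delta_w\rangle = 1$ if $u \sim w$ and $0$ otherwise, inserting the resolution of the identity $\sum_u \vert\delta_u\rangle\langle\delta_u\vert$ between the factors gives
\begin{equation*}
\langle \delta_x, (D^{-1}\Delta_\mathbb{G})^n D^{-1}\delta_y\rangle = \sum_{\gamma: x\to y,\ \vert\gamma\vert = n} \prod_{k=0}^n \frac{1}{V(\gamma(k))-z}.
\end{equation*}
Here the sum runs over all sequences $x=\gamma(0)\sim\gamma(1)\sim\dots\sim\gamma(n)=y$, i.e., walks of length $n$, which are not required to be self-avoiding. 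Each such sum has at most $N^n$ terms because the graph has degree bounded by $N$, so the intermediate sums are finite and the interchange with the inner product is harmless. Taking matrix elements termwise is legitimate since the series converges in norm, and this gives (\ref{eq_prop_RW-exp}).

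The only point needing care is \emph{absolute} convergence of the double sum, as opposed to convergence of the grouped series. We bound the absolute values directly:
\begin{equation*}
\sum_{\gamma: x\to y,\ \vert\gamma\vert=n} \prod_{k=0}^n \vert V(\gamma(k))-z\vert^{-1} \leq \langle \delta_x, (\vert D^{-1}\vert\,\Delta_\mathbb{G})^n \vert D^{-1}\vert\,\delta_y\rangle \leq \frac{\Vert\Delta_\mathbb{G}\Vert^n}{\vert\Im z\vert^{n+1}} .
\end{equation*}
The middle expression is the same walk sum with nonnegative weights, and the last bound holds because $\Delta_\mathbb{G}$ has nonnegative entries and $\Vert\,\vert D^{-1}\vert\,\Vert \leq \vert \Im z\vert^{-1}$. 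This is a geometric sequence with ratio $\Vert\Delta_\mathbb{G}\Vert/\vert \Im z\vert<1$, so the double sum converges absolutely. This is the step where the hypothesis on $\vert\Im z\vert$ is used in an essential way, beyond justifying the Neumann series.
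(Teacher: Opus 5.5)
Your proof is correct and follows essentially the same route as the paper: a norm-convergent Neumann series in $\Delta_\mathbb{G}$ around the multiplication operator $T_V - z$, whose matrix elements are then read off as walk sums. The paper obtains the series by iterating the second resolvent identity and bounding the remainder $R_n$, whereas you factor $H-z = D(I+D^{-1}\Delta_\mathbb{G})$.

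Your extra comparison with the nonnegative weights $|V(\gamma(k))-z|^{-1}$ makes explicit the absolute convergence of the double sum. The paper asserts this but only justifies norm convergence of the series grouped by $n$. One small correction: nonnegativity of the entries is what makes the middle expression equal to the absolute walk sum, while the final bound is just the operator-norm estimate.
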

The proof of Theorem \ref{prop_RW-exp} results from expanding $(H - z)^{-1}$ with respect to the Laplacian where the restriction $\vert \Im z \vert > \Vert \Delta_\mathbb{G} \Vert$ ensures convergence of this expansion in operator norm; see (\ref{eq_exp_RW-pert_norm-conv}) in the proof below.
\begin{proof}
Using the second resolvent identity repeatedly with $\Delta_\mathbb{G}$ viewed as a perturbation, we obtain for each $n \in \mathbb{N}$,
\begin{align} 
(H & - z)^{-1} = (V - z)^{-1} - (V - z)^{-1} \Delta_\mathbb{G} (H - z)^{-1} \\
 &= \sum_{k=0}^{n-1} \left[ - (V - z)^{-1} \Delta_\mathbb{G}  \right]^k  (V - z)^{-1} + R_n ~\mbox{,} \label{eq_exp_RW-pert}
\end{align}
where the remainder term $R_n$ is given by
\begin{equation} \label{eq_exp_RW-pert_remainder}
R_n:= \left[- (V - z)^{-1} \Delta_\mathbb{G}  \right]^{n} (H - z)^{-1} ~\mbox{.}
\end{equation}
Since, for $\vert \Im z \vert > \Vert \Delta_\mathbb{G} \Vert$, the remainder satisfies the bound,
\begin{equation}
\left\Vert R_n \right\Vert \leq \frac{1}{\vert \Im z \vert} \left(  \dfrac{\Vert \Delta_\mathbb{G} \Vert}{ \vert \Im z \vert } \right)^{n} \to 0 ~\mbox{, as $n \to \infty$ ,}
\end{equation} 
(\ref{eq_exp_RW-pert}) yields a norm convergent expansion of the resolvent operator
\begin{equation} \label{eq_exp_RW-pert_norm-conv}
(H - z)^{-1} = \sum_{n=0}^{\infty} \left[ - (V - z)^{-1} \Delta_\mathbb{G}  \right]^n (V - z)^{-1} ~\mbox{.}
\end{equation}
In particular, computing matrix elements $G(z; x,y)$ based on (\ref{eq_exp_RW-pert_norm-conv}), the nearest neighbor of the Laplacian in (\ref{eq_laplacian_general-graph}) results in the walk-expansion claimed by (\ref{eq_prop_RW-exp}).
\end{proof}

We note that the RW-expansion of Proposition \ref{prop_RW-exp} has contributions from {\em{all}} walks which connect the vertices $x$ and $y$ under consideration. Here, we recall that walks may visit vertices more than once. For vertices $x \neq y$, it is, however, possible to carry out a partial re-summation of the terms in the RW-expansion, thereby reducing the expansion to only include self-avoiding walks (= paths). The resulting expansion is known as the self-avoiding walk expansion (SAW-expansion) or Feenberg expansion, see, e.\ g. \cite[Theorem 6.2]{aw_book}. 
We observe that a similar result was obtained independently by Bourgain and Jitomirskaya in their work on the weak-coupling regime of quasi-periodic Schr\"odinger operators; see 
\cite[Lemma 19]{BourgainJitomirskaya_2002_inventiones}. 

\begin{theorem}[SAW-expansion of the resolvent]
\label{thm_SAW-expansion}
Let $z \in \mathbb{C} \setminus \mathbb{R}$ with $\vert \Im z \vert > \Vert \Delta_\mathbb{G} \Vert$. Then, for each pair of vertices $x \neq y \in \mathcal{V}_\mathbb{G}$, one has the absolutely convergent expansion
\begin{equation} \label{eq_thm_SAW-expansion}
G(z; x,y) = \sum_{\gamma: x \to y}^{\mbox{\tiny{(SAW)}}} (-1)^{\vert \gamma \vert} \prod_{k=0}^{\vert \gamma \vert} \left\langle \delta_{\gamma(k)}, G_{\mathbb{G} \setminus \gamma([0,k-1])}(z)  \delta_{\gamma(k)} \right\rangle ~\mbox{.}
\end{equation}
Here, the (possibly infinite) sum in (\ref{eq_thm_SAW-expansion}) ranges over all {\em{paths}} (or self-avoiding walks) $\gamma: x \to y$ from $x$ to $y$, which we indicate explicitly by the superscript (SAW) in the summation. Given such a path $\gamma$ of length $\vert \gamma \vert$, for each $1 \leq k \leq \vert \gamma \vert$, $G_{\mathbb{G} \setminus \gamma([0,k-1])}(z)$ denotes the resolvent operator associated with the restriction of $H$ to $\ell^2(\mathbb{G} \setminus \gamma([0, k-1])$ as defined earlier in (\ref{eq_restrictions-greensf}); for $k=0$, we adapt the notation $G_{\mathbb{G} \setminus \gamma([0,k-1])}(z) := G_\mathbb{G}(z)$.
\end{theorem}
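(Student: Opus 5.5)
We derive the SAW-expansion from the RW-expansion of Theorem \ref{prop_RW-exp} by grouping walks according to their \emph{last-exit decomposition}. Every walk $w: x \to y$ with $x \neq y$ determines a unique path $\gamma: x \to y$ by loop erasure. We use the following variant: let $k_0$ be the last visit of $w$ to $x = \gamma(0)$. The next step goes to a neighbor $\gamma(1)$. After that, $w$ never returns to $x$; let $k_1$ be its last visit to $\gamma(1)$. Continue inductively. This writes $w$ uniquely as a concatenation
\begin{equation*}
w = \ell_0 \, e_0 \, \ell_1 \, e_1 \cdots e_{n-1} \, \ell_n ~\mbox{,}
\end{equation*}
where $n = \vert \gamma \vert$ and $e_{k-1}$ is the edge $\gamma(k-1)\gamma(k)$. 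Each $\ell_k$ is a closed walk from $\gamma(k)$ to itself that avoids $\gamma([0,k-1])$. Conversely, any such choice of $\gamma$ and loops $\ell_0, \dots, \ell_n$ produces a walk $x \to y$ whose decomposition returns the same data. The avoidance conditions are exactly what make this map a bijection, and checking this carefully is the main combinatorial point.

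Next we factor the weights. The weight in (\ref{eq_prop_RW-exp}) of a walk of length $m$ is $(-1)^m \prod (V(w(j)) - z)^{-1}$ over all visited positions. It factorizes over the decomposition: each connecting edge contributes one factor $-1$, and the vertex factors split among the loops. Each shared vertex $\gamma(k)$ is counted exactly once, in $\ell_k$, since $\ell_k$ starts and ends there and the edge $e_k$ adds no vertex factor of its own. Hence
\begin{equation*}
\mathrm{wt}(w) = (-1)^{n} \prod_{k=0}^{n} \mathrm{wt}(\ell_k) ~\mbox{.}
\end{equation*}
Summing over all admissible $\ell_k$ with $\gamma$ fixed gives the product of the sums $\sum_{\ell_k} \mathrm{wt}(\ell_k)$. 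We identify each factor using Theorem \ref{prop_RW-exp} applied to the graph $\mathbb{G} \setminus \gamma([0,k-1])$ with the restricted operator. By (\ref{eq_laplacian_normbound_trivial}), the norm of its Laplacian is bounded by that of $\Delta_\mathbb{G}$; the identification is indeed via compressing $\Delta_\mathbb{G}$, whose norm does not increase. So the hypothesis $\vert \Im z \vert > \Vert \Delta_\mathbb{G} \Vert$ still applies, and the sum of weights of closed walks at $\gamma(k)$ in that subgraph equals $\langle \delta_{\gamma(k)}, G_{\mathbb{G} \setminus \gamma([0,k-1])}(z) \delta_{\gamma(k)} \rangle$. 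This yields (\ref{eq_thm_SAW-expansion}).

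The main obstacle is justifying the rearrangement: regrouping by $\gamma$ and factoring infinite sums requires absolute convergence of the RW-expansion, not just norm convergence of the Neumann series. We bound the absolute series by replacing each weight with its modulus. We have $\vert V(v) - z \vert^{-1} \leq \vert \Im z \vert^{-1}$, and the number of walks of length $m$ from $x$ to $y$ is $\langle \delta_x, \Delta_\mathbb{G}^m \delta_y \rangle \leq \Vert \Delta_\mathbb{G} \Vert^m$ (entries of a nonnegative matrix power). This gives
\begin{equation*}
\sum_{w} \vert \mathrm{wt}(w) \vert \leq \sum_m \vert \Im z \vert^{-1} \left( \Vert \Delta_\mathbb{G} \Vert / \vert \Im z \vert \right)^m < \infty ~\mbox{.}
\end{equation*}
By Fubini/Tonelli for counting measure, we may therefore regroup and factor, and the SAW-sum converges absolutely as well, since it is dominated by the same bound.
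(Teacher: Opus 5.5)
Your proof is correct. It carries out exactly the partial re-summation of the RW-expansion that the paper alludes to just before the statement; the paper gives no proof of its own and defers to \cite[Theorem 6.2]{aw_book}. The last-exit decomposition is a genuine bijection onto pairs (path, loops satisfying the stated avoidance constraints), and the weights factor as you say. Your bound $\langle \delta_x, \Delta_\mathbb{G}^m \delta_y \rangle \leq \Vert \Delta_\mathbb{G} \Vert^m$ also supplies the absolute convergence needed for the regrouping, which the paper's proof of Theorem \ref{prop_RW-exp} only asserts.

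One small correction: the bound $\Vert P_{\mathcal{A}} \Delta_\mathbb{G} P_{\mathcal{A}} \Vert \leq \Vert \Delta_\mathbb{G} \Vert$ for the subgraph Laplacians comes from compression, as you note. It does not follow from (\ref{eq_laplacian_normbound_trivial}), which is only the degree bound.
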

\begin{remark} \label{remark_SAW-exp}
Note that the SAW-expansion takes a particularly simple form if $\mathbb{G}$ is a tree in which case each pair of {\em{distinct}} vertices $x \neq y$ are connected by a {\em{unique}} path so that the sum on the right-hand side of (\ref{eq_thm_SAW-expansion}) collapses to a {\em{single}} term. Given that both sides of the SAW-expansion (\ref{eq_thm_SAW-expansion}) are analytic on $\mathbb{C} \setminus \mathbb{R}$, the uniqueness theorem for analytic functions correspondingly implies that (\ref{eq_thm_SAW-expansion}) extends from $\vert \Im (z) \vert > \Vert G \Vert$ to all of $\mathbb{C} \setminus \mathbb{R}$. This observation will be crucial in our proof of Theorem \ref{thm_analytic-miracles} (ii) below, which constitutes one of the key ingredients in this paper. 
\end{remark}

Applied to the Bethe lattice, the RW- and the SAW-expansion allow us to prove the following two important deterministic results, formulated in Theorem \ref{thm_analytic-miracles}: they compare the resolvent operator of finite volume restrictions of the Schr\"odinger operator (\ref{eq_finite-volume-restriction}) to the resolvent of the infinite volume operator $G(z)$. In the context of ergodic Schr\"odinger operators (which will be discussed in Section \ref{sec_dos_LE1}), Theorem \ref{thm_analytic-miracles} will allow us to recover the density of states measure (part (i) of Theorem \ref{thm_analytic-miracles}) and the Lyapunov exponent (part (ii) of Theorem \ref{thm_analytic-miracles}) from the respective quantities for appropriate finite volume restrictions. They will form key ingredients in our proof of a modified Thouless formula in Section \ref{sec:thouless1}. 

Theorem \ref{thm_analytic-miracles} extends a result by Acosta and Klein in \cite[Proposition 1.2]{AcostaKlein} which establishes the convergence, in the strong operator topology, of the resolvent of finite volume restriction of the \Schr operator 
to the resolvent of the infinite volume operator. We generalize this result for finite volume restrictions of the resolvent operator to characterize the behavior of 1.) averaged traces over a (growing) set of vertices $\mathcal{A}_L \subseteq \Lambda_L$, as $L \to \infty$ (part (i) of Theorem \ref{thm_analytic-miracles}), and 2.) the exponential decay rate of the Green function along (infinite) paths (part (ii) of Theorem \ref{thm_analytic-miracles}). 
Unlike earlier results by Acosta and Klein in \cite{AcostaKlein}, our extensions will require enlarging the volume of the restricted operator by a growing amount expressed by a function $\mathfrak{e}: \mathbb{N} \to \mathbb{N}$, $\mathfrak{e}(L) \to +\infty$, as $L \to \infty$. The necessary enlargement when passing from quantities relating to the infinite volume operator to finite volume restrictions is an expression of the hyperbolic geometry of the Bethe lattice which, for instance, is encapsulated by the {\em{bounded}} (non-decaying) surface to volume ratio of balls $\Lambda_L$ in the limit as $L \to +\infty$.
\begin{theorem} \label{thm_analytic-miracles}
Fix $\mathfrak{e}: \mathbb{N} \to \mathbb{N}$ such that $\mathfrak{e}(L) \to +\infty$, as $L \to \infty$.
\begin{itemize}
\item[(i)] For $L \in \mathbb{N}$, let $\mathcal{A}_L$ be a set of vertices satisfying 
\begin{equation}
\mathcal{A}_L \subseteq \Lambda_L ~\mbox{,}
\end{equation}
and denote by $\vert \mathcal{A}_L \vert$ the number of vertices in $\mathcal{A}_L$. Then, for all $z \in \mathbb{C}$ with $\Im z > 0$, one has:
\begin{align} \label{eq_thm_analytic-miracles_DOS}
\lim_{L \to \infty} \frac{1}{\vert \mathcal{A}_L \vert} \sum_{x \in \mathcal{A}_L} \left\{ G(z; x,x) - G_{L + \mathfrak{e}(L)}(z;x,x) \right\}     = 0 ~\mbox{.}
\end{align}

\item[(ii)] Let $\gamma: \mathbb{N}_0 \to \mathcal{V}$ be an infinite path with $\gamma(0) = 0$. Then, for all $z \in \mathbb{C}$ with $\Im z > 0$, one has:
\begin{align} \label{eq_thm_analytic-miracles_LE}
\lim_{k \to \infty} \frac{1}{L} \left\{ \log \vert G\left(z; 0, \gamma(L-1)\right) \vert - \log \vert G_{L + \mathfrak{e}(L)}\left(z; 0, \gamma(L-1)\right) \vert \right\} = 0 ~\mbox{.}
\end{align}
\end{itemize}
\end{theorem}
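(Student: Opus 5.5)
Both parts will first be established for $\Im z$ large, using the RW-expansion (Theorem \ref{prop_RW-exp}) and the SAW-expansion (Theorem \ref{thm_SAW-expansion}). Analyticity in $z$ together with uniform bounds (Vitali/Montel) will then carry the statements to all of $\Im z>0$. The key observation is that, for $x \in \Lambda_L$, $G(z;x,y)$ and $G_{L+\mathfrak{e}(L)}(z;x,y)$ share every RW-expansion term coming from a walk that stays inside $\Lambda_{L+\mathfrak{e}(L)}$. Any walk from $x$ that leaves this ball has length at least $\mathfrak{e}(L)+1$. Write $\eta := \Im z$. Since $\Vert\Delta_{\Lambda_L}\Vert\le\kappa+1$ for all $L\in\mathbb{N}\cup\{\infty\}$ by (\ref{eq_trivial-norm-bounds-Laplacian}), the tail estimate for $R_n$ in the proof of Theorem \ref{prop_RW-exp} applies to both operators uniformly in $L$. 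Hence, for $\eta>\kappa+1$,
\begin{equation*}
\sup_{x\in\Lambda_L}\vert G(z;x,x)-G_{L+\mathfrak{e}(L)}(z;x,x)\vert \le \frac{2}{\eta}\sum_{n>\mathfrak{e}(L)}\Big(\frac{\kappa+1}{\eta}\Big)^n \to 0 .
\end{equation*}
Here I group the walks by length and bound each length-$n$ block by the operator norm of $[(V-z)^{-1}\Delta]^n(V-z)^{-1}$ for each operator separately. This gives (i) for large $\eta$, uniformly in the choice of $\mathcal{A}_L$.

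To extend (i), set $f_L(z)=\frac{1}{|\mathcal{A}_L|}\sum_{x\in\mathcal{A}_L}\{G(z;x,x)-G_{L+\mathfrak{e}(L)}(z;x,x)\}$. These functions are analytic on $\Im z>0$ and satisfy $|f_L(z)|\le 2/\Im z$, so they are locally uniformly bounded. Vitali's theorem then upgrades convergence to $0$ on the open set $\{\Im z>\kappa+1\}$ to convergence on the whole upper half-plane.

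For (ii), I use Remark \ref{remark_SAW-exp}: on a tree the SAW-expansion has a single term, valid for all $z\in\mathbb{C}\setminus\mathbb{R}$. Applied both on $\mathbb{B}$ and on the tree $\Lambda_{L+\mathfrak{e}(L)}$, it gives
\begin{equation*}
\log\vert G(z;0,\gamma(L-1))\vert-\log\vert G_{L+\mathfrak{e}(L)}(z;0,\gamma(L-1))\vert=\sum_{k=0}^{L-1}\Big(\log\vert g_k\vert-\log\vert g_k^{(L)}\vert\Big).
\end{equation*}
Here $g_k = G_{\mathbb{B}\setminus\gamma([0,k-1])}(z;\gamma(k),\gamma(k))$, and $g_k^{(L)}$ is the same quantity for the restriction to $\Lambda_{L+\mathfrak{e}(L)}\setminus\gamma([0,k-1])$. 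It therefore suffices to show that the Cesàro average of $\vert\log|g_k|-\log|g_k^{(L)}|\vert$ tends to $0$. The first step is a uniform two-sided bound. Each $g$ is a diagonal Green function of a self-adjoint operator, so $|g|\le 1/\Im z$. For the lower bound, $\Im g \ge \Im z\,\Vert (H'-z)^{-1}\delta\Vert^2 \ge \Im z/(\Vert H'\Vert+|z|)^2$, since $\Vert(H'-z)\delta\Vert\le \Vert H'\Vert+|z|$; this is uniform because $\Vert H'\Vert\le\kappa+1+\Vert V\Vert_\infty$. Consequently $\log$ is Lipschitz on the relevant range with a constant depending only on $z$, $\kappa$ and $\Vert V\Vert_\infty$, and the difference of logarithms is controlled by $|g_k-g_k^{(L)}|$.

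The main obstacle is that the vertices $\gamma(k)$ with $k$ close to $L-1$ sit at distance only about $\mathfrak{e}(L)$ from the boundary of the enlarged ball. The bound $|g_k - g_k^{(L)}|\lesssim \sum_{n>\mathfrak{e}(L)}((\kappa+1)/\eta)^n$, obtained by the same walk comparison as in (i) for large $\eta$, holds uniformly in $k\le L-1$ because $\mathrm{d}_\mathbb{B}(\gamma(k),\partial\Lambda_{L+\mathfrak{e}(L)})\ge\mathfrak{e}(L)$. This is precisely why the enlargement $\mathfrak{e}(L)\to\infty$ is needed. So for large $\eta$ each term, and hence the average, tends to $0$. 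For general $\Im z>0$ the geometric tail is unavailable, so I would pass via analyticity. Let $h_k^{(L)}(z):=g_k(z)-g_k^{(L)}(z)$. It is analytic and bounded by $2/\Im z$ uniformly in $k,L$, and $\sup_{k<L}|h_k^{(L)}|\to0$ on $\{\Im z>\kappa+1\}$. For fixed $z_0$ with $\Im z_0>0$, I choose a point $z_1$ with $\Im z_1>\kappa+1$ and apply a two-constants / Hadamard-type estimate (or harmonic-measure argument) on a region in the upper half-plane containing both $z_0$ and $z_1$. This gives $|h_k^{(L)}(z_0)|\le C\,\epsilon_L^{\theta}$, with $\epsilon_L\to0$ the large-$\eta$ bound and $\theta>0$ independent of $k,L$. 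Uniformity in $k$ then follows. Averaging over $k$ and dividing by $L$ completes (ii), reading the limit in (\ref{eq_thm_analytic-miracles_LE}) as $L\to\infty$.
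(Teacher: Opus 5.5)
Your proof is correct. Part (i) is essentially the paper's argument: the RW-expansion tail bound for $\Im z>\kappa+1$, uniform in $x\in\Lambda_L$, followed by Vitali using $|G|\le 1/\Im z$.

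In part (ii) you share the paper's reduction: the single-term SAW product on both trees, the two-sided Herglotz bounds on the diagonal entries, Lipschitz control of the logarithm, and the uniform walk-tail bound for large $\Im z$. Where you differ is the extension to all of $\{\Im z>0\}$.

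\textbf{The paper's route.} It takes principal-branch complex logarithms, which is legitimate since $\Im g_k,\Im g_k^{(L)}>0$. It then applies Vitali once to the analytic average
\[
g_L(z)=\frac{1}{L}\sum_{k=0}^{L-1}\bigl(\log g_k-\log g_k^{(L)}\bigr),
\]
whose real part is the quantity in question. This needs only local boundedness of $g_L$, and no uniformity in $k$ at a general $z$.

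\textbf{Your route.} You prove the stronger statement $\sup_{k<L}|g_k(z_0)-g_k^{(L)}(z_0)|\le C\epsilon_L^{\theta}$ at every $z_0$, using a two-constants/harmonic-measure estimate. You then average with the real-variable bound $|\log|a|-\log|b||\le |a-b|/\min(|a|,|b|)$. This avoids complex logarithms altogether and gives an explicit rate, at the price of slightly heavier machinery.

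\textbf{One point to tighten.} The smallness must be imposed on a set of positive harmonic measure, not merely at the single point $z_1$. For example, use the top edge at height $\eta_1>\kappa+1$ of a rectangle containing $z_0$, where your tail bound $\epsilon_L$ holds uniformly in $k$. Alternatively, applying Vitali along arbitrary index sequences $k_L<L$ gives the same uniformity in $k$ without the quantitative estimate.
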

In view of part (ii) of Theorem \ref{thm_analytic-miracles}, we note that since, by definition, paths must not visit a vertex more than once, the tree structure of the Bethe lattice implies that any infinite path $\gamma$ in $\mathbb{B}$ with $\gamma(0) = 0$ necessarily satisfies 
\begin{equation}
\ud_\mathbb{B}(0, \gamma(k)) = k ~\mbox{, for all $k \in \mathbb{N}_0$ ,}
\end{equation}
so that
\begin{equation}
\ud_\mathbb{B}(0, \gamma(k)) \nearrow + \infty ~\mbox{, as $k \to \infty$ .}
\end{equation}

The proof of Theorem \ref{thm_analytic-miracles} will establish the claims first on the half plane $\Im z >  \kappa +1$ for which, by (\ref{eq_trivial-norm-bounds-Laplacian}), the RW-expansion converges absolutely. Analyticity of the resolvent will then allow us to extend the result to the entire upper half plane $\Im z > 0$ by appealing to Vitali's convergence theorem from complex analysis (for reference, see e.g., \cite[Theorem 6.2.8]{simon_analysis-books_basic-complex-analysis}). This argument further develops ideas which played an important role in \cite[Proposition 1.2]{AcostaKlein}.
\begin{proof}
Fix an arbitrary $z \in \mathbb{C}$ with $\Im z > \kappa + 1$ so that the RW-expansion for $G_{L + \mathfrak{e}(L)}(z;x,y)$ converges absolutely for all $L \in \mathbb{N} \cup \{\infty\}$; for $L=\infty$, we recall our earlier convention introduced in (\ref{eq_convention_infinitebox}), in which case $G_\infty(z; x,x) = G(z; x,x)$.

\begin{itemize}
\item[(i)] To begin, observe that for each vertex $x \in \mathcal{A}_L$, the RW-expansion implies that the only contributions to the difference between $G(z;x,x)$ and $G_{L+\mathfrak{e}(L)}(z;x,x)$ stem from self-connecting walks $\gamma: x \to x$ which {\em{escape}} $\Lambda_{L+\mathfrak{e}(L)}$; in particular the length of the contributing walks is constrained by
\begin{equation}
\vert \gamma \vert \geq 2 (\mathfrak{e}(L) + 1) ~\mbox{.}
\end{equation}
Thus, for all $x \in \mathcal{A}_L$, the norm convergence in (\ref{eq_exp_RW-pert_norm-conv}) which underlies the RW-expansion implies that
\begin{align}
\left\vert G(z;\right. & \left. x,x)  - G_{L + \mathfrak{e}(L)}(z; x,x) \right\vert = \left\vert \sum_{n = 2 (\mathfrak{e}(L) + 1) }^\infty \left\langle \delta_x ,\left[ - (V - z)^{-1} \Delta_\mathbb{B} \right]^n  (V - z)^{-1}  \delta_x \right\rangle \right. \nonumber  \\
& \left. - \sum_{n = 2 (\mathfrak{e}(L)+1)}^\infty \left\langle \delta_x , \left[ - (V - z)^{-1} \Delta_{\Lambda_{L + \mathfrak{e}(L)}}  \right]^n (V - z)^{-1}  \delta_x \right\rangle \right\vert \\
& \leq \frac{2}{\vert \Im z \vert} \sum_{n = 2 (\mathfrak{e}(L) + 1)}^\infty \left(  \dfrac{K}{ \vert \Im z \vert } \right)^{n} \to 0 ~\mbox{, as $L \to +\infty$ ;} \label{eq_thm_analytic-miracles_keybound}
\end{align}
we note that it is in (\ref{eq_thm_analytic-miracles_keybound}) that we used our hypothesis that $\mathfrak{e}(L) \to \infty$, as $L \to \infty$. Since the right-most bound in (\ref{eq_thm_analytic-miracles_keybound}) holds uniformly for all $x \in \mathcal{A}_L$, we obtain the claim in (\ref{eq_thm_analytic-miracles_DOS}) for $\Im z > \kappa + 1$. 

Finally, analyticity of the resolvent on $\mathbb{C} \setminus \mathbb{R}$ and the trivial upper bound
\begin{equation}
\vert G_{L + \mathfrak{e}(L)}(z; x,x) \vert \leq \dfrac{1}{\vert \Im z \vert} ~\mbox{, for all $L \in \mathbb{N} \cup \{\infty\}$ ,}
\end{equation}
shows that
\begin{equation}
z \mapsto \frac{1}{\vert \mathcal{A}_L \vert} \sum_{x \in \mathcal{A}_L} \left\{ G(z; x,x) - G_{2L}(z;x,x) \right\} ~\mbox{,}
\end{equation}
is a normal family of analytic functions on $\{ \Im z > 0 \}$, indexed by $L \in \mathbb{N}$. Thus, using Vitali's convergence theorem from complex analysis, (\ref{eq_thm_analytic-miracles_DOS}) extends from $\Im z > \kappa+1$ to all of $\{ \Im z > 0 \}$.

\item[(ii)] Let $\gamma: \mathbb{N}_0 \to \mathcal{V}$ be a fixed infinite path with $\gamma(0) = 0$. We will use the SAW-expansion to express the off-diagonal elements of the two resolvent operators in (\ref{eq_thm_analytic-miracles_LE}). Recall from Remark \ref{remark_SAW-exp} that the tree structure of the Bethe lattice implies that the SAW-expansion for off-diagonal elements of the resolvent extends to {\em{all}} $z \in \mathbb{C} \setminus \mathbb{R}$. Thus, for all $z \in \mathbb{C}$ with $\Im z > 0$, we obtain
\begin{align}
\dfrac{1}{L} \left\{ \right. & \left. \log    \left\vert G \left(z; 0, \gamma(L-1)\right) \right\vert -  \log \left\vert G_{L+\mathfrak{e}(L)}\left(z; 0, \gamma(L-1)\right) \right\vert \right\} \nonumber \\
& = \dfrac{1}{L} \sum_{k=0}^{L-1} \left\{ \log \left\vert G_{ \mathbb{B} \setminus \gamma([0, k-1])   }(z; \gamma(k), \gamma(k)) \right\vert  - \log \left\vert G_{ L+\mathfrak{e}(L); \mathbb{B} \setminus \gamma([0, k-1])   }(z; \gamma(k), \gamma(k)) \right\vert   \right\} ~\mbox{.} \label{eq_proof_thm_analytic-miracles_LE-start}
\end{align}

Further, recall that for every bounded self-adjoint operator $H$ on a Hilbert space $\mathscr{H}$, the resolvent is an operator-valued Herglotz function so that for all $\psi \in \mathscr{H}$ and $z = E+i\epsilon$ with $\epsilon > 0$, one has
\begin{align} \label{eq_proof_thm_analytic-miracles_bounds-greens-diag}
\dfrac{\epsilon}{\epsilon^2 + (\vert E \vert + \Vert H \Vert)^2 } & \leq \Im \left\langle\psi , (H- (E+i\epsilon))^{-1} \psi \right\rangle \nonumber \\
& \leq \left\vert \left\langle\psi , (H- (E+i\epsilon))^{-1} \psi \right\rangle \right\vert  \leq \dfrac{1}{\epsilon} ~\mbox{.}
\end{align}
In particular, letting $z \mapsto \log(z)$ denote the main branch of the complex logarithm with domain of analyticity given by $\mathbb{C} \setminus(-\infty, 0]$, we can view the right-hand side of (\ref{eq_proof_thm_analytic-miracles_LE-start}) as the real part of the function
\begin{equation}
g_L(z):= \dfrac{1}{L} \sum_{k=0}^{L-1} \left\{ \log G_{ \mathbb{B} \setminus \gamma([0, k-1])   }(z; \gamma(k), \gamma(k))   - \log G_{ L+\mathfrak{e}(L); \mathbb{B} \setminus \gamma([0, k-1])   }(z; \gamma(k), \gamma(k))   \right\} ~\mbox{,}
\end{equation}
which is analytic for all $\Im(z) > 0$. Following, we will prove that for all $z$ with $\Im(z) > 0$, one has
\begin{equation} \label{eq_proof_thm_analytic-miracles_LE-reduction}
g_L(z) \to 0 ~\mbox{, $L \to \infty$ ,}
\end{equation}
in which case, 
\begin{equation}
\lim_{L \to \infty} \dfrac{1}{L} \sum_{k=0}^{L-1} \left\{ \log \left\vert G_{ \mathbb{B} \setminus \gamma([0, k-1])   }(z; \gamma(k), \gamma(k)) \right\vert  - \log \left\vert G_{ L+\mathfrak{e}(L); \mathbb{B} \setminus \gamma([0, k-1])   }(z; \gamma(k), \gamma(k)) \right\vert   \right\} = 0~\mbox{, }
\end{equation}
for all $z \in \mathbb{C}$ with for all $\Im(z) > 0$; by (\ref{eq_proof_thm_analytic-miracles_LE-start}) this will establish our claim in (\ref{eq_thm_analytic-miracles_LE}).

To prove (\ref{eq_proof_thm_analytic-miracles_LE-reduction}), first observe that for each $0< \delta_1 < \delta_2$, the bounds in (\ref{eq_proof_thm_analytic-miracles_bounds-greens-diag}) and (\ref{eq_trivial-norm-bounds-Laplacian}) show that for each compact subset $A \subseteq \{\delta_1 \leq \Im(z) \leq \delta_2\}$, there exist constants $0 < C_1 < C_2 < +\infty$ such that for all $L \in \mathbb{N}$ and all $0 \leq k \leq L-1$, one has 
\begin{align}
C_1 \leq \Im G_{ L+\mathfrak{e}(L); \mathbb{B} \setminus \gamma([0, k-1]}(z; \gamma(k), \gamma(k)) \leq  \vert G_{ L+\mathfrak{e}(L); \mathbb{B} \setminus \gamma([0, k-1]}(z; \gamma(k), \gamma(k)) \vert \leq C_2 ~\mbox{, } \nonumber \\
C_1 \leq \Im G_{\mathbb{B} \setminus \gamma([0, k-1]}(z; \gamma(k), \gamma(k)) \leq \vert G_{\mathbb{B} \setminus \gamma([0, k-1]}(z; \gamma(k), \gamma(k)) \vert \leq C_2 ~\mbox{, } \label{eq_bounds_lower-upper-resolvent-diagonal}
\end{align}
for all $z \in A$. Since by analyticity, $\log(z)$ is Lipschitz on compact subsets of $\{\Im(z) >0\}$, the bounds in (\ref{eq_bounds_lower-upper-resolvent-diagonal}) imply that for each compact subset $A \subseteq \{\delta_1 \leq \Im(z) \leq \delta_2\}$, there thus exists a Lipschitz constant $C$ such that
\begin{align}
\left\vert \right. & \left. \log G_{ \mathbb{B} \setminus \gamma([0, k-1])   }(z; \gamma(k), \gamma(k))   - \log G_{ L+\mathfrak{e}(L); \mathbb{B} \setminus \gamma([0, k-1])   }(z; \gamma(k), \gamma(k))    \right\vert \nonumber \\
& \leq C \left\vert G_{ \mathbb{B} \setminus \gamma([0, k-1])   }(z; \gamma(k), \gamma(k))   - G_{ L+\mathfrak{e}(L); \mathbb{B} \setminus \gamma([0, k-1])   }(z; \gamma(k), \gamma(k))    \right\vert ~\mbox{,} \label{eq_proof_thm_analytic-miracles_LE_Lipschitz-bound}
\end{align}
for all $z\in A$. In particular, (\ref{eq_proof_thm_analytic-miracles_LE_Lipschitz-bound}) shows that $\{g_L ~,~ L \in \mathbb{N} \}$ forms a normal family of analytic functions on $\{ \Im z > 0\}$.

Similar to part (i) of the proof, an argument based on the RW-expansion results in the upper bound in (\ref{eq_thm_analytic-miracles_keybound}) for the absolute value on the right-hand side of (\ref{eq_proof_thm_analytic-miracles_LE_Lipschitz-bound}). Thus, we can conclude that (\ref{eq_proof_thm_analytic-miracles_LE-reduction}) holds for $z \in \mathbb{C}$ with $\Im z > K$, which, as before, extends to all of $\{\Im z > 0\}$ by Vitali's convergence theorem. 
\end{itemize}
\end{proof}

In view of our later applications to the density of states measure, part (i) of Theorem \ref{thm_analytic-miracles} has an immediate consequence formulated below in Corollary \ref{coro_DOSm-wek-conv}. Here, it is useful to observe that the left-hand side of (\ref{eq_thm_analytic-miracles_DOS}) can be rewritten in terms of traces in the form,
\begin{align}
\frac{1}{\vert \mathcal{A}_L \vert} \mathrm{tr}( P_{\mathcal{A}_L} G(z; x, x) ) & = \frac{1}{\vert \mathcal{A}_L \vert} \sum_{x \in \mathcal{A}_L} G(z; x,x) ~\mbox{,} \\
\frac{1}{\vert \mathcal{A}_L \vert} \mathrm{tr}( P_{\mathcal{A}_L} G_{L+ \mathfrak{e}(L)}(z; x, x) ) & = \frac{1}{\vert \mathcal{A}_L \vert} \sum_{x \in \mathcal{A}_L} G_{L + \mathfrak{e}(L)}(z;x,x) ~\mbox{.}
\end{align}
Following, $\mathcal{C}_0(\mathbb{R}; \mathbb{C})$ denotes the space of complex-valued continuous functions on $\mathbb{R}$ vanishing at infinity and $\mathcal{C}_c^\infty(\mathbb{R}; \mathbb{C})$ the complex-valued and compactly supported smooth functions on $\mathbb{R}$.
\begin{corollary} \label{coro_DOSm-wek-conv}
For every $f \in \mathcal{C}_0(\mathbb{R}; \mathbb{C})$, we have
\begin{equation} \label{eq_coro_DOSm-wek-conv_claim}
\lim_{L \to \infty} \frac{1}{\vert \mathcal{A}_L \vert}  \left[ \mathrm{tr}( P_{\mathcal{A}_L}  f(H) ) -  \mathrm{tr}( P_{\mathcal{A}_L}  f(H_{L+\mathfrak{e}(L)}) )  \right] = 0 ~\mbox{.}
\end{equation}
\end{corollary}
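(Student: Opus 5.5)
The plan is to pass from the resolvents in Theorem \ref{thm_analytic-miracles}(i) to general $f \in \mathcal{C}_0(\mathbb{R};\mathbb{C})$ by a density argument. For $f \in \mathcal{C}_0$ define the bounded linear functional
\begin{equation*}
\Phi_L(f) := \frac{1}{\vert \mathcal{A}_L \vert} \left[ \mathrm{tr}( P_{\mathcal{A}_L} f(H) ) - \mathrm{tr}( P_{\mathcal{A}_L} f(H_{L+\mathfrak{e}(L)}) ) \right] = \frac{1}{\vert \mathcal{A}_L \vert} \sum_{x \in \mathcal{A}_L} \left\{ \langle \delta_x, f(H) \delta_x\rangle - \langle \delta_x, f(H_{L+\mathfrak{e}(L)}) \delta_x \rangle \right\}.
\end{equation*}
By the spectral theorem, $\vert \langle \delta_x, g(A)\delta_x\rangle \vert \leq \Vert g \Vert_\infty$ for any self-adjoint $A$. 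Hence $\vert \Phi_L(f) \vert \leq 2 \Vert f \Vert_\infty$ uniformly in $L$. Consequently, the set of $f$ with $\Phi_L(f) \to 0$ is a closed linear subspace of $(\mathcal{C}_0, \Vert\cdot\Vert_\infty)$, by a standard $\varepsilon/3$ argument. It therefore suffices to verify the claim on a set whose linear span is uniformly dense in $\mathcal{C}_0$.

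As that set I take the resolvent functions $r_z(E) := (E - z)^{-1}$ for $z \in \mathbb{C}\setminus\mathbb{R}$. For $\Im z > 0$, $\Phi_L(r_z) \to 0$ is exactly (\ref{eq_thm_analytic-miracles_DOS}). For $\Im z < 0$, observe that $r_{\bar z} = \overline{r_z}$ and $\overline{G(z;x,x)} = G(\bar z; x,x)$, the latter because $H$ and $H_{L+\mathfrak{e}(L)}$ are self-adjoint. Thus $\Phi_L(r_{\bar z}) = \overline{\Phi_L(r_z)} \to 0$.

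The family $\{r_z\}$ separates points of $\mathbb{R}$ and vanishes nowhere. Moreover, its span is closed under products by the resolvent identity $r_z r_w = (r_z - r_w)/(z-w)$ for $z\neq w$; the remaining case $r_z^2 = \partial_z r_z$ is a uniform limit of such difference quotients. The span is also closed under complex conjugation. By the locally compact Stone--Weierstrass theorem, the span is therefore dense in $\mathcal{C}_0(\mathbb{R};\mathbb{C})$.

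The only step requiring care is this density and the product closure involving $r_z^2$. As a fallback, I would instead approximate $f \in \mathcal{C}_c^\infty$ via the Poisson kernel, $f * P_\epsilon(E) = \frac{1}{\pi}\int f(t)\, \Im r_{t+i\epsilon}(E)\,\ud t$. Since $\Phi_L(r_{t+i\epsilon})$ is bounded by $2/\epsilon$, dominated convergence in $t$ gives $\Phi_L(f*P_\epsilon)\to 0$. Then $f * P_\epsilon \to f$ uniformly, and density of $\mathcal{C}_c^\infty$ in $\mathcal{C}_0$ together with the uniform bound on $\Phi_L$ finishes the argument.
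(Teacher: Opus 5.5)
Your proof is correct. It shares its first step with the paper: both use the uniform bound $\vert \Phi_L(f)\vert \le 2\Vert f\Vert_\infty$ to reduce to a dense class of test functions. The routes then diverge.

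\begin{itemize}
\item \textbf{Paper.} It takes $f\in\mathcal{C}_c^\infty$ and writes $f(H)$ and $f(H_{L+\mathfrak{e}(L)})$ via the Helffer--Sj\"ostrand formula, using an almost analytic extension with $\vert\partial_{\bar z}\tilde f\vert\le C_f\vert y\vert$. Together with $\vert G\vert\le 1/\vert \Im z\vert$ this makes the integrand bounded. It then applies dominated convergence on $\mathbb{R}^2$, with Theorem \ref{thm_analytic-miracles}(i) giving pointwise convergence in $z$.
\item \textbf{You.} You verify the claim directly on the resolvent functions $r_z$ and then invoke Stone--Weierstrass. Your fallback instead uses Poisson smoothing of $f\in\mathcal{C}_c$, which is the Stieltjes-inversion form of the same idea and is equally rigorous.
\end{itemize}

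Your route is softer and more elementary. It needs no almost analytic extensions and no interchange of the trace with a two-dimensional integral, only the algebraic resolvent identity and a standard density theorem.

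You also explicitly handle $\Im z<0$ via $\Phi_L(r_{\bar z})=\overline{\Phi_L(r_z)}$. The paper needs this too, since its Helffer--Sj\"ostrand integral runs over all $y\neq 0$ while Theorem \ref{thm_analytic-miracles}(i) is stated only for $\Im z>0$, but it uses it tacitly.

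One small phrasing point: the linear span of $\{r_z\}$ is not itself an algebra. The function $r_z^2$ has a double pole, so it is not a finite combination of the $r_w$. Stone--Weierstrass should therefore be applied to the uniform closure of the span. Your resolvent identity, together with the uniform convergence $r_{z+h}r_z\to r_z^2$, shows this closure is a closed, conjugation-invariant, point-separating, nowhere-vanishing subalgebra of $\mathcal{C}_0(\mathbb{R};\mathbb{C})$. That is exactly what your argument establishes.
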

To keep the paper self-contained we include a short proof of Corollary \ref{coro_DOSm-wek-conv} below. Our proof uses the Helffer-Sj\"ostrand functional calculus which gives an explicit formula for functions of a self-adjoint operator in terms of its resolvent. To this end, given a smooth function of compact support $f \in \mathcal{C}_c^\infty(\mathbb{R}; \mathbb{C})$, we let $\widetilde{f}: \mathbb{R}^2 \to \mathbb{C}$ denote an almost-analytic extension of $f$ which is compactly supported on $\mathbb{R}^2$ and satisfies 
\begin{align} \label{eq_almostanalytic}
\widetilde{f}(x,0) &= f(x) ~\mbox{, for all } x \in \mathbb{R} ~\mbox{, } \nonumber \\
\left\vert \partial_{\overline{z}} \widetilde{f}(x,y) \right\vert & \leq C_f \vert y \vert ~\mbox{, for all } x \in \mathbb{R} ~\mbox{, } 0 < \vert y \vert \leq 1 ~\mbox{,}
\end{align}
where $C_f$ is a constant depending on the function $f$ and, as common, $\partial_{\overline{z}} = \frac{1}{2} ( \partial_x + i \partial_y )$. Given such an almost analytic extension of $f$, it is well known that for every self-adjoint operator $H$, one has the following explicit expression of $f(H)$ in terms of the resolvent:
\begin{equation} \label{eq_HelfferS}
f(H) = \dfrac{1}{\pi} \int_{-\infty}^{+\infty} \int_{-\infty}^{+\infty} \partial_{\overline{z}} \widetilde{f}(x,y)~ G(z) ~\ud x \ud y ~\mbox{.}
\end{equation}
For existence of such almost analytic extensions and the Helffer-Sj\"ostrand functional calculus, we refer the reader e.g. to \cite{davies1}; see Chapter 2 therein.
\begin{proof}
First observe that by density, it suffices to prove the claim for $f \in \mathcal{C}_c^\infty(\mathbb{R}; \mathbb{C})$ since for every bounded self-adjoint operator $A$ on $\ell^2(\mathbb{B}; \mathbb{C})$, the continuous functional calculus implies
\begin{equation}
\left\vert \frac{1}{\vert \mathcal{A}_L \vert}  \mathrm{tr}( P_{\mathcal{A}_L}  ~g(A) ) \right\vert \leq \Vert g \Vert_\infty ~\mbox{, }
\end{equation}
for all $L \in \mathbb{N}$ and all $g \in \mathcal{C}_c(\mathbb{R}; \mathbb{C})$. Given $f \in \mathcal{C}_c^\infty(\mathbb{R}; \mathbb{C})$, using (\ref{eq_HelfferS}), we can represent
\begin{equation} \label{eq_coro_DOSm-wek-conv_HS}
 \frac{1}{\vert \mathcal{A}_L \vert} \mathrm{tr}(P_{\mathcal{A}_L} f(H_{L+\mathfrak{e}(L)})) = \dfrac{1}{\pi} \int_{-\infty}^{+\infty} \int_{-\infty}^{+\infty} \partial_{\overline{z}} \widetilde{f}(x,y) ~\frac{1}{\vert \mathcal{A}_L \vert}    \mathrm{tr}(P_{\mathcal{A}_L}    G_{L+\mathfrak{e}(L)}(z) ) ~\ud x \ud y ~\mbox{,}
\end{equation}
for each $L \in \mathbb{N}$ and similarly for $H_{L+\mathfrak{e}(L)}$ replaced by $H$. 

Notice that by (\ref{eq_almostanalytic}), we have
\begin{equation} \label{eq_coro_BC}
\left\vert \partial_{\overline{z}} \widetilde{f}(x,y) \frac{1}{\vert \mathcal{A}_L \vert}    \mathrm{tr}(P_{\mathcal{A}_L}  G_{L + \mathfrak{e}(L)}(z) ) \right\vert \leq C_f ~\mbox{, for $0 < \vert y \vert \leq 1$ and all $x \in \mathbb{R}$ .}
\end{equation}
Thus, since the integrand in (\ref{eq_coro_DOSm-wek-conv_HS}) is compactly supported, the claim in (\ref{eq_coro_DOSm-wek-conv_claim}) follows from part (i) of Theorem \ref{thm_analytic-miracles} and the dominated convergence theorem. Here, we observe that (\ref{eq_thm_analytic-miracles_DOS}) and (\ref{eq_coro_BC}) only exclude the real axis which is of zero Lebesgue measure in $\mathbb{R}^2$.
\end{proof}


\section{The Lyapunov Exponent for Ergodic Schr\"odinger Operators on the Bethe Lattice} \label{sec_dos_LE1}
\setcounter{equation}{0}
We start by introducing ergodic Schr\"odinger operators on $\mathbb{B}$ in Section \ref{sec_erogic-structure}. Underlying this definition will be the explicit representation of vertices through the action of the graph isomorphisms $\tau_1$ and $\tau_2$ given in Proposition \ref{prop_representation_vertices} which, for each $x \in \mathcal{V}$, gave rise to the generalized shift $\tau_x$ defined in (\ref{eq_generalized-shift-vertex}). Section \ref{subsec:lyapunov1} will then utilize this ergodic structure to encode the information about paths and thereby examine the limiting behavior of the exponential decay of the Green function (related to the Lyapunov exponent) and of averages of spectral quantities $f(H)$, for $f \in \mathcal{C}_c(\mathbb{R}; \mathbb{C})$, over finite path segments (related to the density of states measure). Both will form the key ingredients in our proof of the modified Thouless formula which we present in Section \ref{sec:thouless1}.

\subsection{Ergodic Structure and the Density of States} \label{sec_erogic-structure}
We let $(\Omega, \mathcal{F}, \mathbb{P})$ denote a fixed probability space. Our hypotheses for the definition of ergodic Schr\"odinger operators on the Bethe lattice are as follows: 

\medskip

\begin{itemize}
\item[(H1)] \textit{We assume that there exist two measure preserving and invertible transformations  $T_1, T_2$ on $(\Omega, \mathcal{F}, \mathbb{P}),$ and that $T_1$ is also ergodic. Given an arbitrary vertex $(0) \neq x \in \mathcal{V}$, we use the generalized shift $\tau_x$ on $\mathbb{B}$, introduced in (\ref{eq_generalized-shift-vertex}), to define an associated measure preserving and invertible transformation on $(\Omega, \mathcal{F}, \mathbb{P})$ given by
\begin{equation} \label{eq_transformations-1}
T_x: \Omega \to \Omega ~\mbox{, } T_x:=  (T_2^{d_1} T_1) (T_2^{d_2} T_1) \dots (T_2^{d_{\ell}} T_1) ~\mbox{,}
\end{equation}
where the exponents $d_1, \dots, d_\ell$ are determined by (\ref{eq_prop_representation_formulas}). If $x = (0)$, we let $T_{(0)}$ denote the identity map on $(\Omega, \mathcal{F}, \mathbb{P})$.}
\end{itemize}

\medskip

We note that our assumptions on the maps $T_1$ and $T_2$ imply the family of maps $\mathcal{T}:=\{T_x ~,~ x \in \mathcal{V} \}$ forms an {\em{ergodic family}}. Here, we recall that $\mathcal{T}$ is called an ergodic family if 1.) $\mathcal{T}$ is a measure preserving family (i.e., every element of $\mathcal{T}$ is a measure preserving map), and 2.) if every measurable set invariant under {\em{all}} elements of $\mathcal{T}$ necessarily has probability zero or one. Since we assumed that $T_1$ is an ergodic map, the family $\mathcal{T}$ forms an ergodic family.

\medskip

\begin{itemize}
    \item[(H2)] \textit{We assume that the family $\mathcal{T}$ satisfies the following {\em{consistency relations}}, which relate the elements of $\mathcal{T}$ via the family of generalized shifts on $\mathbb{B}$ by:}
\begin{equation} \label{eq_hyp_erog-Schr_consistency}
T_x T_y = T_{\tau_x(y)} ~\mbox{, for all $x,y \in \mathcal{V}$ .}
\end{equation}
\end{itemize}

\medskip
The consistency relations required by (\ref{eq_hyp_erog-Schr_consistency}) mirror the compositional property for the family of generalized shifts $\{\tau_x ~,~ x \in \mathcal{V}\}$ in (\ref{eq_compositional-law}). We note that assuming (\ref{eq_hyp_erog-Schr_consistency}) automatically implies that the following relations involving inverses also hold true:
\begin{equation} \label{eq_hyp_erog-Schr_consistency-1}
 T_x^{-1} T_y = T_{\tau_x^{-1}(y)} ~\mbox{, for all $x,y \in \mathcal{V}$ .}
 \end{equation}
Indeed, (\ref{eq_hyp_erog-Schr_consistency-1}) parallels the compositional property involving inverses for the generalized shifts in (\ref{eq_compositional-law-1}); the proof of (\ref{eq_hyp_erog-Schr_consistency-1}) using (\ref{eq_hyp_erog-Schr_consistency}) follows from an analogous argument than what we used earlier to establish (\ref{eq_compositional-law-1}) from (\ref{eq_compositional-law}).

\medskip


\begin{itemize} 
\item[(H3)] \textit{We assume that $v \in L^\infty(\Omega; \mathbb{R})$ is a real-valued function which will serve as the {\em{sampling function}} which generates the potential of the ergodic Schr\"odinger operator. Given the ergodic family $\mathcal{T}$ and the function $v$, for each realization $\omega \in \Omega$, we define the potential} $V_\omega \in \ell^\infty(\mathbb{B}; \mathbb{R})$ by
\begin{equation} \label{eq_ergodic-pot}
[V_\omega](x) := v(T_x^{-1} ~\omega) ~\mbox{.}
\end{equation}
\end{itemize}
\medskip

Assuming the set-up described in hypotheses (H1)--(H3), we define the associated ergodic Schr\"odinger operator on $\mathbb{B}$ as the family of operators
\begin{equation} \label{eq_ergodic_SOP}
H_\omega = \Delta_\mathbb{B} + V_\omega ~\mbox{, }
\end{equation}
whose elements are indexed by the possible realizations $\omega \in \Omega$. Crucial for the collective spectral properties of ergodic Schr\"odinger operators are the {\em{covariance relations}} which will relate realizations of the operators $H_\omega$ via the action of the elements of the family $\mathcal{T}$. We note that the covariance relations rely on the consistency relations in (\ref{eq_hyp_erog-Schr_consistency}) of hypothesis (H2). 

To formulate the covariance relations for ergodic Schr\"odinger operators on $\mathbb{B}$, we first use the family of generalized shifts $\{\tau_x ~,~ x \in \mathcal{V}\}$ to define the associated family of unitary operators $U_x: \ell^2(\mathbb{B}; \mathbb{C}) \to \ell^2(\mathbb{B}; \mathbb{C})$, for $x \in \mathcal{V}$, by 
\begin{equation} \label{eq_unitaries}
[U_x \psi](y) = \psi(\tau_x^{-1} ~y) ~\mbox{, $y \in \mathcal{V}$ .}
\end{equation}
In particular, the definition in (\ref{eq_unitaries}) implies that for each vertex $x \in \mathcal{V}$, one has 
\begin{equation}
U_x \delta_{(0)} = \delta_x ~\mbox{.}
\end{equation}
Observe that the definition of the Laplacian on $\mathbb{B}$ in (\ref{eq_laplacian}) implies the commutator
\begin{equation} \label{eq_commutator_unitary-laplacian}
\Delta_\mathbb{B} U_x - U_x \Delta_\mathbb{B} = 0 ~\mbox{, for each $x \in \mathcal{V}$ .}
\end{equation}
 
Using the definition of the potential in (\ref{eq_ergodic-pot}) and the consistency relations in (\ref{eq_hyp_erog-Schr_consistency}), it is straight forward to check that 
\begin{equation}
U_x V_\omega U_x^{-1} = V_{T_x \omega} ~\mbox{, for each $x \in \mathcal{V}$ and $\omega \in \Omega$ ,}
\end{equation}
which, by (\ref{eq_commutator_unitary-laplacian}), results in the desired covariance relations for ergodic Schr\"odinger operators on $\mathbb{B}$, given by
\begin{equation}  \label{eq_covariance}
U_x H_\omega U_x^{-1} = H_{T_x \omega} ~\mbox{, for each $x \in \mathcal{V}$ and $\omega \in \Omega$ .}
\end{equation}
These covariance relations guarantee the existence of the deterministic spectrum $\Sigma \subset \mathbb{R}$ for the family of ergodic \Schr operators $H_\omega$, a result due to Pastur \cite{pastur1}, see also \cite[chapter 9]{CyconFroeseKirschSimon_book_1987}.

For later purposes, we observe that by the continuous functional calculus, (\ref{eq_covariance}) extends to all functions $f \in \mathcal{C}_0(\mathbb{R};\mathbb{C})$ vanishing at infinity:
\begin{equation} \label{eq_covariance-relation}
U_x f(H_\omega) U_x^{-1} = f(H_{T_x \omega}) ~\mbox{, for each $x \in \mathcal{V}$, $\omega \in \Omega$ .}
\end{equation}

{\em{Random Sch\"odinger operators with independent identically distributed (iid) potential}} sequence forms an important special case of this ergodic framework. For this special case, we let $\mu$ be a fixed Borel probability measure on $\mathbb{R}$ supported on $[-C,C]$, for some $0 < C < +\infty$. The measure $\mu$ is usually referred to as the single-site probability measure. As a probability space, we consider the sequence space 
\begin{equation} \label{eq_random_sequence-space}
\Omega = [-C,C]^\mathcal{V} ~\mbox{, for a fixed constance $C>0$ ,}
\end{equation}
equipped with the usual cylinder sigma algebra $\mathcal{F}$ and the product measure
\begin{equation}
\mathbb{P} = \bigotimes_{\mathcal{V}} \mu ~\mbox{.}
\end{equation}

We introduce the maps $T_1, T_2$ on $(\Omega, \mathcal{F}, \mathbb{P})$, required by hypothesis (H1), as follows: for a sequence $\omega=(\omega(x))_{x \in \mathcal{V}} \in \Omega$ and $1 \leq j \leq 2$, we let
\begin{equation}
[T_j \omega](x) = \omega(\tau_j^{-1} x) ~\mbox{, for each $x \in \mathcal{V}$ .}
\end{equation}
By the definition of the cylinder sigma algebra, $T_1$ and $T_2$ are both measure preserving. Moreover, since $\tau_1$ is a generalized translation between the levels of $\mathbb{B}$, $T_1$ is strongly mixing and hence ergodic. In particular, the family $\{T_x ~,~ x \in \mathcal{V} \}$ constitutes an ergodic family as required by hypothesis (H1).

For the sampling function $v$ that generates the random potential according to (\ref{eq_ergodic-pot}) of hypothesis (H3), we consider the evaluation map
\begin{equation}
v: [-C,C]^\mathcal{V} \to [-C,C]^\mathcal{V} ~\mbox{, }  v(\omega) = \omega(0) ~\mbox{,}
\end{equation}
in which case, (\ref{eq_ergodic-pot}) implies
\begin{equation}
[V_\omega](x) = \omega(x) ~\mbox{, for all } x \in \mathcal{V} ~\mbox{.}
\end{equation}

Finally, the consistency relations (\ref{eq_hyp_erog-Schr_consistency}) required by hypothesis (H2) follow directly from the compositional properties of the shift maps in (\ref{eq_compositional-law})--(\ref{eq_compositional-law-1}).  Indeed, using (\ref{eq_compositional-law}), we have for each $\omega \in \Omega$ and $x,y,z \in \mathcal{V}$ that
\begin{align}
[T_x T_y \omega](z) & = [T_y \omega](\tau_x^{-1}(z)) = \omega((\tau_y^{-1} \circ \tau_x^{-1})(z)) = \omega( (\tau_x \circ \tau_y)^{-1}(z)) \nonumber \\
   & = \omega( \tau_{\tau_x(y)}^{-1}(z) ) = [T_{\tau_x(y)} \omega](z) ~\mbox{.}
\end{align}

To conclude this section, we define the {\em{density of states measure}} associated with the ergodic operator (\ref{eq_ergodic_SOP}) and the underlying sampling function $v$ as the unique (compactly supported) Borel measure $\ud n_v$ on $\mathbb{R}$ such that, for all $f \in \mathcal{C}_c(\mathbb{R}; \mathbb{C})$, one has
\begin{equation} \label{eq_dosm_defn}
\int f(E) ~\ud n_v(E) = \mathbb{E} \langle \delta_{(0)} , f(H_\omega) \delta_{(0)} \rangle ~\mbox{.}
\end{equation}
Observe that the covariance relations for functions of the operator in (\ref{eq_covariance-relation}) imply that 
\begin{equation}
\int f(E) ~\ud n_v(E) = \mathbb{E} \langle \delta_{x} , f(H_\omega) \delta_{x} \rangle ~\mbox{, for all $x \in \mathcal{V}$ ,}
\end{equation}
so that, for a full measure set of realizations $\omega \in \Omega$, one has
\begin{equation}\label{det_spectrum1}
\Sigma:= \mathrm{supp} (\ud n_v) = \sigma(H_\omega) ~\mbox{,}
\end{equation}
so that $\Sigma$ is the deterministic spectrum of the ergodic Schr\"odinger operator $H_\omega$.



\subsection{Exponential Decay of the Green Function and the Lyapunov Exponent} \label{subsec:lyapunov1}

In analogy to the theory of discrete Schr\"odinger operators on $\mathbb{Z}$, given a deterministic Schr\"odinger operator $H$ on $\mathbb{B}$ with connectivity $\kappa$, we define the collection of $\kappa + 1$ {\em{Weyl $m$-functions}} as follows: for $0 \leq j \leq \kappa $ and $z \in \mathbb{C}\setminus \mathbb{R}$, we let
\begin{equation}\label{eq:m_fnc1}
{M}_{j}(z) := \langle \delta_{y}, (H_y^{+; (0)} - z)^{-1} \delta_y \rangle ~\mbox{, for $y = (0,j)$ ,}
\end{equation}
where $H_y^{+; (0)}$ is associated with the rooted tree with root at $y$, obtained by decoupling $H$ at $(0)$; see (\ref{eq_rooted_hamiltonian_zero}). 

For an ergodic Schr\"odinger operator $H_\omega$ on $\mathbb{B}$, we use the notation ${M}_{j; \omega}(z)$ for the $\kappa + 1$ Weyl $m$-functions, and define the Lyapunov exponent as follows:
\begin{definition}\label{defn:lyapunov_defn1}
Given an ergodic Schr\"odinger operator on $\mathbb{B}$ satisfying (H1)--(H3), we define its {\em{Lyapunov exponent}} by 
\begin{equation} \label{eq_defn_LyapExp}
 \mathcal{L}(z): = - \mathbb{E} \{ \log \vert {M}_{0; \omega} (z) \vert \} = - \mathbb{E} \{ \log \vert G_{0; \omega} (z; y,y) \vert\} ~ ~\mbox{, for $y = (0,0)$ .}
\end{equation}
\end{definition}
Observe that for $0 \leq j \leq \kappa$, one has 
\begin{equation}
\delta_{(0,j)} = \delta_{\tau_2^{j}(0,0)} ~\mbox{, }
\end{equation}
so that from the covariance relation in (\ref{eq_covariance-relation}) and since $T_2$ is measure preserving, we can conclude that
\bea\label{eq_defn_LyapExp-rot}
\mathbb{E} \{  \log \vert {M}_{j; \omega} (z) \vert \} &  = & \mathbb{E} \{ \log \vert \langle \delta_{(0,0)}, (H_{(0,0);T_2^{-j} \omega} ^{+; (0)} - z)^{-1} \delta_{(0,0)} \rangle \vert \}
 \nonumber \\
  & = & \mathbb{E} \{ \log \vert {M}_{0; \omega} (z) \vert \}  \nonumber \\
   & =  & - \mathcal{L}(z) ~\mbox{.}
\eea
This shows that the Lyapunov exponent does not depend on which of the $(\kappa + 1)$ Weyl $m$-function is used in Definition \ref{defn:lyapunov_defn1}.

For discrete Sch\"rodinger operators on $\mathbb{Z}$ (or equivalently, Schr\"odinger operators on $\mathbb{B}$ with $\kappa = 1$) and $z \in \mathbb{C}\setminus \mathbb{R}$, (\ref{eq_defn_LyapExp}) is indeed equivalent to the more common definition of the Lyapunov exponent using transfer matrices; see e.g. \cite{CyconFroeseKirschSimon_book_1987} or \cite{damanik1}. For ergodic Schr\"odinger operators on $\mathbb{Z}$, the Lyapunov exponent plays a central role in proofs of localization, in which case it is associated with the exponential decay-rate of the Green function. Generalizing to the Bethe lattice ($\kappa \geq 2$), given an infinite path $\gamma: \mathbb{N}_0 \to \mathbb{B}$ with $\gamma(0) = (0)$, one would therefore expect that for a typical realization $\omega$, 
\begin{equation} \label{eq_path-limit}
\dfrac{1}{L} \log \left\vert G_\omega(z; \gamma(0), \gamma(L-1) ) \right\vert \to - \mathcal{L}(z) ~\mbox{, as $L \to \infty$ .}
\end{equation}

Indeed, for the case of a random Schr\"odinger operator with iid potentials as considered in Section \ref{sec_erogic-structure}, Aizenman and Warzel prove that for each fixed path there exists a full measure set of realizations $\omega \in \Omega$, so that (\ref{eq_path-limit}) holds \cite[Theorem 5.2]{aw_JEMS_2013}. We note that in contrast to ergodic Schr\"odinger operators on $\mathbb{Z}$, the claimed relation in (\ref{eq_path-limit}) for ergodic Schr\"odinger operators on $\mathbb{B}$ does not merely follow from a suitable version of the ergodic theorem. In \cite{aw_JEMS_2013}, the authors use large deviation estimates \cite[Theorem B.1]{aw_JEMS_2013} to prove the almost sure convergence in (\ref{eq_path-limit}).

To appreciate the complication when analyzing limits along paths of the type (\ref{eq_path-limit}), observe that (\ref{eq_infinite-path-labeling}) and the explicit representation of vertices in Proposition \ref{prop_representation_vertices} sets up a one-to-one correspondence between infinite paths $\gamma: \mathbb{N}_0 \to \mathbb{B}$ with $\gamma(0) = (0)$ and infinite sequences with elements chosen from the $(\kappa+1)$ automorphisms on $\mathbb{B}$,
\begin{equation} \label{eq_path-autom}
\sigma_j: = \tau_2^j \tau_1 ~\mbox{, } 0 \leq j \leq \kappa ~\mbox{,}
\end{equation}
so that for each $\ell \in \mathbb{N}$, one has
\begin{equation} \label{eq_infinite-path-labeling-1}
\gamma(\ell) = \sigma_{d_1} \dots  \sigma_{d_\ell} (0) ~\mbox{,}
\end{equation}
where the exponents
\begin{equation} \label{eq_infinite-path-labeling-2}
d_1 \in \{0, \dots, \kappa\} ~\mbox{, } d_j \in \{0, \dots, \kappa - 1\} ~\mbox{, for all $2 \leq j \leq \ell$ ,}
\end{equation}
stabilize from left to right and are explicitly determined by (\ref{eq_prop_representation_formulas}). Given the definition of the ergodic family $\{T_x ~,~ x \in \mathcal{V} \}$ in (\ref{eq_transformations-1}), we can associate with each of the $(\kappa + 1)$ automorphisms in (\ref{eq_path-autom}) a measure preserving map 
\begin{equation}
S_j : = T_2^j T_1 ~\mbox{, } 0 \leq j \leq \kappa ~\mbox{.}
\end{equation}
Thus, letting
\begin{equation}
\mathcal{C}_n:=\{ S_j ~:~ 0 \leq j \leq n-1\} ~\mbox{, } n \in \mathbb{N} ~\mbox{,}
\end{equation}
we obtain a one-to-one correspondence which associates with each infinite path $\gamma: \mathbb{N}_0 \to \mathbb{B}$ satisfying $\gamma(0) = (0)$ an element of the sequence space
\begin{equation}
\mathfrak{S}_\kappa:= \mathcal{C}_{\kappa+1} \times \left\{ \prod_{2 \leq j} \mathcal{C}_{\kappa} \right\}  ~\mbox{.}
\end{equation}

This point of view of infinite paths allows us the express the right hand side of (\ref{eq_path-limit}) as follows: If $\gamma: \mathbb{N}_0 \to \mathcal{V}$, $\gamma(0) = (0)$, is an infinite path associated with the sequence
$(S_{d_j})_{j \in \mathbb{N}} \in \mathfrak{S}_\kappa$ so that (\ref{eq_infinite-path-labeling-1})--(\ref{eq_infinite-path-labeling-2}) hold, then the SAW-expansion for the off-diagonal elements of the Green function yields
\begin{align}
\dfrac{1}{L} \log \left\vert G_\omega(z; \gamma(0), \gamma(L-1) ) \right\vert & = \dfrac{1}{L} \log \left\vert  G_\omega(z; (0), (0) ) \right\vert \nonumber \\
 & + \dfrac{1}{L} \sum_{j=1}^{L-1} \log \left\vert \left\langle \delta_{\gamma(j)}, G_{\mathbb{B} \setminus \gamma([0,j-1]); \omega}(z)  \delta_{\gamma(j)} \right\rangle \right\vert ~\mbox{,} 
\label{eq_prop_lyap_through_green_saw}
\end{align}
for each fixed $L \in \mathbb{N}$ and $z \in \mathbb{C} \setminus \mathbb{R}$. Since Schr\"odinger operators (\ref{eq_ergodic_SOP}) only couple {\em{adjacent}} vertices, the tree structure of $\mathbb{B}$, the covariance relations for ergodic Schr\"odinger operators in (\ref{eq_covariance-relation}), and (\ref{eq_rooted_hamiltonian_zero}) imply
\begin{align}
\left\langle \delta_{\gamma(j)}, \right. & \left. G_{\mathbb{B} \setminus \gamma([0,j-1]); \omega}(z)  \delta_{\gamma(j)} \right\rangle = \left\langle \delta_{\gamma(j)}, G_{\mathbb{B} \setminus \{\gamma(j-1)\}); \omega}(z)  \delta_{\gamma(j)} \right\rangle \nonumber \\
& = \left\langle \delta_{\gamma(1)}, G_{\mathbb{B} \setminus \{0\}; T_{\gamma(j)}^{-1} \omega}(z)  \delta_{\gamma(1)} \right\rangle = M_{a_1; S_{d_j}^{-1} \dots S_{d_1}^{-1} \omega}(z) ~\mbox{,} \label{eq_prop_lyap_through_green_covariance}
\end{align}
for all $1 \leq j \leq L-1$. The function $M_{j;\omega}(z)$ is defined 
in \eqref{eq_defn_LyapExp-rot}, and $a_1 \in \{0, \dots, \kappa\}$ in (\ref{eq_prop_lyap_through_green_covariance}) is determined by the fixed path $\gamma$ under consideration such that
\begin{equation} \label{eq_prop_lyap_through_green_covariance-1}
\gamma(1) = (0,a_1) ~\mbox{.}
\end{equation}

In particular, for each fixed $z \in \mathbb{C} \setminus \mathbb{R}$, using \eqref{eq_prop_lyap_through_green_saw} and \eqref{eq_prop_lyap_through_green_covariance},
 we can represent the exponential decay rate of the Green function along the given path $\gamma$---expressed by the left-hand side of (\ref{eq_path-limit})---in the form:
\begin{align} \label{eq_MRP-motivation-green-fc-decay}
\dfrac{1}{L} \log \left\vert G_\omega(z; \gamma(0), \gamma(L-1) ) \right\vert & = o(1) +  \dfrac{1}{L} \sum_{j=1}^{L-1} \log \left\vert M_{a_1; S_{d_j}^{-1} \dots S_{d_1}^{-1} \omega}(z) \right\vert ~\mbox{, }
\end{align}
as $L \to \infty$ (for each $\omega$ in the sense of a limit superior or inferior), so that (\ref{eq_path-limit}) finally becomes
\begin{equation} \label{eq_MRP-motivation-green-fc-decay}
\dfrac{1}{L} \sum_{j=1}^{L} \log \left\vert M_{a_1; S_{d_j}^{-1} \dots S_{d_1}^{-1} \omega}(z) \right\vert \to - \mathcal{L}(z) ~\mbox{, as $L \to \infty$ .}
\end{equation}

As mentioned earlier, for the case of random Schr\"odinger operators with iid potentials, it was proven in \cite{aw_JEMS_2013} that the limit in (\ref{eq_path-limit}) exists for a full measure set of realizations $\omega \in \Omega$; in particular, the same thus holds true in this case for the limit in (\ref{eq_MRP-motivation-green-fc-decay}). The expression of the Lyapunov exponent in (\ref{eq_MRP-motivation-green-fc-decay}) motivates the following terminology for general ergodic Schr\"odinger operators satisfying the set-up described in the hypotheses (H1)--(H3) of Section \ref{sec_erogic-structure}:
\begin{definition} \label{def_MRP}
Let $\gamma: \mathbb{N}_0 \to \mathcal{V}$, $\gamma(0) = (0)$, be an infinite path associated with the sequence
\begin{equation}
(S_{d_j})_{j \in \mathbb{N}} \in \mathfrak{S}_\kappa ~\mbox{}
\end{equation}
so that (\ref{eq_infinite-path-labeling-1})--(\ref{eq_infinite-path-labeling-2}) hold. We say that $\gamma$ is {\em{macroscopically representative}} (MRP) if for every $f \in L^1 \cap L^\infty(\Omega, \mathcal{F}, \mathbb{P})$, there exists a full measure set $\Omega_0(f) \subseteq \Omega$ such that for all $\omega \in \Omega_0(f)$, one has
\begin{equation} \label{eq_def_MRP-limit}
\dfrac{1}{L} \sum_{j=1}^{L} f( (S_{d_1} \dots S_{d_j})^{-1} \omega) = \dfrac{1}{L} \sum_{j=1}^{L} f( S_{d_j}^{-1} \dots S_{d_1}^{-1} \omega)  \to \mathbb{E}(f) ~\mbox{, as $L \to \infty$ .}
\end{equation}
\end{definition}
To give examples which illustrate the existence of MRP paths, we note that every path $\gamma$ associated with a sequence $(S_{d_j})_{j \in \mathbb{N}} \in \mathfrak{S}_\kappa$ for which
\begin{equation} \label{eq_MRP-existence}
d_j = 0 ~\mbox{, eventually in $j \in \mathbb{N}$ ,}
\end{equation}
is necessarily MRP. This follows since (\ref{eq_MRP-existence}) implies that eventually in $j \in \mathbb{N}$, $S_j = T_1$, so that the ergodicity of $T_1$ implies that (\ref{eq_def_MRP-limit}) holds by Birkhoff's ergodic theorem.

We end this section with the following remark, which puts Definition \ref{def_MRP} in the context of our discussion of the Thouless formula in Section \ref{sec:thouless1}:
\begin{remark} \label{remark_MRP-defn-applic}
Suppose that $\gamma$ is a fixed MRP path. In our proof of the Thouless formula in Section \ref{sec:thouless1}, we will apply Definition \ref{def_MRP} for the following two functions: 
\begin{enumerate}
\item {\bf{Relation to the Lyapunov exponent:}} For each $a_1 \in \{0, \dots, \kappa \}$ (uniquely determined by the path at hand through (\ref{eq_prop_lyap_through_green_covariance-1})) and $z \in \mathbb{C} \setminus \mathbb{R}$, letting
\begin{equation}
f_1(\omega) = \log \left\vert M_{a_1; \omega}(z) \right\vert ~\mbox{,}
\end{equation}
the limit in (\ref{eq_def_MRP-limit}) with $f=f_1$ recovers our expression of the Lyapunov exponent in (\ref{eq_defn_LyapExp-rot}). 

\item {\bf{Relation to the density of states measure:}} Fixing an arbitrary $z \in \mathbb{C} \setminus \mathbb{R}$, denote
\begin{equation*}
g(E) = \log \vert E- z \vert ~\mbox{, for } E \in \mathbb{R} ~\mbox{,}
\end{equation*}
and consider the function
\begin{equation}
f_2(\omega) = \left\langle \delta_{(0)}, g(H_\omega) \delta_{(0)} \right\rangle ~\mbox{.}
\end{equation}
Since $g \in \mathcal{C}_0(\mathbb{R}; \mathbb{C})$, we know that $f \in L^1 \cap L^\infty(\Omega, \mathcal{F}, \mathbb{P})$. Thus, if $\gamma$ is a fixed MRP path, combining (\ref{eq_def_MRP-limit}) with the definition of the density of states measure in (\ref{eq_dosm_defn}), we obtain
\begin{equation}
\dfrac{1}{L} \sum_{j=1}^{L} f_2( S_{d_1}^{-1} \dots S_{d_j}^{-1} \omega)  \to \int \log \vert E- z \vert ~\ud n_v(E) ~\mbox{, as $L \to \infty$ ,}
\end{equation}
for all realizations $\omega$ in the full measure set $\Omega(f_2)$.
\end{enumerate}
\end{remark}


\section{A Modified Thouless Formula for the Bethe Lattice}\label{sec:thouless1}
\setcounter{equation}{0}


This section contains our main result, the modified Thouless formula for the Bethe lattice that consists of a Thouless-like term and a nontrivial remainder term. 

\medskip

\begin{theorem}\label{thm:thouless}(Modified Thouless formula) 
Let $H_\omega$ be an ergodic Schr\"odinger operator on the Bethe lattice $\mathbb{B}$ with coordination number $\kappa \geq 2$. We assume hypotheses (H1)-(H3) of section \ref{sec_dos_LE1}. Let $d n_v$ be the associated density of states measure defined in \eqref{eq_dosm_defn} with support given by the deterministic spectrum $\Sigma$. Then, the Lyapunov exponent $\mathcal{L}(z)$, defined in \eqref{eq_defn_LyapExp}, is given by 
\beq\label{eq:lyap_thouless1}
\mathcal{L}(z) = \int_\Sigma \log | z - E| ~dn_v(E) + R(z) ~\mbox{,}
\eeq
where $R(z)$ is the non-zero remainder term, depending on $\kappa$, defined in \eqref{eq_remainder-term}. Moreover, $R(z)$ satisfies the upper bound
\begin{equation} \label{eq:lyap_thouless2}
\vert R(z) \vert \leq C(z) (\kappa - 1) ~\mbox{,}
\end{equation}
for a constant $C(z)$ described in (\ref{eq_thouless-remainder-bound}).
\end{theorem}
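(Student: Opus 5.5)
Fix an MRP path $\gamma$ (for instance one with $d_j=0$ eventually, as in (\ref{eq_MRP-existence})). We compute the decay rate $\frac1L\log|G_\omega(z;0,\gamma(L-1))|$ in two different ways and then compare.

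\textbf{First computation: the Lyapunov exponent.} By (\ref{eq_MRP-motivation-green-fc-decay}) and Remark \ref{remark_MRP-defn-applic}(1), for $\mathbb{P}$-a.e.\ $\omega$ this rate converges to $-\mathcal{L}(z)$.

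\textbf{Second computation: a finite-volume matrix on the path.} By Theorem \ref{thm_analytic-miracles}(ii) we may replace $G$ by $G_{L+\mathfrak{e}(L)}$ without changing the limit. Next consider the path segment $P_L:=\gamma([0,L-1])$. Schur complement (Feshbach) reduction of $H_{\Lambda_{L+\mathfrak{e}(L)}}-z$ onto $\ell^2(P_L)$ gives a tridiagonal $L\times L$ matrix
$$J_L(z)=\Delta_{P_L}+V|_{P_L}-z-\Gamma_L(z).$$
Here $\Gamma_L(z)$ is diagonal and collects the self-energies of the rooted trees hanging off each path vertex. There are $\kappa-1$ such trees at each interior vertex and $\kappa$ at the endpoints; each contributes a Weyl-type term $M_{\cdot}(z)$. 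Since the off-diagonal entries of $J_L$ equal $1$, Cramer's rule gives
$$G_{L+\mathfrak{e}(L)}(z;0,\gamma(L-1))=(-1)^{L-1}/\det J_L(z).$$
We then split
$$\log|\det J_L(z)|=\log|\det(H_{P_L}-z)|+\log|\det(1-(H_{P_L}-z)^{-1}\Gamma_L(z))|.$$
The first term equals $\sum_j\log|E_j-z|$ over the eigenvalues of $H_{P_L}$.

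\textbf{Identifying the density of states.} We would like $\frac1L\log|\det(H_{P_L}-z)|$ to converge to $\int\log|z-E|\,dn_v(E)$. The strategy is to write $\frac1L\operatorname{tr}P_{P_L}g(\cdot)$ with $g=\log|\cdot-z|$, and this step is where I expect the main obstacle. The quantity we control is the average $\frac1L\sum_j\langle\delta_{\gamma(j)},g(H_\omega)\delta_{\gamma(j)}\rangle$, which converges to $\int g\,dn_v$ by the MRP property and covariance (Remark \ref{remark_MRP-defn-applic}(2)). But $H_{P_L}$ is a one-dimensional restriction, not $H$ itself, so the spectral averages of $g(H_{P_L})$ and $g(H)$ along the path differ by an amount that does \emph{not} vanish.

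\textbf{Defining the remainder.} The cleanest route is to \emph{define} $R(z)$ (this is (\ref{eq_remainder-term})) as the almost sure limit of the difference
$$\frac1L\Big[\log|\det J_L(z)|-\sum_{j}\langle\delta_{\gamma(j)},g(H_\omega)\delta_{\gamma(j)}\rangle\Big].$$
Existence of this limit follows because both pieces have limits: the first by the first computation, the second by the MRP property. This yields (\ref{eq:lyap_thouless1}) with
$$R(z)=\mathcal{L}(z)-\int\log|z-E|\,dn_v(E).$$
The limit is $\omega$-independent a.s., and it is path-independent because the right-hand side is.

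\textbf{Bounding the remainder.} To get (\ref{eq:lyap_thouless2}), I would exhibit the $(\kappa-1)$ dependence by interpolation. Set $\Gamma_L^{(t)}=t\Gamma_L$ and differentiate:
$$\frac{d}{dt}\log\det\big(J^{(t)}\big)=-\operatorname{tr}\big((J^{(t)})^{-1}\Gamma_L\big).$$
Each diagonal entry of $\Gamma_L$ is a sum of $\kappa-1$ terms, each bounded by $1/\Im z$, and the resolvent entries are bounded by $1/\Im z$ as well. Per vertex this gives a bound of order $(\kappa-1)/(\Im z)^2$. One must also compare the diagonal spectral quantity $\langle\delta_x,g(H)\delta_x\rangle$ with $g(H_{P_L})$. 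I would do this with Helffer–Sjöstrand, or with the resolvent identity integrated in the spectral parameter (writing $\log|E-z|$ via $\int\Im$ of resolvents). There again the perturbation $H-H_{P_L}\oplus\cdots$ couples to exactly $\kappa-1$ branches per interior vertex, giving $C(z)(\kappa-1)$ after averaging. The endpoint contributions are $O(1/L)$ and drop out.

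At $\kappa=1$ both pieces of $R$ vanish and we recover the classical Thouless formula, which is a consistency check. Non-vanishing of $R$ for $\kappa\ge2$ would be deferred to the free Laplacian computation in Section \ref{sec:remainder1}.
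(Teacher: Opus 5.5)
Your argument is correct, but it is organized differently from the paper's.

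\textbf{The paper's route.} It stays in the ball $\Lambda_{2L}$. The SAW product and a cofactor identity turn $-\frac{1}{L}\log|G_{2L}(z;0,\gamma(L-1))|$ into $\frac{1}{L}\log|\det(H_{\Lambda_{2L}}-z)/\det(H_{\Lambda_{2L}\setminus\gamma([0,L-1])}-z)|$. The path trace $\frac{1}{L}\mathrm{tr}(P_{\gamma([0,L-1])}\log|H_{\Lambda_{2L}}-z|)$ is split off and sent to $\int\log|E-z|\,dn_v$ via Corollary~\ref{coro_DOSm-wek-conv} and the MRP property. The single leftover $R_L$ is a trace over the \emph{complement} of the path of the difference between coupled and decoupled operators. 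It is bounded by one Helffer--Sj\"ostrand/geometric-resolvent estimate.

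\textbf{Your route.} Your Schur complement satisfies $\det J_L=\det(H_{\Lambda}-z)/\det(H_{\Lambda\setminus P_L}-z)$, so you start from the same quantity. You then split the remainder in two:
the self-energy piece $\frac{1}{L}[\log|\det J_L|-\log|\det(H_{P_L}-z)|]$, which elementary interpolation bounds by $((\kappa-1)L+2)/(L(\Im z)^2)$ with no almost-analytic extension;
and the piece $\frac{1}{L}[\mathrm{tr}\,g(H_{P_L})-\mathrm{tr}(P_{P_L}g(H))]$, which needs the same HS and geometric-resolvent argument as the paper, but traced over the path instead of its complement.
The two pieces add up to the paper's $R_L$, up to the $H$ versus $H_{\Lambda_{2L}}$ discrepancy that Corollary~\ref{coro_DOSm-wek-conv} removes.

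\textbf{What each buys.} Your route makes the $(\kappa-1)$ mechanism literal: it is the number of Weyl-function terms in each interior diagonal entry of $\Gamma_L$. Also, $J_L$ is an $L\times L$ matrix even when built from the infinite-volume $H$, so Theorem~\ref{thm_analytic-miracles} is actually dispensable for you; your appeal to part (ii) is harmless but unnecessary. The price is twofold:
your constant $C(z)$ is a sum of $(\Im z)^{-2}$ and an HS integral, not exactly the one in \eqref{eq_thouless-remainder-bound};
and you have no single expression for the finite-$L$ remainder, since each piece is only bounded, not shown to converge separately.

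\textbf{One point to make explicit.} For $0<t<1$ the matrix $J^{(t)}$ is not a resolvent. So you need to derive both $\|(J^{(t)})^{-1}\|\le 1/\Im z$ and $\det J^{(t)}\neq 0$ (the latter is needed to integrate $\frac{d}{dt}\log\det J^{(t)}$). Both follow from the Herglotz property $\Im\Gamma_L\ge 0$, which gives $\Im J^{(t)}\le -\Im z$.

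As in the paper, $R\neq 0$ itself is established only through the free-Laplacian computation.
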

Observe that the upper bound for the remainder term in (\ref{eq:lyap_thouless2}) shows that for ergodic Schr\"odinger operators on $\mathbb{Z}$ ($\kappa = 1$), one has $R(z) = 0$, which recovers the ``usual'' Thouless formula.

\medskip
The key ingredient in the proof is Theorem \ref{thm_analytic-miracles} which allows us to replace quantities related to the infinite-volume resolvent operator with finite-volume restrictions. When doing so, the ``price to pay'' is to enlarge the finite volume by a growing amount, expressed by a function $\mathfrak{e}: \mathbb{N} \to \mathbb{N}$ so that $\mathfrak{e}(L) \to +\infty$, as $L \to \infty$. In the following, we will take $\mathfrak{e}(L) = L$ so that the radius of the enlarged volume amounts to ball of radius $L + \mathfrak{e}(L) = 2L$. As we will show (see \eqref{eq_remainder-term}), the particular choice of the enlargement function $\mathfrak{e}$ will not affect our final conclusions. 

The proof of Theorem \ref{thm:thouless} is presented in Section \ref{sec:thouless1-proof}. In Section \ref{sec:remainder1} prove that the remainder term is nontrivial and therefore essential by showing that, for the free Laplacian on the Bethe lattice, the remainder term does not vanish if $\kappa \geq 2$ and is a function of $z \in [- 2 \sqrt{\kappa}, 2 \sqrt{\kappa}]$. 


\subsection{Proof of the Modified Thouless Formula} \label{sec:thouless1-proof}

We fix an infinite path $\gamma$, $\gamma(0) = (0)$, which is MRP. Let $z \in \mathbb{C}$ with $\Im z > 0$. We start with the expression of the Lyapunov exponent as an exponential decay rate of the off-diagonal elements of the Green function.  Combining (\ref{eq_MRP-motivation-green-fc-decay})--(\ref{eq_MRP-motivation-green-fc-decay}) with Remark \ref{remark_MRP-defn-applic} (1), for $\mathbb{P}$-almost sure (a.s.) $\omega$, we have
\begin{equation}\label{eq:lyap1}
\mathcal{L}(z) = -\lim_{L \to \infty} \dfrac{1}{L} \log \left\vert G_\omega(z; \gamma(0), \gamma(L-1) ) \right\vert . 
\end{equation}
Using part (ii) of Theorem \ref{thm_analytic-miracles}, the right side of \eqref{eq:lyap1} can be computed using finite-volume approximations so that
\begin{equation} \label{eq_thouless-Lyap-start}
\mathcal{L}(z) = - \lim_{L \to \infty} \dfrac{1}{L} \log \left\vert G_{2L; \omega}(z; \gamma(0), \gamma(L-1) ) \right\vert ~\mbox{.}
\end{equation}

Expanding the right-hand side of (\ref{eq_thouless-Lyap-start}) with aid of the SAW-expansion, we get
\begin{equation} \label{eq_thouless_finite-vol}
\left\vert G_{2L; \omega}(z; \gamma(0), \gamma(L-1) ) \right\vert = \left\vert  \prod_{j=0}^{L-1}  \left\langle \delta_{\gamma(j)}, G_{\Lambda_{2L} \setminus \gamma([0,j-1]); \omega}(z)  \delta_{\gamma(j)} \right\rangle      \right\vert ~\mbox{,}
\end{equation}
where we recall the notation for $j = 0$ that $G_{\Lambda_{2L} \setminus \gamma([0,j-1]); \omega}(z) = G_{\Lambda_{2L}, \omega}(z) $.
We denote by $C^{(i,j)}$ the $(i,j)^{\rm th}$ cofactor of a matrix $C$. Observe that for $0 \leq j \leq L$, the tree structure of the Bethe lattice and the nearest neighbor coupling of the Laplacian imply that
\begin{equation}
\left\vert \left[ H_{\Lambda_{2L} \setminus \gamma([0,j-1]); \omega     }   - z \right]^{(\gamma(j),\gamma(j))} \right\vert = \left\vert \mathrm{det}\left[ H_{\Lambda_{2L} \setminus \gamma([0,j]);\omega}- z \right] \right\vert ~\mbox{.}
\end{equation}
Thus, using cofactors to compute the matrix inverses on the right-hand side of (\ref{eq_thouless_finite-vol}), results in
\begin{align}
- \dfrac{1}{L} \log \left\vert \right. & \left. G_{2L; \omega}(z; \gamma(0), \gamma(L-1) ) \right\vert =  \dfrac{1}{L} \log \left\vert \dfrac{\mathrm{det} \left[H_{\Lambda_{2L}; \omega} - z \right]}{  \mathrm{det} \left[ H_{\Lambda_{2L} \setminus \gamma([0,L-1]);\omega} - z \right] }     \right\vert  \nonumber \\
&
= \dfrac{1}{L} \left\{ \mathrm{tr} \log \left\vert H_{\Lambda_{2L}; \omega} - z \right\vert - \mathrm{tr} \log \left\vert H_{\Lambda_{2L} \setminus \gamma([0,L-1]);\omega} - z \right\vert \right\}  \nonumber \\
& = \dfrac{1}{L} \mathrm{tr}( P_{\gamma([0, L-1])} \log \left\vert H_{\Lambda_{2L}; \omega} - z \right\vert) + R_L(\omega;z) ~\mbox{,} \label{eq_eq_thouless_trace-log}
\end{align}
where we let
\begin{equation}
R_L(\omega;z):= \dfrac{1}{L} \left[ \mathrm{tr} \left( P_{\Lambda_{2L} \setminus \gamma([0, L-1])} \log \left\vert H_{\Lambda_{2L}; \omega} - z \right\vert \right) - \mathrm{tr} \log \left\vert H_{\Lambda_{2L} \setminus \gamma([0,L-1]);\omega} - z \right\vert \right] ~\mbox{.} \label{eq_eq_thouless_trace-log-remainder}
\end{equation}
We note that, since $\Im z > 0$, the map
\begin{equation}
\mathbb{R} \ni E \mapsto \log \vert E - z \vert ~\mbox{,}
\end{equation}
is harmonic in a neighborhood of $\mathbb{R}$, whence (\ref{eq_eq_thouless_trace-log}) -- (\ref{eq_eq_thouless_trace-log-remainder}) are well-defined by functional calculus. 

Finally, we analyze the limit as $L \to \infty$ of the first term on the right in \eqref{eq_eq_thouless_trace-log-remainder}.
Combining Theorem \ref{thm_analytic-miracles} with the comments in Remark \ref{remark_MRP-defn-applic} shows that the limit $L \to \infty$ of (\ref{eq_eq_thouless_trace-log}) exists for $\mathbb{P}$-a.e. $\omega$ with
\begin{equation} \label{eq_remainder-term}
R(z): = \lim_{L \to \infty} R_L(\omega;z) = \mathcal{L}(z) - \int_{\Sigma} \log \vert E - z \vert ~\ud n_v(E) . 
\end{equation}
In particular, this limit is $\mathbb{P}$-a.s.\ independent of $\omega$. Additionally, the right-hand side of (\ref{eq_remainder-term}) shows that that the remainder term $R(z)$ is independent of our choice for the function $\mathfrak{e}(L)=2L$ in Theorem \ref{thm_analytic-miracles}. In summary, we thus obtain the following {\em{modified version of the Thouless formula}}:
\begin{equation} \label{eq_gen_Thouless}
\mathcal{L}(z) = \int_{\Sigma} \log \vert E - z \vert ~\ud n_v(E) + R(z) ~\mbox{,}
\end{equation}
where the {\em{remainder term}} $R(z)$ is defined as the a.s. limit of (\ref{eq_eq_thouless_trace-log-remainder}) as $L \to +\infty$. In section \ref{sec:remainder1}, we will prove that the remainder $R(z)$ is, in general, a non-vanishing function of $z$ for all $\kappa \geq 2$. We refer to the first term on the right in \eqref{eq_gen_Thouless} as the \textbf{Thouless-like term} since this term represents the Lyapunov exponent for ergodic Schr\"odinger operators on $\mathbb{Z}$.  

Intuitively, from (\ref{eq_eq_thouless_trace-log-remainder}), $R(z)$ expresses the averaged effects of the coupling of the path $\gamma$ to its surrounding. To make this interpretation more explicit, we conclude this section by analyzing $R(z)$ using the geometric resolvent estimate. In particular, for the case that $\kappa = 1$ (i.e., for an ergodic Schr\"odinger operator on $\mathbb{Z}$), this analysis will show that the remainder term is zero, thereby resulting in the usual version of the Thouless formula. This is consistent with our heuristic of $R(z)$ as expressing the averaged effects of the coupling of the path $\gamma$ to its surrounding, since for $\kappa = 1$, the path is coupled to its surroundings through a {\em{constant}} number of vertices (the two vertices, neighboring the endpoints of the path). In contrast, for $\kappa \geq 2$, the number of neighboring vertices to a path will grow with the length of the path. 

To analyze the remainder term, we fix $L \in \mathbb{N}$ and use Helffer-Sj\"ostrand functional calculus to represent $R_L(\omega;z)$ through the resolvent: to this end, for fixed $z \in \mathbb{C} \setminus \mathbb{R}$, we let
\begin{equation}
f_z(E) = \phi(E) \log \vert E - z \vert ~\mbox{, } x \in \mathbb{R} ~\mbox{, }
\end{equation}
where $\phi$ is a $\mathcal{C}^\infty$-bump function which satisfies
\begin{equation}
\phi(E) \equiv 1 ~\mbox{, for $E \in [- (\kappa +1) - \Vert v \Vert_\infty, (\kappa + 1) + \Vert v \Vert_\infty]$ .}
\end{equation}
Here, we note that the set $[- (\kappa +1) - \Vert v \Vert_\infty, (\kappa + 1) + \Vert v \Vert_\infty]$ is chosen so that it contains all of the spectrum of $H_{\Lambda_{2L}; \omega}$ and $H_{\Lambda_{2L} \setminus \gamma([0,L-1]);\omega}$ for all $L \in \mathbb{N}$ and $\omega \in \Omega$. Now let $\widetilde{f_z}$ be an almost-analytic extension of $f$ so that $\widetilde{f_z}: \mathbb{R}^2 \to \mathbb{C}$ is compactly supported on $\mathbb{R}^2$ and satisfies
\begin{align} \label{eq_almostanalytic-degree2}
\widetilde{f_z}(x,0) &= f_z(x) ~\mbox{, for all } x \in \mathbb{R} ~\mbox{, } \nonumber \\
\left\vert \partial_{\overline{z}} \widetilde{f_z}(x,y) \right\vert & \leq C_{f_z} \vert y \vert^{2} ~\mbox{, for all } x \in \mathbb{R} ~\mbox{, } 0 < \vert y \vert \leq 1 ~\mbox{.}
\end{align}
The Helffer-Sj\"ostrand functional calculus then allows us to represent $R_L(\omega;z)$ in the form,
\begin{align} \label{eq_claim_remainder_thouless_helffer}
R_L(\omega & ;z) = \dfrac{1}{\pi}\int_{-\infty}^{+\infty} \int_{-\infty}^{+\infty} \ud x \ud y ~\partial_{\overline{u}} \widetilde{f_z}(x,y)~\times \nonumber \\
& \dfrac{1}{L} \mathrm{tr} \left\{ P_{\Lambda_{2L} \setminus \gamma([0, L-1])}  \left[  G_{2L; \omega}(u) - \left( G_{\Lambda_{2L} \setminus \gamma([0, L-1]); \omega}(u) + G_{\gamma([0, L-1]); \omega}(u) \right)          \right] \right\}   ~\mbox{;} 
\end{align}
We note that we chose an almost analytic extension with degree 2 in (\ref{eq_almostanalytic-degree2}) to guarantee the convergence of the integrand in (\ref{eq_claim_remainder_thouless_helffer}), which is based on the bound (\ref{eq_remainder-apriori-bound}) given below. To avoid confusion, we reiterate that $z \in \mathbb{C} \setminus \mathbb{R}$ in (\ref{eq_claim_remainder_thouless_helffer}) is fixed and that we use $u:= x + i y$ as the integration variable in the double integral of (\ref{eq_claim_remainder_thouless_helffer}).

For each $u \in \mathbb{C} \setminus \mathbb{R}$, we can use the geometric resolvent identity to analyze the integrand in (\ref{eq_claim_remainder_thouless_helffer}),
\begin{equation} \label{eq_claim_remainder_thouless_helffer_integrand}
I_L(\omega; u): = \dfrac{1}{L} \mathrm{tr} \left\{ P_{\Lambda_{2L} \setminus \gamma([0, L-1])}  \left[  G_{2L; \omega}(u) - \left( G_{\Lambda_{2L} \setminus \gamma([0, L-1]); \omega}(u) + G_{\gamma([0, L-1]); \omega}(u) \right)          \right] \right\} ~\mbox{.}
\end{equation} 
For ease of notation, we write
\begin{equation}
G_\omega^{(\gamma; L)}(u):= G_{\Lambda_{2L} \setminus \gamma([0, L-1]); \omega}(u) + G_{\gamma([0, L-1]); \omega}(u) ~\mbox{.}
\end{equation}
Moreover, given a set of vertices $\mathcal{A} \subsetneq \Gamma_{L}$, we define its boundary relative to $\Gamma_{L}$ by
\begin{equation}
\partial_{L} \mathcal{A} := \{(m,l) \in \mathcal{A} \times \Gamma_L \setminus \mathcal{A} ~:~ m \sim l \} ~\mbox{,}
\end{equation}
where, as earlier, $k \sim l$ denote nearest neighboring vertices. 

Using the geometric resolvent identity we obtain the following expression for (\ref{eq_claim_remainder_thouless_helffer_integrand})
\begin{align}
I_L(\omega; u)  = \dfrac{1}{L} \sum_{j \in \Lambda_{2L} \setminus \gamma([0, L-1])} \sum_{(m,l) \in \partial_{2L} \gamma([0,L-1])} G_{2L; \omega}(u; j ,m) G_\omega^{(\gamma; L)}(u; l,j) ~\mbox{,}
\end{align}
which, by Cauchy-Schwarz, can be estimated by
\begin{align}
\vert & I_L (\omega ; u) \vert \leq \dfrac{1}{L} \sum_{(m,l) \in \partial_{2L} \gamma([0,L-1])} \Vert G_{2L; \omega}(u; . ,m) \Vert_{\ell^2} \Vert G_\omega^{(\gamma; L)}(u; l, .)  \Vert_{\ell^2(\Lambda_{2L} \setminus \gamma([0, L-1]))} \\
 & \leq \dfrac{\kappa - 1}{L \vert \Im(u) \vert} \sum_{j=1}^{L-2} \Vert G_{2L; \omega}(u; . ,\gamma(j)) \Vert_{\ell^2(\Lambda_{2L} \setminus \gamma([0, L-1]))} \nonumber \\ 
 & + \dfrac{\kappa}{L \vert \Im(u) \vert} \left\{ \Vert G_{2L; \omega}(u; . ,\gamma(0)) \Vert_{\ell^2(\Lambda_{2L} \setminus \gamma([0, L-1]))} + \Vert G_{2L; \omega}(u; . ,\gamma(L-1)) \Vert_{\ell^2(\Lambda_{2L} \setminus \gamma([0, L-1]))} \right\} \label{eq_bound_IL-1} \\
 & \leq \dfrac{\kappa + 1}{L \vert \Im(u) \vert} \sum_{j=0}^{L-1} \Vert G_{2L; \omega}(u; . ,\gamma(j)) \Vert_{\ell^2(\Lambda_{2L} \setminus \gamma([0, L-1]))} ~\mbox{,} \label{eq_bound_IL}
\end{align}

First, note that the upper bound in (\ref{eq_bound_IL}) shows that for each $\omega \in \Omega$ and all $u \in \mathbb{C} \setminus \mathbb{R}$
\begin{equation} \label{eq_remainder-apriori-bound}
I_L(\omega; u) \leq \dfrac{\kappa + 1}{\vert \Im(u) \vert^2} ~\mbox{, for all $L\in \mathbb{N}$ ,}
\end{equation}
which shows that $u \mapsto I_L(\omega; u)$ is a normal family in $L \in \mathbb{N}$ for $u \in \mathbb{C} \setminus \mathbb{R}$ so that $\lim_{L \to \infty} I_L(\omega, u)$ is necessarily finite whenever it exists. 

Second, and most noteworthy, the bound in (\ref{eq_bound_IL-1}) explicitly reveals the effects of the surroundings of the path and separates them into two contributions: the first summand contains the contributions from vertices in the interior of the path; these contributions grow with the length $L$ of the path. The second summand in (\ref{eq_bound_IL-1}) quantifies the effects of the coupling of the path to its surroundings through its endpoints and thus only contains two terms, so that
\begin{align}
\dfrac{\kappa}{L \vert \Im(u) \vert} & \left\{ \Vert G_{2L; \omega}(u; . ,\gamma(0)) \Vert_{\ell^2(\Lambda_{2L} \setminus \gamma([0, L-1]))} + \Vert G_{2L; \omega}(u; . ,\gamma(L-1)) \Vert_{\ell^2(\Lambda_{2L} \setminus \gamma([0, L-1]))} \right\} \nonumber \\
& \leq \dfrac{\kappa}{L (\Im(u))^2} \to 0 ~\mbox{, as $L \to \infty$ .} 
\end{align}
In particular, we can conclude that, whenever the limit exists, one has 
\begin{equation} \label{eq_bound-remainder-kappa-dependence}
\lim_{L \to \infty} \vert I_L (\omega ; u) \vert \leq \lim_{L \to \infty} \dfrac{\kappa - 1}{L \vert \Im(u) \vert} \sum_{j=1}^{L-2} \Vert G_{2L; \omega}(u; . ,\gamma(j)) \Vert_{\ell^2(\Lambda_{2L} \setminus \gamma([0, L-1]))} ~\mbox{,}
\end{equation}
which, reveals that all contributions to the remainder term $R(z)$ stem from the coupling of the path to its surroundings through the interior of the path. Using the trivial norm bound in (\ref{eq_bound-remainder-kappa-dependence}) 
\begin{equation}
\Vert G_{2L; \omega}(u; . ,\gamma(j)) \Vert_{\ell^2(\Lambda_{2L} \setminus \gamma([0, L-1]))} \leq \dfrac{1}{\vert \Im(u) \vert} ~\mbox{,}
\end{equation}
the representation of $R_L(z)$ in (\ref{eq_claim_remainder_thouless_helffer}) together with our choice of using an almost analytic extension of degree 2, finally reveals the claimed upper bound in (\ref{eq:lyap_thouless2}) of Theorem \ref{thm:thouless},
\begin{equation} \label{eq_thouless-remainder-bound}
\vert R(z) \vert \leq \left\{\dfrac{1}{\pi}\int_{-\infty}^{+\infty} \int_{-\infty}^{+\infty} \partial_{\overline{u}} \widetilde{f_z}(x,y) ~\ud x \ud y \right\} \cdot (\kappa - 1) ~\mbox{,}
\end{equation}
where the constant $C(z)$ is explicitly given by the first factor on the right hand side of (\ref{eq_thouless-remainder-bound}). This therefore completes the proof of heorem \ref{thm:thouless}.


\subsection{Nontrivial Remainder Term for the Free Laplacian}\label{sec:remainder1}
\setcounter{equation}{0}
In this section, we will show that the remainder term $R(z)$ in our modified Thouless formula (\ref{eq_gen_Thouless}) does, in general, not vanish when $\kappa \geq 2$. We will do so for the free Laplacian ($\Delta_\mathbb{B}$) 
for which we will explicitly show that $R (z) \neq 0$ whenever $z$ is inside the spectrum,
\begin{equation}
\Sigma^{(0)} := \sigma (\Delta_\mathbb{B}) = [-2 \sqrt{\kappa},2 \sqrt{\kappa}] .
\end{equation}
To this end, we recall that the Lyapunov exponent of the free Laplacian $\mathcal{L}^{(0)}(z)$, $z \in \R$, can be computed explicitly (see, for example, 
\cite[Section 16.3]{aw_book}),
\begin{equation}
\mathcal{L}^{(0)}(z) = \frac{1}{2} \log(\kappa) ~\mbox{, } z \in \Sigma^{(0)} ~\mbox{,}
\end{equation}
so that
\begin{equation}
\dfrac{\ud}{\ud z} \mathcal{L}^{(0)} (z)  = 0 ~\mbox{, } z \in \Sigma^{(0)} ~\mbox{.}
\end{equation}

To prove that the remainder term is nonvanishing for $\kappa \geq 2$, 
\begin{equation}\label{eq:remainder1}
R(z) = \mathcal{L}^{(0)}(z) - \int_{\Sigma^{(0)}} \log \vert E - z \vert ~\ud n(E) \neq 0 ~\mbox{,}
\end{equation}
we consider the Thouless-like term,
\begin{equation}
\mathcal{L}_T (z) := \int_{\Sigma^{(0)}} \log \vert E - z \vert ~\ud n(E) ~\mbox{,}
\end{equation}
and prove that for all $z \in \C^+$, one has
\begin{equation}
\dfrac{\ud}{\ud z} \mathcal{L}_T (z) = \dfrac{\ud}{\ud z}  \int_{\Sigma^{(0)}} \log \vert E - z \vert ~\ud n(E) \neq 0  ~\mbox{.}
\end{equation}
We will then show that the limit $z = E + i \eta \to \E$, for $E \in \Sigma^{(0)}$, exists and is also nonzero. 

The known formula for the density of states measure of $\Delta_\mathbb{B}$ (see, for example, \cite[Corollary 6.2]{aw_book}) leads to the expression
\beq\label{eq:le1}
\mathcal{L}_T (z) = \frac{\kappa + 1}{2 \pi} \int_{\Sigma^{(0)}} \frac{(4 \kappa - E^2)^{\frac{1}{2}}}{(\kappa+1)^2 - E^2} \log | E -z| ~\ud E ~\mbox{.}
\eeq
Computing the derivative, we obtain
\beq\label{eq:le2}
\dfrac{\ud}{\ud z} {\mathcal{L}}_T (z) = \frac{\kappa + 1}{2 \pi} \int_{\Sigma^{(0)}} 
f_\kappa(E) \frac{dE}{(E-z)} ~\mbox{,}
\eeq
where 
\beq\label{eq:le3}
f_\kappa(E) := \frac{(4 \kappa - E^2)^{\frac{1}{2}}}{(\kappa+1)^2 - E^2} ~\mbox{.}
\eeq
We change variables so that $E := 2 \sqrt{\kappa} \cos \theta$ and obtain
\beq\label{eq:le3}
f_\kappa(E) = \frac{ 2 \sqrt{\kappa} \sin \theta }{(\kappa + 1)^2 - 4 \kappa \cos^2 \theta } ~\mbox{.}
\eeq
Consequently, from \eqref{eq:le2}-\eqref{eq:le3}, we obtain
\beq\label{eq:le4}
\dfrac{\ud}{\ud z}{\mathcal{L}}_T (z) = - \frac{2 \kappa(\kappa + 1)}{2 \pi} \int_{0}^\pi 
 \left( \frac{ \sin^2 \theta }{(\kappa + 1)^2 - 4 \kappa \cos^2 \theta } \right) 
\frac{d \theta}{2 \sqrt{\kappa} \cos \theta - z} ~\mbox{.}
\eeq
We next change variables by letting $w := e^{i \theta}$. After some manipulations, we arrive at
\bea\label{eq:le5}
{\mathcal{L}}_T (z) & = & - \frac{i  \kappa(\kappa + 1)}{4 \pi} \int_{|w| =1} 
 \left( \frac{ (w^2 - 1)^2  }{w^2 (\kappa + 1)^2 - \kappa (w^2 + 1)^2 } \right) 
\frac{d w}{ \sqrt{\kappa} (w^2 + 1)  - zw} \nonumber  \\
 & = &  \frac{i  (\kappa + 1)}{4 \pi} \int_{|w| =1} 
 \left( \frac{ (w^2 - 1)^2  }{(w^2 - \kappa)(w^2 - \frac{1}{\kappa}) } \right) 
\frac{d w}{ \sqrt{\kappa} (w^2 + 1)  - zw} ~\mbox{.}
\eea

To analyze the integrand in \eqref{eq:le5}, we note that the first factor has four real, simple poles:
two poles at $\pm \sqrt{\kappa}$ outside the contour $|w| = 1$, and two poles inside the contour at $\pm \frac{1}{\sqrt{\kappa}}$.  
The second factor in the integrand of (\ref{eq:le5}) has two poles at
$$
w_\pm = \frac{z}{2 \sqrt{\kappa}} \pm \frac{1}{2 \sqrt{\kappa}} [ z^2 - 4 \kappa]^{\frac{1}{2}} ~\mbox{.}
$$
For $z \in \Sigma^{(0)}$, we have $|w_\pm| = 1$, so they lie on the contour.

\medskip

\subsubsection{Evaluation of the contour integral}

We first replace the contour $|w| = 1$ by a deformed contour $\Gamma_\epsilon$.
In the neighborhoods of $w_\pm$, we replace the path along the unit circle  by a 
small circular arc with $|w| \leq 1$ and $|w - w_\pm| = \epsilon \ll 1$. Then, there are two poles $\pm {\kappa}^{-\frac{1}{2}}$ inside the contour. Let 
$$
F_z(w) :=  \left( \frac{ (w^2 - 1)^2  }{(w^2 - \kappa)(w^2 - \frac{1}{\kappa}) } \right) 
\frac{1}{{\kappa}^{-\frac{1}{2}} (w^2 + 1)  - zw} ~\mbox{.}
$$
The residues of $F_z$ at poles $\pm  {\kappa}^{-\frac{1}{2}}$ are 
\beq\label{eq:contour1}
\rm{Res} ~~(F_z(w), \pm  {\kappa}^{-\frac{1}{2}}) = \pm  \frac{(1 - \kappa)}{2 (1+\kappa)(1+ \kappa \mp z)} ~\mbox{.}
\eeq
Consequently, by the Residue Theorem we obtain
\beq\label{eq:contour3}
\int_{\Gamma_\epsilon} ~F_z(w) ~dw =  2 \pi i \left(  \frac{1-\kappa}{1+\kappa}
\right) \left( \frac{z}{(1 + \kappa)^2 - z^2}\right) ~\mbox{.}
\eeq

\medskip
\subsubsection{Poles on the unit circle}

We calculate the contributions to the contour integral \eqref{eq:le5} coming from the two circular arcs near $w_\pm$. This requires calculating the residues of $F_z(w)$ at $w_\pm$. Since $|w| =1$, we write $w_\pm = e^{i \phi}$, for some $\phi \in [0, 2\pi)$. The first factor of $F_z(w)$ is regular at $w_\pm$, so we obtain
\beq\label{eq:residue1}
\left( \frac{ (w_\pm^2 - 1)^2  }{(w_\pm^2 - \kappa)(w_\pm^2 - \frac{1}{\kappa}) } \right) 
= \frac{-4 \sin^2 \phi}{2 \cos 2\phi - (\kappa + \kappa^{-1})} ~\mbox{.}
\eeq
 As for the singular factor, we get
 \bea\label{eq:residue2}
 \lim_{w \to w_\pm} (w - w_\pm) \left
( \frac{1}{ {\kappa}^{\frac{1}{2}} (w^2 + 1)  - zw} \right) & = & \lim_{w \to w_\pm} (w - w_\pm)  \frac{1}{\kappa^{\frac{1}{2}} (w - w_+)(w - w_-)} \nonumber \\
 & = &
 \frac{\pm 1}{ \kappa^{\frac{1}{2}} ( w_+ - w_-)}  =\frac{\pm 1}{2i \kappa^{\frac{1}{2}} \sin \phi} ~\mbox{.}
 \eea
 This shows that the contributions from the poles $w_\pm$ on the contour $|w| = 1$ cancel. 
 
 
 \subsubsection{Summary: Nonconstant derivative}
 Collecting the results in \eqref{eq:contour3}  and  \eqref{eq:residue2}, we find
 \beq\label{eq:der_le1}
\dfrac{\ud}{\ud z} {\mathcal{L}}_T (z) = \left( \frac{\kappa - 1}{2}  \right) \left(\frac{z}{(1+\kappa)^2 - z^2}   \right), {\rm for} ~~ z \in \Sigma^{(0)} ~\mbox{.}
 \eeq
 Upon integrating, we thus obtain that the remainder $R(z)$ is given by
 \beq\label{eq:lyap3}
 R(z) = \frac{1}{2} \log \kappa + \frac{(\kappa - 1)}{4} \log [ (\kappa + 1)^2 - z^2] + C_0(\kappa) ~\mbox{,}
 \eeq
 where $C_0(\kappa)$ is an integration constant. 
 When $\kappa = 1$, describing the Laplacian on $\Z$, the derivative on the right hand side of (\ref{eq:der_le1}) vanishes and $R(z) = 0$ consistent with the fact that the Lyapunov exponent is zero. 
 Since it is known that the Lyapunov exponent for $\Delta_{\mathbb{B}}$ is equal to $\frac{1}{2} \log \kappa$, the non-constant second term in (\ref{eq:lyap3}) proves that the remainder $R(z)$ in the modified Thouless formula (\ref{eq:lyap_thouless1}) is indeed non-trivial.


\begin{appendices}

\section{Appendix: Proof of Lemma \ref{lemma_action-maps}} \label{app_lemma}
\setcounter{equation}{0}
We consider the cases as itemized in the statement of Lemma \ref{lemma_action-maps}. Throughout the following argument, $d_1, \dots, d_\ell$ refer to the vertex $x=(0, a_1, \dots, a_\ell)$, as defined in (\ref{eq_prop_representation_formulas}). 
\begin{itemize}
\item[(i)] Suppose that $b_1 \neq \kappa$. Then, by definition in (\ref{eq_transl1}) and as emphasized in Remark \ref{remark_tau1}, the map $\tau_1$ acts as right shift. In particular, repeated application of the definitions of $\tau_1$ and $\tau_2$ yields 
\begin{align}
\tau_x(z) & = (\tau_2^{d_1} \tau_1) \dots (\tau_2^{d_\ell} \tau_1) (0, b_1, \dots , b_m) \\
&  = (\tau_2^{d_1} \tau_1) \dots  (\tau_2^{d_{\ell-1}} \tau_1) (0, d_\ell, (b_1 + d_\ell)_{\rm{mod}(\kappa)}, \dots , (b_m + d_\ell)_{\rm{mod}(\kappa)} ) \nonumber \\
 & = \dots = (\tau_2^{d_1} \tau_1) (0, d_2 , \dots , (\sum_{j=2}^{\ell - 1} d_j )_{\rm{mod}(\kappa)} ,  (b_1 + \sum_{j=2}^{\ell} d_j)_{\rm{mod}(\kappa)} , \dots , (b_m + \sum_{j=2}^{\ell} d_j)_{\rm{mod}(\kappa)}  ) ~\mbox{.} \label{eq_comp-lemma_case}
\end{align}
Here, we note that the definition in (\ref{eq_prop_representation_formulas}) implies that $d_2, \dots, d_\ell \neq \kappa$, so the assumption that $b_1 \neq \kappa$ ensures that at each of the $(\ell-1)$ iterative steps in (\ref{eq_comp-lemma_case}), the first component of the tuple (after $0$) is never equal to $\kappa$ (in particular, the operation modulo $(\kappa + 1)$ could be omitted) and, consequently, that $\tau_1$ keeps acting as a right shift. 

Applying the last product $(\tau_2^{d_1} \tau_1)$ in (\ref{eq_comp-lemma_case}) and using the telescoping property in (\ref{eq_compositional-lemma-key}), we thus conclude
\begin{align*}
\tau_x(z) & = (0, d_1, \dots , (\sum_{j=1}^{\ell - 1} d_j )_{\rm{mod}(\kappa)} , (b_1 + \sum_{j=1}^{\ell} d_j )_{\rm{mod}(\kappa)}, \dots , (b_m + \sum_{j=1}^{\ell} d_j )_{\rm{mod}(\kappa)} ) \\
   & = (0, a_1, \dots , a_{\ell - 1} , (b_1 + a_\ell)_{\rm{mod}(\kappa)}, \dots , (b_m + a_\ell)_{\rm{mod}(\kappa)} ) ~\mbox{.}
\end{align*}

\item[(ii)] If $b_1 = \kappa$ and $z = (0, b_1)$ (i.e., $m =1$), then $\tau_1$ acts according to the second equality in the definition (\ref{eq_transl1}), in particular
\begin{equation}
\tau_x(0, b_1) = (\tau_2^{d_1} \tau_1) \dots (\tau_2^{d_\ell} \tau_1) (0, \kappa) = (\tau_2^{d_1} \tau_1) \dots (\tau_2^{d_\ell-1} \tau_1) (0) ~\mbox{.}
\end{equation}
Since $d_1 , \dots d_\ell -1 \neq \kappa$, all subsequent actions of $\tau_1$ amount to right shifts, so that in analogy to case (i) above, we obtain
\begin{align*}
\tau_x(0, b_1) & = (\tau_2^{d_1} \tau_1) \dots (\tau_2^{d_\ell-1} \tau_1) (0) \nonumber \\
 & = (0, d_1 , (d_1 + d_2)_{\rm{mod}(\kappa)}, \dots , (\sum_{j=1}^{\ell - 1} d_j)_{\rm{mod}(\kappa)}) = (0, a_1, \dots , a_{\ell - 1}) ~\mbox{,}
\end{align*}
where the last equality again relies on the the telescoping property in (\ref{eq_compositional-lemma-key}).

\item[(iii)] Suppose that $b_1 = \kappa$ and $z = (0, b_1, \dots , b_m)$ with $m \geq 2$. Then, by the definition in (\ref{eq_transl1}) and as highlighted in Remark \ref{remark_tau1}, $\tau_1$ acts as a left shift and by adding $1$ in the new first entry (after 0), thus 
\begin{align}
\tau_x(z) & = (\tau_2^{d_1} \tau_1) \dots (\tau_2^{d_\ell} \tau_1) (0, \kappa, b_2, \dots , b_m) \nonumber \\
 & = (\tau_2^{d_1} \tau_1) \dots (\tau_2^{d_{\ell-1}} \tau_1) (0, (b_2 +1)_{\rm{mod}(\kappa+1)}, b_3, \dots b_m) ~\mbox{.} \label{eq_comp-lemma-caseiii-1}
\end{align}
In order to iteratively apply the products $(\tau_2^{d_j} \tau_1)$ in (\ref{eq_comp-lemma-caseiii-1}), $1 \leq j \leq \ell-1$, we suppose that $n$ is the {\em{smallest}} index with $1 \leq n \leq \min\{m , \ell\} - 1$ which satisfies the condition that
\begin{align}
c_{n+1} := \left[ \left(b_{n+1} + \sum_{j=\ell - (n-2)}^\ell d_j \right)_{\rm{mod}(\kappa)} + 1 + d_{\ell - (n-1)} \right]_{\rm{mod} (\kappa + 1)} \neq \kappa ~\mbox{;}
\end{align}
here, we use the convention that sums of the form $\sum_{j = \alpha}^{\beta} x_j$ with $\alpha > \beta$ evaluate to zero. Then, when iteratively applying the first $n$ products in (\ref{eq_comp-lemma-caseiii-1}) (reading from right to left), $\tau_1$ acts as a left shift according to the third equality in (\ref{eq_transl1}), which results in
\begin{align}
\tau_x(z) & = (\tau_2^{d_1} \tau_1) \dots (\tau_2^{d_{\ell-1}} \tau_1)(z) \nonumber \\ 
&= (\tau_2^{d_1} \tau_1) \dots (\tau_2^{d_{\ell - (n+2)}} \tau_1) (0, c_{n+1} , (b_{n+2} + \sum_{j= \ell - (n - 1)}^\ell d_j)_{\rm{mod}(\kappa)}, \dots , (b_m + \sum_{j= \ell - (n - 1)}^\ell d_j)_{\rm{mod}(\kappa)} ) \label{eq_eq_comp-lemma-caseiii-2}
\end{align}

Now, since by definition $c_{n+1} \neq \kappa$, we can use similar considerations than in case (i) to compute (\ref{eq_eq_comp-lemma-caseiii-2}), which, appealing to the telescoping property in (\ref{eq_compositional-lemma-key}), yields 
\begin{equation}
\tau_x(z) = (0, a_1, \dots, a_{\ell - n} , (c_{n+1} + a_{\ell - n})_{\rm{mod}(\kappa)}, (b_{n+2} + a_{\ell})_{\rm{mod}(\kappa)}, \dots, (b_{m} + a_{\ell})_{\rm{mod}(\kappa)} ~\mbox{.}
\end{equation}

\item[(iv)] For $m \geq 2$, suppose that $b_1 = \kappa$ and for all indices $n$ with $1 \leq n \leq \min\{m , \ell\}$, one has
\begin{align}
\left[ \left(b_{n+1} + \sum_{j=\ell - (n-2)}^\ell d_j \right)_{\rm{mod}(\kappa)} + 1 + d_{\ell - (n-1)} \right]_{\rm{mod} (\kappa + 1)} = \kappa ~\mbox{.}
\end{align}
Then, if $m \leq \ell$, splitting off the first $m$ products (reading from left to right) of $\tau_x$ 
\begin{align} \label{eq_comp-lemma-caseiv-1}
\tau_x(z) & = (\tau_2^{d_1} \tau_1) \dots (\tau_2^{d_{\ell-m}} \tau_1) \left\{ (\tau_2^{d_{\ell-m+1}} \tau_1) \dots (\tau_2^{d_\ell} \tau_1)  (0, b_1, \dots b_m) \right\} ~\mbox{,}
\end{align}
the map $\tau_1$ acts as a left shift according to the third equality in (\ref{eq_transl1}) in these first $m$ products; thus, (\ref{eq_comp-lemma-caseiv-1}) implies
\begin{align} 
\tau_x(z) & = (\tau_2^{d_1} \tau_1) \dots (\tau_2^{d_{\ell-m}} \tau_1) (0) = (0, a_1, \dots , a_{\ell-m}) ~\mbox{.}
\end{align}
Similarly, if $m > \ell$, $\tau_1$ in all factors of $\tau_x$ acts as a left shift, so that 
\begin{align*}
\tau_x(z) & = (\tau_2^{d_1} \tau_1) \dots (\tau_2^{d_\ell} \tau_1) (0, b_1, \dots , b_\ell, b_{\ell + 1} , \dots b_m) \nonumber \\
   & = (0, \left[ (b_{\ell + 1} + \sum_{j=2}^{\ell} d_j)_{\rm{mod}(\kappa)} + 1 + d_1 \right]_{\rm{mod}(\kappa+1)}, (b_{\ell+1} + \sum_{j=1}^{\ell} d_j)_{\rm{mod} (\kappa)} , \dots , (b_{m} + \sum_{j=1}^{\ell} d_j)_{\rm{mod} )(\kappa)} \nonumber \\
   & = (0, \left[(b_{\ell+1} +a_\ell - a_1)_{\rm{mod} (\kappa)} + 1 + a_1\right]_{\rm{mod}(\kappa+1)}, (b_{\ell+1} + a_\ell)_{\rm{mod} (\kappa)}, \dots, (b_{m} + a_\ell)_{\rm{mod} (\kappa)}) ~\mbox{.}
\end{align*}

\end{itemize}

\end{appendices}


\end{document}